\newcommand{\fullversion}{Print full version.}
\begin{document}
\maketitle
\begin{abstract}
    Ben Reichardt showed in a series of results that the general adversary bound of a function characterizes its quantum query complexity. This survey seeks to aggregate the background and definitions necessary to understand the proof. Notable among these are the lower bound proof, span programs, witness size, and semi-definite programs. These definitions, in addition to examples and detailed expositions, serve to give the reader a better intuition of the graph-theoretic nature of the upper bound. We also include an applications of this result to lower bounds on DeMorgan formula size. 
\end{abstract}

\renewcommand{\baselinestretch}{0.75}\normalsize
\tableofcontents
\renewcommand{\baselinestretch}{1.0}\normalsize

\section{Introduction}
Given a function $f$, the quantum query complexity of $f$, denoted $Q_2(f)$, is the number of quantum oracle queries necessary to evaluate $f$. It is typically used as a lower bound on the complexity of a quantum algorithm: the amount of computation allowed between queries is unbounded, so the analysis can be much simpler. The \emph{polynomial method} ~\cite{DBLP:journals/jacm/BealsBCMW01} and the \emph{adversary bound method}~\cite{DBLP:journals/jcss/Ambainis02,DBLP:conf/coco/BarnumSS03} are common techniques used to show lower bounds on quantum query complexity. However, these techniques are currently incomparable; on the $n$-input collision problem, the adversary method only achieves an $O(1)$ lower bound while the polynomial method achieves the optimal $\Theta(n^{1/3})$ bound while on Ambainis' total function $f^{k}$ on $4^{k}$ bits, the polynomial method achieves at most a $2^k$ lower bound which is strictly weaker than the adversarial bound of $2.5^{k}$ \cite{DBLP:journals/jcss/Ambainis06}. 

The adversary bound was originally proposed by Ambainis~\cite{DBLP:journals/jcss/Ambainis02}. Given a boolean function $f$, the adversary bound of $f$, denoted $\advnonnegative(f)$, captures the intuition that, in order to compute $f$, one must be able to distinguish between any two inputs $w$ and $x$ where $f(w) \not= f(x)$. Specifically, if $\ket{\phi_w}$ and $\ket{\phi_x}$ are the final state of a quantum query algorithm after running with inputs $w$ and $x$ respectively, then $\ket{\phi_w}$ and $\ket{\phi_x}$ must be far apart in our measurement basis if $f(w) \not= f(x)$.\footnote{We assume familiarity with bra-ket notation, which is used here and elsewhere in the survey.} There are several equivalent formulations of the bound. This survey uses the spectral norm formulation of Barnum, Saks, and Szegedy~\cite{DBLP:conf/coco/BarnumSS03}. 

\begin{definition}
	An \emph{adversary matrix} for $f: \{0, 1\}^n \rightarrow \{0, 1\}$ is a $2^n$-by-$2^n$ Hermitian matrix $\Gamma$ where $\bra{x}\Gamma\ket{y} = 0$ whenever $f(x) = f(y)$.
\end{definition}

\begin{definition}
    The matrix  $D_i$ is the $2^n$-by-$2^n$ matrix where $\bra{x}D_i\ket{y} = 0$ if $x_i = y_i$ and $\bra{x}D_i\ket{y} = 1$ if $x_i \not= y_i$.
\end{definition}

\begin{definition}
	The \emph{adversary bound} on a function $f: \{0, 1\}^n \rightarrow \{0, 1\}$ is
	$$\advnonnegative(f) = \max_{\substack{\Gamma \geq 0\\\Gamma \neq 0}} \frac{\left\lVert \Gamma \right\rVert}{\max_i \left\lVert \Gamma \circ D_i \right\rVert}$$
	where $\Gamma$ is an adversary matrix for $f$.
\end{definition}

In the above definition, $\Gamma \geq 0$ indicates that all entries of $\Gamma$ are non-negative, and the operator $\circ$ denotes entry-wise product. As mentioned above, the adversary bound lower-bounds quantum query complexity.

\begin{theorem}[\cite{DBLP:conf/coco/BarnumSS03}]
	$\advnonnegative(f) = \Omega(Q_2(f))$.
\end{theorem}

Furthermore, Laplante, Lee, and Szegedy show that the adversary bound of a function is a lower bound on the square root of the function's De Morgan formula size.

\begin{theorem}[\cite{DBLP:journals/cc/LaplanteLS06}]
	$\advnonnegative(f) \leq \sqrt{\mathcal{L}_f}$
\end{theorem}

H{\o}yer, Lee, and \v{S}palek~\cite{DBLP:conf/stoc/HoyerLS07} removed the non-negativity requirement from the adversary bound and showed that it remained a lower bound on quantum query complexity and formula size. In fact this generalization only strengthened the lower bound -- it is a tight bound on quantum query complexity, although this was not shown until later by Reichardt~\cite{DBLP:conf/soda/Reichardt11a}.

\begin{definition}
	The \emph{general adversary bound} on a function $f: \{0, 1\}^n \rightarrow \{0, 1\}$ is
	$$\adv(f) = \max_{\Gamma \neq 0} \frac{\left\lVert \Gamma \right\rVert}{\max_i \left\lVert \Gamma \circ D_i \right\rVert}$$
	where $\Gamma$ is an adversary matrix for $f$.
\end{definition}

Unlike the matching lower bound, which has a relatively simple proof, the upper bound is proved using deceptively simple algorithms with complicated analyses. Key to this analysis is the span program model of computation \cite{DBLP:conf/coco/KarchmerW93}. Reichardt's main result is the following:
\begin{theorem}{\textup{(General Adversary Bound Characterize Quantum Query Complexity.)}}
    \label{thm:adv-characterize-q}
    For any $n$-ary boolean function $f$,
    \[Q_2(f) = \Theta(\adv(f)).\]
\end{theorem}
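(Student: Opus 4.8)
The plan is to prove the two bounds $Q_2(f) = \Omega(\adv(f))$ and $Q_2(f) = O(\adv(f))$ separately; the first is the H{\o}yer--Lee--\v{S}palek generalization of the adversary lower bound and is comparatively routine, while the second is the substantial part and runs through span programs. \textbf{Lower bound.} Let $\mathcal{A}$ be a $T$-query algorithm computing $f$ with error at most $\epsilon$, and let $\Gamma$ be any adversary matrix, normalized (replacing $\Gamma$ by $-\Gamma$ if necessary) so that $\|\Gamma\| = \lambda_{\max}(\Gamma)$ with top eigenvector $\alpha$. Run $\mathcal{A}$ coherently on $\sum_x \alpha_x \ket{x}\ket{\psi_0}$, write $\ket{\Psi_t} = \sum_x \alpha_x \ket{x}\ket{\psi_t^x}$ for the global state after $t$ queries, and track the progress measure $W_t = \bra{\Psi_t}(\Gamma\otimes I)\ket{\Psi_t}$. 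Initially $W_0 = \bra{\alpha}\Gamma\ket{\alpha} = \|\Gamma\|$. At the end $\ket{\Psi_T}$ is $O(\sqrt{\epsilon})$-close to a state that outputs $f$ exactly, and since $\Gamma$ vanishes between inputs of equal $f$-value that ideal state has progress $0$, so $|W_T| \le c\,\|\Gamma\|$ for a constant $c < 1$. The input-independent unitaries between queries commute with $\Gamma\otimes I$ and leave $W_t$ unchanged, while a single query changes $W_t$ by at most $2\max_i\|\Gamma\circ D_i\|$ --- this is exactly where the matrices $\Gamma\circ D_i$ enter, since $O_x^\dagger(\Gamma\otimes I)O_x - \Gamma\otimes I$ decomposes into blocks indexed by $i$ supported on pairs $x,y$ with $x_i\neq y_i$. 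Telescoping gives $T \ge \frac{(1-c)\|\Gamma\|}{2\max_i\|\Gamma\circ D_i\|}$, and maximizing over $\Gamma$ yields $Q_2(f) = \Omega(\adv(f))$.

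\textbf{Upper bound, part 1: adversary SDP to span programs.} The quantity $\adv(f)$ is a semidefinite program --- fix $\max_i\|\Gamma\circ D_i\| \le 1$ and maximize $\|\Gamma\|$, observing that each operator-norm constraint and the linear constraint $\bra{x}\Gamma\ket{y}=0$ whenever $f(x)=f(y)$ are SDP-representable. Forming the SDP dual and simplifying, the dual optimum turns out to equal the least \emph{witness size} among span programs computing $f$: recall that a span program presents a target vector $\ket{\tau}$ together with families of ``input vectors'' tagged by literals $(i,b)$, accepts $x$ precisely when $\ket{\tau}$ lies in the span of the vectors available at $x$, and has witness size measuring (a geometric mean over the $1$-inputs and the $0$-inputs of) the squared length of the smallest positive, respectively negative, witness. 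Strong duality then gives the clean identity $\adv(f) = \min\{\,\mathrm{wsize}(P) : P \text{ computes } f\,\}$, so it suffices to convert a span program $P$ of witness size $W$ into a quantum algorithm using $O(W)$ queries.

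\textbf{Upper bound, part 2: span programs to algorithms.} Attach to $P$ the weighted bipartite ``span-program graph'' whose incidence operator encodes the span-program coefficients. On input $x$ one can implement, using $O(1)$ oracle calls each, two reflections: $R_A$, the reflection about the kernel of the incidence operator restricted to the vectors available at $x$, and $R_x$, the reflection fixing the coordinates ``live'' at $x$. Run phase estimation on $U_x = R_x R_A$ about the starting vector $\ket{\tau}$ with precision $\Theta(1/W)$ and a constant number of repetitions; this decides $f(x)$, because the \emph{effective spectral gap lemma} shows that when $P(x)=1$ a short positive witness forces a large overlap of $\ket{\tau}$ with the eigenvectors of $U_x$ of phase $O(1/W)$, whereas when $P(x)=0$ a short negative witness certifies that $\ket{\tau}$ is nearly orthogonal to that low-phase subspace. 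Measuring the phase register thus separates the two cases with constant bias using $O(W) = O(\adv(f))$ queries; combined with part 1 and the lower bound this gives $Q_2(f) = \Theta(\adv(f))$.

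\textbf{Main obstacle.} The lower bound and the SDP-duality bookkeeping are standard; the genuine difficulty is the algorithmic half --- engineering the reflections $R_A, R_x$ to cost only $O(1)$ queries, and above all the spectral analysis of $U_x = R_x R_A$. A naive phase-estimation error estimate, compounded with the fact that the reflections are only realizable approximately, costs a $\log W$ factor; obtaining the \emph{exact} $\Theta$ requires the sharp form of the effective spectral gap lemma together with a careful balancing of the phase-estimation precision against the approximation errors, so that bounded error is achieved with no logarithmic overhead. That is where essentially all the work lies.
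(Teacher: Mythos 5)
Your proposal is correct and follows essentially the same route as the paper: the H{\o}yer--Lee--\v{S}palek progress-measure argument (initial progress $\norm{\Gamma}$, final progress a constant fraction smaller, per-query change at most $2\max_i\norm{\Gamma\circ D_i}$) for the lower bound, then SDP duality identifying $\adv(f)$ with minimal span-program witness size, and phase estimation with precision $\Theta(1/W)$ on the two-reflection unitary built from the span-program graph, analyzed via the effective spectral gap lemma (Jordan/Szegedy decomposition). The only cosmetic deviation is describing witness size as a geometric mean of positive and negative witness sizes rather than the paper's maximum over inputs; these agree after rescaling the span program, so nothing substantive changes.
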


Perhaps surprisingly, this result about quantum computing has been used recently in entirely classical settings. From the work of Ambainis et al.\ \cite{DBLP:journals/siamcomp/AmbainisCRSZ10}, any function with de Morgan formula size $\ell$ has a quantum query algorithm which makes at most $O(\sqrt{\ell})$ queries. In combination with the polynomial method, this implies the existence of a polynomial $p$ with degree $O(\sqrt{\ell})$ such that $p(x)$ approximates $f$ up to a constant factor. More recently, Tal used the result to show a $\tilde{\Omega}(n^2)$ lower bound for the bipartite formula size of the Inner-Product function \cite{DBLP:conf/stoc/Tal17}.

\section{Preliminaries}
Let $f$ be an $n$-ary boolean function. Then $F_0 = f^{-1}(0)$ and $F_1 = f^{-1}(1)$ are sets of strings which evaluate to $0$ and $1$ on $f$ respectively.

We assume basic familiarity with quantum computation and bra-ket notation. Given a vector $\ket{v}$, let $\norm{\ket{v}}$ represent the $\ell_2$-norm of $\ket{v}$. Given a matrix $M$, let $\norm{M}$ represent the \emph{spectral norm} of the matrix, defined as $\max_{\ket{u}} \norm{M\ket{u}}$ where the maximum is over unit vectors $\ket{u}$. In this survey, we use the fact that $\norm{M}$ is the largest singular value of $M$. For two matrices $A$ and $B$, their entry-wise product is denoted $A\circ B$ and their entry-wise inner product is denoted $\langle A, B \rangle$. The \emph{trace} of a square matrix is the sum of its eigenvalues (including multiplicities). In this survey we make use of the fact that $|\langle M, B\rangle| = \Tr(M^*B)$ when $M$ and $B$ are square matrices with the same dimension. 

Let the \emph{trace norm} of a matrix $M$ be $\norm{M}_{\textup{Tr}} = \max_B |\langle M, B \rangle | / \norm{B}$ i.e.\ the maximization of the trace of $M$ over all complex matrices with the same dimensions as $M$. Another standard definition of the trace norm is $\norm{M}_{\textup{Tr}} = \Tr(\sqrt{M^* M})$. These definitions are equivalent, which can be proved by observing that the spectral norm is the Schatten $\infty$-norm and the trace norm is the Schatten 1-norm, and using the fact that, for $\frac{1}{p} + \frac{1}{q} = 1$, the Schatten $p$-norm of $M$ is $\max_B |\langle M, B \rangle |$ divided by the Schatten $q$-norm of $B$. The Frobenius norm of a matrix $M$, denoted $\norm{M}_F$, is $\sqrt{\sum_{x, y} \left(\bra{x}M\ket{y}^2\right)}$. We use the fact that the Frobenius norm is the Schatten 2-norm.

For a matrix $A$, we say that $A \in \ltrans(U, V)$ if $A$ is a linear transformation from vectors in $\CC^{U}$ to vectors in $\CC^{V}$. In this case $A$ has $|U|$ columns and $|V|$ rows. Let $\ltrans(U) = \ltrans(U, U)$. $\id_k$ is the $k \times k$ identity matrix. When the dimensions are clear from context, we omit the subscript. For $i \in U$ and $j \in V$, we will use $\ket{i}$ and $\ket{j}$ to denote the indicator vectors for the relevant column and row of $A$ respectively. In particular, $\bra{i}A\ket{j}$ is the entry of $A$ in row $i$ and column $j$.

Readers will also require some familiarity with positive semi-definite matrix (PSD) and semi-definite programs (SDP). If $X$ is PSD, we write $X \succeq 0$. When we write $X \succeq Y$, we mean $X-Y \succeq 0$. 

\section{The General Adversary Bound}

\ifdefined\original
The adversary bound was originally proposed by Ambainis~\cite{DBLP:journals/jcss/Ambainis02} and has many different equivalent formulations. Here we will use a formulation by Barnum, Saks, and Szegedy that uses the spectral norm~\cite{DBLP:conf/coco/BarnumSS03}. 

\begin{definition}
	An \emph{adversary matrix} for $f: \{0, 1\}^n \rightarrow \{0, 1\}$ is a $2^n$-by-$2^n$ Hermitian matrix $\Gamma$ where $\bra{x}\Gamma\ket{y} = 0$ whenever $f(x) = f(y)$.
\end{definition}

\begin{definition}
    The matrix  $D_i$ is the $2^n$-by-$2^n$ matrix where $\bra{x}D_i\ket{y} = 0$ if $x_i = y_i$ and $\bra{x}D_i\ket{y} = 1$ if $x_i \not= y_i$.
\end{definition}

\begin{definition}
	The \emph{adversary bound} on a function $f: \{0, 1\}^n \rightarrow \{0, 1\}$ is
	$$\advnonnegative(f) = \max_{\substack{\Gamma \geq 0\\\Gamma \neq 0}} \frac{\left\lVert \Gamma \right\rVert}{\max_i \left\lVert \Gamma \circ D_i \right\rVert}$$
	where $\Gamma$ is an adversary matrix for $f$.
\end{definition}

\begin{theorem}[\cite{DBLP:conf/coco/BarnumSS03}]
	$\advnonnegative(f) = \Omega(Q_2(f))$.
\end{theorem}

Furthermore, Laplante, Lee, and Szegedy show that the adversary bound of a function is a lower bound on the square root of the function's De Morgan formula size.

\begin{theorem}[\cite{DBLP:journals/cc/LaplanteLS06}]
	$\advnonnegative(f) \leq \sqrt{\mathcal{L}_f}$
\end{theorem}

H{\o}yer, Lee, and \v{S}palek~\cite{DBLP:conf/stoc/HoyerLS07} removed the non-negativity requirement from the adversary bound and showed that it remained a lower bound on quantum query complexity and formula size. In fact this generalization only strengthened the lower bound -- indeed, it is a tight bound on quantum query complexity, although this was not shown until later by Reichardt~\cite{DBLP:conf/soda/Reichardt11a}.

\begin{definition}
	The \emph{general adversary bound} on a function $f: \{0, 1\}^n \rightarrow \{0, 1\}$ is
	$$\adv(f) = \max_{\Gamma \neq 0} \frac{\left\lVert \Gamma \right\rVert}{\max_i \left\lVert \Gamma \circ D_i \right\rVert}$$
	where $\Gamma$ is an adversary matrix for $f$.
\end{definition}
\fi

In this section we show the following properties of the general adversary bound. 
\begin{theorem}[\cite{DBLP:conf/stoc/HoyerLS07}] \label{thm:adv-q2-lb}
	$\adv(f) = \Omega(Q_2(f))$.
\end{theorem}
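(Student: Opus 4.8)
The plan is to run the standard adversary ``progress measure'' (hybrid) argument, taking care that non-negativity of the adversary matrix is never used. Fix an adversary matrix $\Gamma$ for $f$; since $-\Gamma$ is also an adversary matrix with the same $\norm{\Gamma}$ and the same $\norm{\Gamma\circ D_i}$, we may assume $\norm{\Gamma}$ is the largest eigenvalue of the Hermitian matrix $\Gamma$, and we fix a unit eigenvector $\ket{\delta}$ for it. Let a quantum query algorithm compute $f$ with error at most a small constant $\epsilon$ using $T$ phase-oracle queries, and write $\ket{\phi^x_t}$ for its state on input $x$ just before the $(t+1)$-st query, so that $\ket{\phi^x_0}$ is independent of $x$ and $\ket{\phi^x_T}$ is, up to a final input-independent unitary, the output state. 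Define the progress measure
\[
  W_t \;=\; \sum_{x,y}\overline{\delta_x}\,\delta_y\,\bra{x}\Gamma\ket{y}\,\langle\phi^y_t|\phi^x_t\rangle ,
\]
which is real because $\Gamma$ is Hermitian. Since all $\ket{\phi^x_0}$ coincide, $W_0 = \bra{\delta}\Gamma\ket{\delta} = \norm{\Gamma}$.

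The terminal value $W_T$ is where the real work lies, and I expect it to be the main obstacle. Because $\bra{x}\Gamma\ket{y}=0$ unless $f(x)\ne f(y)$, only cross terms survive in $W_T$, and correctness forces the output states to be nearly distinguishable across $F_0$ and $F_1$: if $\Pi$ is the projector onto ``output~$1$'', then $\norm{\Pi\ket{\phi^x_T}}^2\le\epsilon$ for $x\in F_0$ and $\norm{(\id-\Pi)\ket{\phi^y_T}}^2\le\epsilon$ for $y\in F_1$, while all these quantities are at most $1$. Writing $W_T=\bra{\delta}(\Gamma\circ G)\ket{\delta}$ with $G_{xy}=\langle\phi^y_T|\phi^x_T\rangle$, and splitting $G = G_\Pi + G_{\id-\Pi}$ into the Gram matrices (both $\succeq 0$) of $\{\Pi\ket{\phi^x_T}\}_x$ and $\{(\id-\Pi)\ket{\phi^x_T}\}_x$, I would apply to each piece the Schur-type estimate $\norm{\Gamma\circ H}\le\sqrt{(\max_{x\in F_0}H_{xx})(\max_{y\in F_1}H_{yy})}\,\norm{\Gamma}$, valid for any $H\succeq 0$ and any adversary matrix $\Gamma$; this yields $|W_T|\le 2\sqrt{\epsilon}\,\norm{\Gamma}$, which is $\beta\norm{\Gamma}$ for a constant $\beta<1$ once $\epsilon$ is below a suitable constant (which may be assumed after $O(1)$ rounds of amplification). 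Proving that Schur-type inequality is the genuinely non-obvious step: for the non-negative bound $\advnonnegative$ one simply passes to entrywise absolute values, which there leaves the spectral norm unchanged, whereas in general it can increase it, so one must instead keep the positive-semidefinite structure --- diagonalize $H$, split each eigenvector across the $F_0$ and $F_1$ parts, and use that $\Gamma$ vanishes on the two diagonal blocks to cancel the terms that would otherwise cost $\max_x H_{xx}$.

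Next comes the per-query bound. Between two consecutive pre-query states only an input-independent unitary acts, so $\langle\phi^y_t|\phi^x_t\rangle$ can change only at the oracle: decomposing $\ket{\phi^x_t}=\sum_i\ket{\phi^x_{t,i}}$ by the query-index register, the oracle multiplies the $i$-th part by $(-1)^{x_i}$, so
\[
  \langle\phi^y_{t+1}|\phi^x_{t+1}\rangle-\langle\phi^y_t|\phi^x_t\rangle \;=\; -2\sum_i \bra{x}D_i\ket{y}\,\langle\phi^y_{t,i}|\phi^x_{t,i}\rangle .
\]
Substituting this into $W_{t+1}-W_t$ and, for each $i$, bounding the resulting bilinear form by $\norm{\Gamma\circ D_i}$ times the trace of the positive-semidefinite Gram matrix of the vectors $\overline{\delta_x}\ket{\phi^x_{t,i}}$ --- this is just H\"older's inequality between the spectral and trace norms, both of which the excerpt records --- gives
\[
  |W_{t+1}-W_t| \;\le\; 2\sum_i\norm{\Gamma\circ D_i}\sum_x|\delta_x|^2\norm{\ket{\phi^x_{t,i}}}^2 \;\le\; 2\max_i\norm{\Gamma\circ D_i},
\]
using $\sum_i\norm{\ket{\phi^x_{t,i}}}^2=1$ and $\sum_x|\delta_x|^2=1$.

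Telescoping over the $T$ queries, $(1-\beta)\norm{\Gamma} \le W_0-|W_T| \le |W_0 - W_T| \le \sum_{t=0}^{T-1}|W_{t+1}-W_t| \le 2T\max_i\norm{\Gamma\circ D_i}$, hence $T\ge\frac{1-\beta}{2}\cdot\frac{\norm{\Gamma}}{\max_i\norm{\Gamma\circ D_i}}$. As $\Gamma$ ranged over all adversary matrices for $f$, taking the supremum of the right-hand side gives $T=\Omega(\adv(f))$; that is, every bounded-error quantum query algorithm for $f$ makes $\Omega(\adv(f))$ queries, which is the content of Theorem~\ref{thm:adv-q2-lb}. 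Everything here is routine bookkeeping --- the endpoint $W_0=\norm{\Gamma}$, the one-query identity, and the telescoping --- except the terminal estimate on $W_T$, whose proof must respect positive-semidefiniteness rather than pass to entrywise absolute values.
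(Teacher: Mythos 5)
Your proposal is correct and follows the same skeleton as the paper's proof: the eigenvector-weighted progress measure, the value $\norm{\Gamma}$ at time $0$, a per-query change bounded by $2\max_i\norm{\Gamma\circ D_i}$ (your per-index H\"older-plus-sum argument is the same estimate the paper packages into the block matrices $\Gamma\otimes\id_n$ and $\sum_i D_i\otimes\ket{i}\bra{i}$), and telescoping. The one genuine divergence is the terminal bound. The paper writes $F\circ \W{T}=X_0X_1^*+X_1X_0^*$ with $X_0,X_1$ the correct/incorrect projected final states and bounds its trace norm by $2\norm{X_0}_F\norm{X_1}_F\le\frac{2}{3}\sqrt{2}$, which is already below $1$ at error $1/3$, so no amplification is needed. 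You instead split the final Gram matrix as $G_\Pi+G_{\id-\Pi}$ and invoke the Schur-multiplier bound $\norm{\Gamma\circ H}\le\sqrt{\bigl(\max_{x\in F_0}H_{xx}\bigr)\bigl(\max_{y\in F_1}H_{yy}\bigr)}\,\norm{\Gamma}$ for $H\succeq 0$; that lemma is genuine and your sketch is sound (factor $H_{xy}=\ip{m_x}{m_y}$, expand the bilinear form over the Gram coordinates, and apply Cauchy--Schwarz, using that $\Gamma$ vanishes on the diagonal blocks so only the $F_0\times F_1$ data enters), and it is in fact close to H{\o}yer--Lee--\v{S}palek's original output-condition lemma. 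The price is the constant: $2\sqrt{\varepsilon}$ exceeds $1$ at $\varepsilon=1/3$, so your route genuinely needs the constant-factor amplification you mention, which is harmless for the $\Omega(\adv(f))$ statement but makes the paper's version slightly sharper as stated.
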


\begin{theorem}[\cite{DBLP:conf/stoc/HoyerLS07}] \label{thm:adv-l-lb}
	$\adv(f) \leq \sqrt{\mathcal{L}_f}$
\end{theorem}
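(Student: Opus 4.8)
The plan is to prove the stronger statement $\adv(g) \le \sqrt{\mathcal{L}_g}$ for every Boolean function $g$ by induction on De Morgan formula size, using the \emph{dual} (minimax) formulation of the general adversary bound. First I would recall --- and include the short derivation of, via semidefinite duality applied to the definition of $\adv$, following \cite{DBLP:conf/stoc/HoyerLS07} --- the fact that $\adv(f)$ equals the minimum, over all families of vectors $\{\ket{u_{x,i}}\}_{x \in \{0,1\}^n,\ i \in \{1,\dots,n\}}$ satisfying $\sum_{i\,:\,x_i \ne y_i}\langle u_{x,i} \,|\, u_{y,i}\rangle = 1$ for every pair with $f(x)\ne f(y)$, of the objective $\max_x \sum_i \norm{\ket{u_{x,i}}}^2$ (equivalently $\sqrt{\bigl(\max_{x\in F_0}\sum_i\norm{\ket{u_{x,i}}}^2\bigr)\bigl(\max_{y\in F_1}\sum_i\norm{\ket{u_{y,i}}}^2\bigr)}$, the two being related by rescaling the $F_0$-part and $F_1$-part against each other). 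I would also record two symmetries visible from the definition: $\adv$ is unchanged under negating the output of $f$ (the requirement ``$\bra{x}\Gamma\ket{y}=0$ whenever $f(x)=f(y)$'' is insensitive to this) and under negating or permuting input bits (conjugate $\Gamma$ and each $D_i$ by the corresponding permutation of $\{0,1\}^n$, which preserves all the relevant norms).

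For the induction, I would use that a De Morgan formula can be put in normal form --- fan-in-two $\wedge/\vee$ gates, negations only at the leaves --- without changing the number of leaves, so $\mathcal{L}_f$ splits additively at the top gate. \emph{Base case:} $f$ a literal $x_i$ or $\neg x_i$; the assignment $\ket{u_{x,i'}} = 1$ for $i' = i$ and $\ket{u_{x,i'}} = 0$ otherwise is feasible with objective $1$, so $\adv(f) \le 1 = \sqrt{\mathcal{L}_f}$ (constant functions have $\adv = 0$, trivially fine). \emph{Inductive step:} take a smallest formula for $f$; if its top gate is $\vee$, rewrite $f = \neg(\neg g_1 \wedge \neg g_2)$ and use the output-negation symmetry to reduce to the $\wedge$ case, so assume $f = g_1 \wedge g_2$ where the two subtrees compute $g_1, g_2$ and have $s_1, s_2 \ge 1$ leaves with $s_1 + s_2 = \mathcal{L}_f$ --- note the subtrees are leaf-disjoint but $g_1, g_2$ may share input variables, which will not matter. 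The key lemma is the $\wedge$-composition bound
\[ \adv(g_1 \wedge g_2) \ \le\ \sqrt{\adv(g_1)^2 + \adv(g_2)^2}, \]
valid for arbitrary $g_1, g_2$ on the same variable set. Granting it, $\adv(g_j) \le \sqrt{\mathcal{L}_{g_j}} \le \sqrt{s_j}$ (each subtree is a strictly smaller formula computing $g_j$), so $\adv(f) \le \sqrt{s_1 + s_2} = \sqrt{\mathcal{L}_f}$.

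To prove the lemma I would take near-optimal dual solutions $\{u^{(1)}_{x,i}\}$, $\{u^{(2)}_{x,i}\}$ for $g_1, g_2$ with objectives $A_1 \to \adv(g_1)$, $A_2 \to \adv(g_2)$ (handling separately the easy case where some $g_j$ is constant), and build a dual solution $\{w_{x,i}\}$ for $h = g_1 \wedge g_2$ by direct sums. For $y$ with $h(y) = 1$ (so $g_1(y) = g_2(y) = 1$) set $\ket{w_{y,i}} = \sqrt{c_1}\,\ket{u^{(1)}_{y,i}} \oplus \sqrt{c_2}\,\ket{u^{(2)}_{y,i}}$; for $x$ with $h(x) = 0$, pick $j = j(x) \in \{1,2\}$ with $g_j(x) = 0$ and set $\ket{w_{x,i}}$ equal to $\tfrac{1}{\sqrt{c_j}}\,\ket{u^{(j)}_{x,i}}$ in the $j$-th summand and $0$ in the other. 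For a pair $x \in h^{-1}(0)$, $y \in h^{-1}(1)$, the cross terms and the irrelevant summand vanish and the scalars cancel, so the constraint reduces to $\sum_{i\,:\,x_i \ne y_i}\langle u^{(j)}_{x,i} \,|\, u^{(j)}_{y,i}\rangle$, which is $1$ since $x \in g_j^{-1}(0)$ and $y \in g_j^{-1}(1)$. The objective is at most $\max\{\,c_1 A_1 + c_2 A_2,\ A_1/c_1,\ A_2/c_2\,\}$; choosing $c_1 = A_1/\sqrt{A_1^2 + A_2^2}$ and $c_2 = A_2/\sqrt{A_1^2 + A_2^2}$ makes all three quantities equal to $\sqrt{A_1^2 + A_2^2}$, and letting $A_j \to \adv(g_j)$ gives the lemma.

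I expect the main obstacle to be expository rather than mathematical: pinning down the dual formulation of $\adv$ with the correct normalization of the constraint and the equivalence of its two objective forms, since once that is in place the literal base case and the $\wedge$-composition step are each a few lines of calculation. A secondary point worth stating carefully is the normal-form reduction for De Morgan formulas, so that $\mathcal{L}$ genuinely splits additively at the top gate and the induction is well-founded.
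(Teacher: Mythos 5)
Your proposal is correct, but it proves the theorem by a genuinely different route than the paper. The paper argues on the primal side: it passes from formulas to the Karchmer--Wigderson game, uses $\mathcal{L}(f) = C^P$ of that game so that a formula yields a partition of $f^{-1}(0) \times f^{-1}(1)$ into monochromatic rectangles, and then applies the subadditivity of the \emph{squared} spectral norm over a rectangle partition (Lemma~\ref{lem:spectral-norm-squared-subadditive}) together with monotonicity under taking submatrices to bound $\norm{A}^2 \leq C^D \cdot \max_i \norm{A_i}^2$ for an arbitrary adversary matrix; this in fact bounds $\adv(f)^2$ by the rectangle-partition number, which can be smaller than $\mathcal{L}(f)$, and it never needs SDP duality. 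You instead work entirely on the dual side: you invoke the minimax (witness-size-style) characterization of $\adv$ -- which is exactly the object the paper derives in its span-program section, so that dependency is available -- and run an induction over the formula using the composition bound $\adv(g_1 \wedge g_2) \leq \sqrt{\adv(g_1)^2 + \adv(g_2)^2}$, proved by a direct-sum-and-rescale construction of a feasible dual family; your feasibility and objective calculations are right, the construction indeed does not require $g_1, g_2$ to have disjoint variables, and output-negation symmetry handles $\vee$. Two small points to pin down in a write-up: only weak duality is needed if you run the induction on the dual value itself (strong duality is what lets you speak of near-optimal dual solutions for $\adv(g_j)$), and the induction should formally be on the number of leaves of a fixed formula computing $g$, with $\mathcal{L}(g_j) \leq s_j$ for the subformulas, as you note. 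Your route buys an explicit, independently interesting composition property of $\adv$ that mirrors span-program composition and avoids communication-complexity machinery; the paper's route is more elementary (no duality) and yields the slightly stronger rectangle-partition bound.
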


\ifdefined\fullversion
The corresponding upper bound, $\adv(f) = O(Q_2(f))$, will be left to a later section.

\subsection{\texorpdfstring{$\adv$}{Adv+-} is a Lower Bound for Quantum Query Complexity}
\else %
We will present the corresponding quantum query complexity upper bound ($\adv(f) = O(Q_2(f))$) in a later section, and defer the proof of Theorem~\ref{thm:adv-l-lb} to Appendix~\ref{sec:adv-l-lb}. 

\fi %

\newcommand{\ind}{\mathsf{IND}}
\newcommand{\W}[1]{\mathsf{W}^{(#1)}}
\newcommand{\M}[1]{\mathsf{M}^{(#1)}}
\newcommand{\Wstar}[1]{\mathsf{W}_\star^{(#1)}}

Consider a quantum query algorithm that computes $f$ in $T$ steps with error at most $1/3$. Without loss of generality, the quantum query algorithm is of the form $U_T V_{\ind} U_{T-1} V_{\ind} \cdots U_1 V_{\ind} U_0$ where each $U_t$ is a unitary that does not depend on the input $x$ and $V_{\ind}$ is the standard phase oracle unitary on the \emph{index function} $\ind$:
$$V_{\ind}\ket{i} = (-1)^{x_i}\ket{i}$$
It will be helpful to divide the state of this quantum query algorithm into three sets of qubits: (1) the \emph{input set} $I$ holds the input $x$ and remains unchanged throughout the execution of the algorithm, (2) the \emph{query set} $Q$ that is used by each $V_{\ind}$ to specify a coordinate of $x$, and (3) a \emph{workspace set} $W$ that can be acted upon arbitrarily. The qubits in $Q$ and $W$ are measured at the end of the algorithm to obtain an output in $\{0, 1\}$. These measurements can be viewed as orthogonal projectors $\Pi_0, \Pi_1$. Let the combined state of $Q$ and $W$ on input $x$ at step $t$ be $\ket{\psi^t_x}$. Further define the matrix $\Psi^{t}$ to be the $2^{n}\times (|Q|+|W|)$ matrix with $\ket{\psi^{t}_x}$ as rows. Then the probability that we will measure outcome $b$ on input $x$ is $\norm{\Pi_b \ket{\psi^T_x}}^2$. Note that for all $x$ we have $\norm{\Pi_{f(x)} \ket{\psi^T_x}}^2 \geq 2/3$. Three other important properties of the projectors are that $\Pi_0 + \Pi_1 = \mathbb{I}$ (the projectors are complete), $\Pi_b^2 = \Pi_b$ (performing a projection twice has no more effect than applying it once), and $\Pi_0 \Pi_1 = \Pi_1 \Pi_0 = 0$ (the projections are orthogonal).

The main observation that H{\o}yer, Lee, and \v{S}palek~\cite{DBLP:conf/stoc/HoyerLS07} use in their proof of Theorem~\ref{thm:adv-q2-lb} is that the combined state of $Q$ and $W$ must be very different when the algorithm is run on $x$ compared to when it is run on $y$ if $f(x) \not= f(y)$: otherwise, any measurement would be unable to distinguish these states with high enough fidelity. We present their argument here.

\begin{proof}[Proof of Theorem~\ref{thm:adv-q2-lb}]
	Let $\Gamma$ be an adversary matrix. Note that $\norm{\Gamma}$ is the largest absolute value of any eigenvalue of $\Gamma$, as $\Gamma$ is Hermitian. Assume that $\norm{\Gamma} = \lambda_1$ where $\lambda_1$ is the largest eigenvalue of $\Gamma$: this can be done without loss of generality by replacing $\Gamma$ with $(-1)\Gamma$, which does not affect the value of $\norm{\Gamma}$. Let $\ket{\delta}$ be the unit eigenvector corresponding to $\lambda_1$.
	
	Consider running our quantum query algorithm for $f$ with an input in a superposition defined by $\ket{\delta}$: the state of the input qubits will be $\sum_{x \in \{0, 1\}^n} \ip{x}{\delta}\ket{x}$. Then the state of $Q$ and $W$ at step $t$ will be $\sum_{x \in \{0, 1\}^n} \ip{x}{\delta}\ket{\psi_x^t}$. Let $\W{t}$ be the $2^n$-by-$2^n$ density matrix defined by $\bra{x}\W{t}\ket{y} = \ip{x}{\delta}\ip{\delta}{y}\ip{\psi_y^t}{\psi_x^t}$. Equivalently, $\W{t} = \ket{\delta}\bra{\delta} \circ \Psi^{t}\left(\Psi^{t}\right)^{*}$.\footnote{To see that $\W{t}$ is indeed a density matrix, note that it is the Gram matrix of $\{ \ip{x}{\delta} \ket{\psi_x^t} \}$.} We measure the progress of the algorithm by comparing $\W{t}$ to $\Gamma$. Define the progress measure $\M{t} = \langle \Gamma, \W{t} \rangle$. To prove the lower bound, it suffices to show that this progress measure changes by an amount bounded above by $2 \max_i \norm{\Gamma \circ D_i}$ at each step of the algorithm, but must change by at least a constant multiple of $\norm{\Gamma}$ over the course of the entire algorithm. The following three claims show this.
	
	\begin{claim}
		$\M{0} = \norm{\Gamma}$
	\end{claim}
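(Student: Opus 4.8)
The plan is to exploit the fact that at step $0$ no query has yet been made, so the query and workspace registers carry no information about the input. Concretely, the only operation applied before step $0$ is the input-independent unitary $U_0$, acting on a fixed normalized initial state, so there is a single unit vector $\ket{\psi^0}$ with $\ket{\psi_x^0} = \ket{\psi^0}$ for every $x \in \{0,1\}^n$. In particular $\ip{\psi_y^0}{\psi_x^0} = \ip{\psi^0}{\psi^0} = 1$ for all $x, y$.

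Next I would substitute this into the definition of $\W{0}$. Since $\bra{x}\W{0}\ket{y} = \ip{x}{\delta}\ip{\delta}{y}\ip{\psi_y^0}{\psi_x^0} = \ip{x}{\delta}\ip{\delta}{y}$, and the right-hand side is exactly the $(x,y)$ entry of the rank-one matrix $\ket{\delta}\bra{\delta}$, we conclude $\W{0} = \ket{\delta}\bra{\delta}$. (Equivalently, one can read this off the alternative description $\W{0} = \ket{\delta}\bra{\delta} \circ \Psi^0(\Psi^0)^*$, noting that $\Psi^0(\Psi^0)^*$ is the all-ones matrix here.)

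Finally I would evaluate the progress measure directly. Using $\langle A, B \rangle = \Tr(A^* B)$ together with the Hermiticity of $\Gamma$,
\[
\M{0} = \langle \Gamma, \W{0} \rangle = \Tr\!\big(\Gamma^* \ket{\delta}\bra{\delta}\big) = \Tr\!\big(\Gamma \ket{\delta}\bra{\delta}\big) = \bra{\delta}\Gamma\ket{\delta} = \lambda_1,
\]
where the last step uses that $\ket{\delta}$ is a unit eigenvector of $\Gamma$ with eigenvalue $\lambda_1$. Since the proof already reduced to the case $\norm{\Gamma} = \lambda_1$, this yields $\M{0} = \norm{\Gamma}$. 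There is no serious obstacle; the only points requiring care are tracking the trace/inner-product conventions and confirming that $\ket{\psi^0}$ is genuinely a unit vector, so that the Gram-matrix diagonal entries $\ip{\psi_x^0}{\psi_x^0}$ equal $1$ and the claimed identity $\W{0} = \ket{\delta}\bra{\delta}$ holds exactly.
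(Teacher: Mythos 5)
Your proposal is correct and follows essentially the same route as the paper: since no oracle call has occurred, $\ket{\psi_x^0}$ is input-independent, so $\W{0} = \ket{\delta}\bra{\delta}$, and then $\M{0} = \Tr(\Gamma^*\ket{\delta}\bra{\delta}) = \bra{\delta}\Gamma\ket{\delta} = \lambda_1 = \norm{\Gamma}$ using Hermiticity and the eigenvector choice. The only difference is cosmetic (you evaluate $\bra{\delta}\Gamma\ket{\delta}$ where the paper substitutes $\lambda_1\ket{\delta}\bra{\delta}$ inside the trace).
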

	\begin{proof}
		Before any executions of the phase oracle, the state cannot depend on the input: for all $x$ and $y$, $\ket{\psi_x^0} = \ket{\psi_y^0}$, and so $\W{0} = \ket{\delta}\bra{\delta}$. Then $\M{0} = \langle \Gamma, \ket{\delta}\bra{\delta} \rangle = \textup{Tr}(\Gamma^* \ket{\delta}\bra{\delta}) = \textup{Tr}(\lambda_1 \ket{\delta}\bra{\delta}) = \lambda_1 \cdot 1 = \norm{\Gamma}$.
	\end{proof}

\ifdefined\fullversion
	\begin{claim}
		$\M{T} \leq (\frac{2}{3} \sqrt{2}) \norm{\Gamma}$
	\end{claim}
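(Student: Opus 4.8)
The plan is to bound $\M{T}=\langle\Gamma,\W{T}\rangle$ by reducing it to the trace norm of a single off‑diagonal block of $\W{T}$, and then to control that trace norm using only the error guarantee $\norm{\Pi_{f(x)}\ket{\psi_x^T}}^2\geq 2/3$ together with completeness and orthogonality of $\Pi_0,\Pi_1$. First I would order the basis so that the strings of $F_0$ come before those of $F_1$. Since $\Gamma$ is an adversary matrix it is supported only on the $F_0\times F_1$ and $F_1\times F_0$ blocks, so, writing $\Gamma_{01}$ for its $F_0\times F_1$ submatrix, a short computation using that $\Gamma$ and $\W{T}$ are Hermitian gives $\M{T}=2\Re\langle\Gamma_{01},(\W{T})_{01}\rangle$. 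By the trace‑norm/spectral‑norm duality recalled in the preliminaries, and since a submatrix never increases the spectral norm ($\norm{\Gamma_{01}}\leq\norm{\Gamma}$), this yields $\M{T}\leq 2\norm{\Gamma}\cdot\norm{(\W{T})_{01}}_{\textup{Tr}}$. It therefore suffices to prove $\norm{(\W{T})_{01}}_{\textup{Tr}}\leq\frac{\sqrt2}{3}$.

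For this I would decompose $\W{T}$ according to the final measurement. Let $\Psi^{T}_b$ be the matrix whose $x$‑th row is $\bra{\psi_x^T}\Pi_b$; then completeness $\Pi_0+\Pi_1=\id$ and orthogonality $\Pi_0\Pi_1=0$ give $\Psi^{T}(\Psi^{T})^*=\Psi^{T}_0(\Psi^{T}_0)^*+\Psi^{T}_1(\Psi^{T}_1)^*$, hence $\W{T}=S_0+S_1$ where $S_b=\ket{\delta}\bra{\delta}\circ\Psi^{T}_b(\Psi^{T}_b)^*$ is positive semidefinite (a Gram matrix, equivalently a Hadamard product of PSD matrices). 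The elementary fact I would use next: any PSD matrix in $2\times 2$ block form with diagonal blocks $S_{00},S_{11}$ and off‑diagonal block $S_{01}$, factored as $S=MM^*$ with $M$ having row blocks $M_0,M_1$, satisfies $\norm{S_{01}}_{\textup{Tr}}=\norm{M_0M_1^*}_{\textup{Tr}}\leq\norm{M_0}_F\norm{M_1}_F=\sqrt{\Tr(S_{00})\Tr(S_{11})}$ by Cauchy--Schwarz for Schatten norms. Applying this to $S_0$ and to $S_1$ and using $(\W{T})_{01}=(S_0)_{01}+(S_1)_{01}$ with the triangle inequality gives $\norm{(\W{T})_{01}}_{\textup{Tr}}\leq\sqrt{\Tr((S_0)_{00})\Tr((S_0)_{11})}+\sqrt{\Tr((S_1)_{00})\Tr((S_1)_{11})}$.

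Then I would evaluate the four block traces. Put $p_b=\sum_{x\in F_b}|\ip{x}{\delta}|^2$, so $p_0+p_1=1$ since $\ket{\delta}$ is a unit vector, and let $a=\sum_{x\in F_0}|\ip{x}{\delta}|^2\norm{\Pi_1\ket{\psi_x^T}}^2$ and $c=\sum_{y\in F_1}|\ip{y}{\delta}|^2\norm{\Pi_0\ket{\psi_y^T}}^2$ be the total \emph{error mass} on $F_0$ and on $F_1$; the $1/3$ error bound forces $0\leq a\leq p_0/3$ and $0\leq c\leq p_1/3$. Reading off the diagonals of the blocks and using $\norm{\Pi_0\ket{\psi_x^T}}^2+\norm{\Pi_1\ket{\psi_x^T}}^2=1$ gives $\Tr((S_0)_{00})=p_0-a$, $\Tr((S_0)_{11})=c$, $\Tr((S_1)_{00})=a$, $\Tr((S_1)_{11})=p_1-c$, so $\norm{(\W{T})_{01}}_{\textup{Tr}}\leq\sqrt{(p_0-a)c}+\sqrt{a(p_1-c)}$. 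This is increasing in each of $a,c$ on the box $[0,p_0/3]\times[0,p_1/3]$ — the first‑order condition reduces to $a/p_0+c/p_1<1$, which holds because the left side is at most $2/3$ — so its maximum is at $a=p_0/3$, $c=p_1/3$, where it equals $\frac{2}{3}\sqrt{2p_0p_1}\leq\frac{2}{3}\sqrt{2\cdot\frac14}=\frac{\sqrt2}{3}$ by AM--GM. Combining with the first paragraph gives $\M{T}\leq 2\norm{\Gamma}\cdot\frac{\sqrt2}{3}=(\frac{2}{3}\sqrt2)\norm{\Gamma}$.

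I expect the last step to be the main obstacle: the naive bound $\Tr((S_0)_{00})\leq p_0$ (i.e.\ $\norm{\Pi_0\ket{\psi_x^T}}\leq 1$) only yields the constant $2/\sqrt3>1$, which is worthless here, so one genuinely needs the observation that the good mass and the error mass on each side $F_b$ sum to $p_b$, and then the two‑variable optimization. Everything else is bookkeeping with the block structure of $\Gamma$ and the standard norm inequalities recalled in the preliminaries.
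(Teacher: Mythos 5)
Your proof is correct, and it reaches the paper's constant by a somewhat different decomposition. The paper restricts to the $f(x)\neq f(y)$ entries by writing $\Gamma=\Gamma\circ F$ and bounding $\M{T}\leq\norm{\Gamma}\,\norm{F\circ \W{T}}_{\textup{Tr}}$, and then splits by \emph{correctness} of the final measurement: with $X_0$ (rows the correctly-projected states $\Pi_{f(x)}\ket{\psi_x^T}$ weighted by $\ip{x}{\delta}$) and $X_1$ (the incorrectly-projected parts) it verifies the identity $F\circ\W{T}=X_0X_1^*+X_1X_0^*$, applies the triangle inequality and H\"older to get $2\norm{X_0}_F\norm{X_1}_F$, and finishes from the one-dimensional constraint $\norm{X_0}_F^2+\norm{X_1}_F^2=1$, $\norm{X_0}_F^2\geq 2/3$. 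You instead use the $F_0\times F_1$ block structure of the adversary matrix to reduce to $\norm{(\W{T})_{01}}_{\textup{Tr}}$, split $\W{T}=S_0+S_1$ by measurement \emph{outcome} rather than by correctness, bound each off-diagonal block by $\sqrt{\Tr((S_b)_{00})\Tr((S_b)_{11})}$ (the same Schatten-norm Cauchy--Schwarz/H\"older step the paper uses), and then need a two-variable optimization over the error masses $a\leq p_0/3$, $c\leq p_1/3$ followed by AM--GM on $p_0p_1\leq 1/4$; your monotonicity check ($a/p_0+c/p_1<1$) is right, and the degenerate cases $p_0=0$ or $p_1=0$, which your derivative argument implicitly excludes, are trivial since the bound is then $0$. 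The comparison: both routes rest on trace-norm/spectral-norm duality plus H\"older with Frobenius norms, but the paper's correct/incorrect split makes the $2/3$ guarantee enter through a single scalar, so the extremal computation is immediate, whereas your outcome-based split requires the $(a,c)$ bookkeeping; in exchange, your version makes the bipartite $F_0/F_1$ structure explicit and dispenses with the auxiliary $0/1$ matrix $F$.
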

	\begin{proof}
		First note that $\Gamma = \Gamma \circ F$ where $F$ is the 0/1 adversary matrix:
\[\bra{x}F\ket{y} = \begin{cases}0 & f(x) = f(y)\\1 & f(x) \not= f(y)\end{cases}\]
Thus $\M{T} = \langle \Gamma \circ F, \W{T} \rangle = \langle \Gamma, F \circ \W{T} \rangle$. By the definition of the trace norm, this gives us $\M{T} \leq \norm{\Gamma} \norm{F \circ \W{T}}_{\textup{Tr}}$. To prove the claim we simply need to upper-bound $\norm{F \circ \W{T}}_{\textup{Tr}}$.

Let $X_0$ (respectively $X_1$) be the $2^n \times 2^n$ matrix where the $x$th row is (the conjugate of) $\Pi_{f(x)} \delta_x \ket{\psi_x^T}$ (respectively $\Pi_{1-f(x)} \delta_x \ket{\psi_x^T}$). Intuitively, $X_0$ is the projection onto the correct answers and $X_1$ is the projection onto the incorrect answers.

Observe that $F \circ \W{T} = X_0 X_1^* + X_1 X_0^*$:
$$
\bra{x}(X_0 X_1^* + X_1 X_0^*)\ket{y} = \ip{x}{\delta}\ip{\delta}{y}\left(\bra{\psi_y^T}\Pi_{1-f(y)}\Pi_{f(x)}\ket{\psi_x^T} + \bra{\psi_y^T}\Pi_{f(y)}\Pi_{1 - f(x)}\ket{\psi_x^T} \right).
$$
If $f(x) = f(y)$, then the expression on the right is 0, as $\Pi_0 \Pi_1 = 0$. Otherwise, $\Pi_b \Pi_b = \Pi_b$ and $\Pi_0 + \Pi_1 = \mathbb{I}$, so the expression on the right is $\ip{x}{\delta} \ip{\delta}{y} \ip{\psi_y^T}{\psi_x^T} = \bra{x}\W{T}\ket{y}$.

We now need to upper-bound $\norm{X_0 X_1^* + X_1 X_0^*}_{\textup{Tr}}$.
\begin{align*}
\norm{X_0 X_1^* + X_1 X_0^*}_{\textup{Tr}} & \leq \norm{X_0 X_1^*}_{\textup{Tr}} + \norm{X_1 X_0^*}_{\textup{Tr}} & \mbox{(by the triangle inequality)}\\
& \leq \norm{X_0}_F \norm{X_1^*}_F + \norm{X_1}_F \norm{X_0^*}_F  & \mbox{(by H{\"o}lder's Inequality)}\\
& = 2 \norm{X_0}_F \norm{X_1}_F
\end{align*}
H\"{o}lder's inequality applies here because the trace norm is the Schatten 1-norm and the Frobenius norm is the Schatten 2-norm. We upper-bound this final expression by noting the following two facts:
$$ \norm{X_0}_F^2 + \norm{X_1}_F^2 = \sum_{x \in \{0, 1\}^n} \left| \ip{x}{\delta} \right|^2 \left( \norm{\Pi_{f(x)} \ket{\psi_x^T}}^2 + \norm{\mathbb{I}-\Pi_{f(x)} \ket{\psi_x^T}}^2 \right) = \norm{\delta}^2 = 1 $$
$$ \norm{X_0}_F^2 = \sum_{x \in \{0, 1\}^n} \left| \ip{x}{\delta} \right|^2 \norm{\Pi_{f(x)} \ket{\psi_x^T}}^2 \geq \frac{2}{3} \norm{\delta}^2 = \frac{2}{3} $$
Therefore, $2 \norm{X_0}_F \norm{X_1}_F$ is maximized at $2 \sqrt{2/3} \sqrt{1/3} = \frac{2}{3} \sqrt{2}$.
	\end{proof}
	
	From the first two claims, we know that $\M{0} - \M{T} \geq (1 - \frac{2}{3} \sqrt{2}) \norm{\Gamma}$. The last step in the proof is to give an upper bound on $\M{t} - \M{t+1}$ for all $t$.
	
	\begin{claim}
		$\M{t} - \M{t+1} \leq 2 \max_{i} \norm{\Gamma \circ D_i}$
	\end{claim}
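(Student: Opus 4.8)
The plan is to compute how $\W{t}$ changes in a single step of the algorithm and then feed that change into the progress measure $\M{t}$. The step from $t$ to $t+1$ applies the phase oracle $V_{\ind}$ and then the input-independent unitary $U_{t+1}$; since $U_{t+1}$ preserves every inner product $\ip{\psi_y^t}{\psi_x^t}$, only the oracle contributes. I would decompose the state along the query register as $\ket{\psi_x^t} = \sum_i \ket{i}\otimes\ket{\psi_{x,i}^t}$; the oracle multiplies the $i$-th block by $(-1)^{x_i}$, so $\ip{\psi_y^{t+1}}{\psi_x^{t+1}} = \sum_i (-1)^{x_i\oplus y_i}\ip{\psi_{y,i}^t}{\psi_{x,i}^t}$. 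Subtracting this from $\ip{\psi_y^t}{\psi_x^t} = \sum_i \ip{\psi_{y,i}^t}{\psi_{x,i}^t}$ annihilates the coordinates with $x_i = y_i$ and exactly doubles the rest. Writing $\Psi_i$ for the matrix whose $x$-th row is $\ket{\psi_{x,i}^t}$ and $G_i = \ket{\delta}\bra{\delta}\circ\Psi_i\Psi_i^*$, multiplying by $\ip{x}{\delta}\ip{\delta}{y}$ and reading off entries gives the identity
\[ \W{t} - \W{t+1} = 2\sum_i D_i\circ G_i. \]

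Next I would push this through the progress measure. Because $D_i$ is real, $\langle\Gamma, D_i\circ G_i\rangle = \langle\Gamma\circ D_i, G_i\rangle$, so $\M{t}-\M{t+1} = \langle\Gamma,\W{t}-\W{t+1}\rangle = 2\sum_i\langle\Gamma\circ D_i, G_i\rangle$. Each $G_i$ is the Hadamard product of the PSD matrix $\ket{\delta}\bra{\delta}$ with the Gram matrix $\Psi_i\Psi_i^*$, hence PSD by the Schur product theorem, so $\norm{G_i}_{\textup{Tr}} = \Tr(G_i)$; the norm duality $|\langle M,B\rangle|\le\norm{M}\,\norm{B}_{\textup{Tr}}$ recorded in the Preliminaries then gives $\langle\Gamma\circ D_i, G_i\rangle\le\norm{\Gamma\circ D_i}\Tr(G_i)$. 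Factoring out $\max_i\norm{\Gamma\circ D_i}$, it remains only to observe that $\sum_i\Tr(G_i) = \sum_x|\ip{x}{\delta}|^2\sum_i\norm{\ket{\psi_{x,i}^t}}^2 = \sum_x|\ip{x}{\delta}|^2 = 1$, since each $\ket{\psi_x^t}$ and $\ket{\delta}$ is a unit vector. Combining yields $\M{t}-\M{t+1}\le 2\max_i\norm{\Gamma\circ D_i}$ (the inequality is vacuous when the left-hand side is negative, so no absolute value is needed).

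The step I expect to be the main obstacle is getting the factor-of-two identity for $\W{t}-\W{t+1}$ with the correct bookkeeping: one must check that it is precisely the coordinates $i$ with $x_i\neq y_i$ that survive the subtraction (this is where $D_i$ enters), and that the workspace pieces $\ket{\psi_{x,i}^t}$, summed over $i$, reassemble a single unit vector so that the traces of the $G_i$ sum to $1$ rather than to $|Q|$. Once that identity and the PSD-ness of the $G_i$ are established, the remainder is a routine invocation of the norm-duality and Schur-product facts from the Preliminaries. Together with the two preceding claims, this shows that a $T$-query algorithm can move the progress measure by at most $2T\max_i\norm{\Gamma\circ D_i}$ while it must move by a constant multiple of $\norm{\Gamma}$, forcing $T = \Omega\!\left(\norm{\Gamma}/\max_i\norm{\Gamma\circ D_i}\right)$ and hence $Q_2(f) = \Omega(\adv(f))$, i.e.\ Theorem~\ref{thm:adv-q2-lb}.
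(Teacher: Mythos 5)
Your proof is correct, but it takes a mechanically different route from the paper's. The paper lifts the progress matrix to a larger space: it defines $\Wstar{t}$, indexed by the input and query registers jointly, sets $G = \Gamma \otimes \id_n$ and $D = \sum_i D_i \otimes \ketbra{i}{i}$, writes $\M{t}-\M{t+1} = \langle G \circ D, \Wstar{t} - V_{\ind}\Wstar{t}V_{\ind}^*\rangle$, and then bounds this by $\norm{G\circ D}\,\norm{\Wstar{t} - V_{\ind}\Wstar{t}V_{\ind}^*}_{\textup{Tr}} \le 2\norm{G\circ D}$, using the triangle inequality, unitary invariance of the trace norm, the fact that $\Wstar{t}$ is a density matrix, and that the spectral norm of the block-diagonal $G\circ D$ is $\max_i\norm{\Gamma\circ D_i}$. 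You stay in the $2^n$-dimensional space, decompose each state over the query basis, and extract the exact identity $\W{t}-\W{t+1} = 2\sum_i D_i\circ G_i$ from the phase factor $1-(-1)^{x_i\oplus y_i}$; your $G_i$ are precisely the diagonal blocks of the paper's $\Wstar{t}$, which is why your normalization $\sum_i\Tr(G_i)=1$ plays the same role as the paper's $\norm{\Wstar{t}}_{\textup{Tr}}=1$. What your route buys is economy of machinery: the factor of $2$ comes from the phase identity rather than a triangle inequality, and you need only the spectral/trace-norm duality from the Preliminaries plus PSD-ness of each $G_i$ (which, as with the paper's footnote for $\W{t}$, can be seen directly by exhibiting $G_i$ as the Gram matrix of the vectors $\overline{\ip{x}{\delta}}\ket{\psi^t_{x,i}}$, if you prefer not to invoke the Schur product theorem). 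What the paper's route buys is the standard H{\o}yer--Lee--\v{S}palek packaging, which uses only that $\Wstar{t}$ is a density matrix conjugated by the oracle unitary and so transfers verbatim to more general oracles and weighted adversary matrices. Both arguments yield the same constant $2\max_i\norm{\Gamma\circ D_i}$, and your concluding assembly of the three claims into $T = \Omega(\adv(f))$ matches the paper.
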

	\begin{proof}
		To help us prove this claim, we will define a new density matrix $\Wstar{t}$ that is similar to $\W{t}$. Whereas $\W{t}$ is indexed by the basis states of the input qubits $I$ and has entries defined by the state of the query qubits $Q$ and the workspace qubits $W$, $\Wstar{t}$ will be indexed by $I$ and $Q$ and have entries defined by the state of $W$.
$$\bra{x, i} \Wstar{t} \ket{y, j} = \ip{x}{\delta}\ip{\delta}{y}\ip{\psi_y^t}{j}\ip{i}{\psi_x^t}$$
Here, $i$ and $j$ are basis states of $Q$. Note that $W^{(t)}_{*}$ is a density matrix since it has trace one, and is positive semi-definite since $W^{(t)}_*$ is also a Gram matrix.

Let $G$ and $D$ be the following block-diagonal $(n\cdot2^n)\times(n\cdot2^n)$ matrices:
$$G = \Gamma \otimes \mathbb{I}_n \qquad D = \sum_{i=1}^n D_i \otimes |i\rangle \langle i|$$
Note that $\M{t} = \langle \Gamma, \W{t} \rangle = \langle G, \Wstar{t} \rangle$. We would like to give $\M{t+1}$ in terms of $\Wstar{t}$ as well, and so we analyze the effect of a single step of the quantum query algorithm on this matrix. Since the unitary $U_{t+1}$ does not depend on the input qubits, we can ignore it for the purposes of our progress measure: $\bra{\psi_y^t}U_{t+1}^*U_{t+1}\ket{\psi_x^t} = \ip{\psi_y^t}{\psi_x^t}$, and so $\W{t}$ does not change after the application of the unitary. This means that $\M{t+1} = \langle G, \Wstar{t+1} \rangle = \langle G, V_{\ind} \Wstar{t} V_{\ind}^* \rangle$.
$$	\M{t} - \M{t+1} = \langle G, \Wstar{t} \rangle - \langle G, V_{\ind} \Wstar{t} V_{\ind}^* \rangle
= \langle G, \Wstar{t} - V_{\ind} \Wstar{t} V_{\ind}^* \rangle
= \langle G, (\Wstar{t} - V_{\ind} \Wstar{t} V_{\ind}^*) \circ D \rangle $$
This last equality is true because $\bra{x, i} (\Wstar{t} - V_{\ind} \Wstar{t} V_{\ind}^*) \ket{y, i} = (1 - (-1)^{x_i + y_i}) \bra{x, i}\Wstar{t}\ket{y, i}$ (which is 0 when $x_i = y_i$) and $G$ is block-diagonal ($\bra{x, i}G\ket{y, j} = 0$ if $i \not= j$).
\begin{align*}
	\M{t} - \M{t+1} &= \langle G, (\Wstar{t} - V_{\ind} \Wstar{t} V_{\ind}^*) \circ D \rangle &\\
	&= \langle G \circ D, (\Wstar{t} - V_{\ind} \Wstar{t} V_{\ind}^*) \rangle&\\
	&\leq \norm{G \circ D} \cdot \norm{\Wstar{t} - V_{\ind} \Wstar{t} V_{\ind}^*}_{\textup{Tr}}&\mbox{(by the definition of the trace norm)}\\
	&\leq \norm{G \circ D} \cdot \left( \norm{\Wstar{t}}_{\textup{Tr}} + \norm{V_{\ind} \Wstar{t} V_{\ind}^*}_{\textup{Tr}} \right) &\mbox{(by the triangle inequality)}\\
	&=2 \norm{G \circ D} \norm{\Wstar{t}} &\mbox{($V_{\ind}$ is unitary)}\\
	&=2 \norm{G \circ D} &\mbox{($\Wstar{t}$ is a density matrix)}\\
	&=2 \max_i \norm{\Gamma \circ D_i} &
\end{align*}
In the above we used the following facts: the trace norm is invariant under conjugation with a unitary, the trace norm of a density matrix is one, and the spectral norm of a block-diagonal matrix is the maximum of the spectral norms of the blocks.
	\end{proof}
\else %
	\begin{claim}
	    \label{clm:clm2}
		$\M{T} \leq (\frac{2}{3} \sqrt{2}) \norm{\Gamma}$
	\end{claim}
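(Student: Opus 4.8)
The plan is to exploit the fact that a bounded-error algorithm must, by step $T$, have pushed almost all of each final state's weight onto the measurement projector for the correct answer. First I would use that $\Gamma$ vanishes on pairs $(x,y)$ with $f(x)=f(y)$ to write $\Gamma = \Gamma \circ F$, where $F$ is the $0/1$ matrix with $\bra{x}F\ket{y}=1$ exactly when $f(x)\neq f(y)$. Sliding the Hadamard product to the other factor of the inner product gives $\M{T} = \langle \Gamma, \W{T}\rangle = \langle \Gamma \circ F, \W{T}\rangle = \langle \Gamma, F \circ \W{T}\rangle$, and the dual characterization of the trace norm then yields $\M{T} \le \norm{\Gamma}\cdot\norm{F \circ \W{T}}_{\textup{Tr}}$. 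So it suffices to prove $\norm{F \circ \W{T}}_{\textup{Tr}} \le \frac{2}{3}\sqrt{2}$.

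To bound this trace norm I would factor $F \circ \W{T}$ through the projectors. Let $X_0$ (resp.\ $X_1$) be the $2^n \times 2^n$ matrix whose $x$-th row is the conjugate of $\ip{x}{\delta}\,\Pi_{f(x)}\ket{\psi_x^T}$ (resp.\ of $\ip{x}{\delta}\,\Pi_{1-f(x)}\ket{\psi_x^T}$), i.e.\ the ``correct'' and ``incorrect'' parts of the final state weighted by the eigenvector amplitudes. Expanding $\bra{x}(X_0X_1^* + X_1X_0^*)\ket{y}$ entry by entry and splitting into the cases $f(x)=f(y)$ and $f(x)\neq f(y)$, using $\Pi_0\Pi_1 = \Pi_1\Pi_0 = 0$ in the first and $\Pi_b^2=\Pi_b$, $\Pi_0+\Pi_1=\mathbb{I}$ in the second, shows that this matrix vanishes on same-value pairs and equals $\bra{x}\W{T}\ket{y}$ otherwise; hence $F \circ \W{T} = X_0X_1^* + X_1X_0^*$.

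From there the estimate is standard norm juggling. The triangle inequality gives $\norm{F\circ\W{T}}_{\textup{Tr}} \le \norm{X_0X_1^*}_{\textup{Tr}} + \norm{X_1X_0^*}_{\textup{Tr}}$, and H\"older's inequality for Schatten norms (the trace norm is Schatten-$1$, the Frobenius norm is Schatten-$2$, with $1 = 1/2 + 1/2$) bounds each summand by $\norm{X_0}_F\norm{X_1}_F$, so $\norm{F\circ\W{T}}_{\textup{Tr}} \le 2\norm{X_0}_F\norm{X_1}_F$. Finally I would evaluate the two Frobenius norms: weighting $\norm{\Pi_{f(x)}\ket{\psi_x^T}}^2 + \norm{(\mathbb{I}-\Pi_{f(x)})\ket{\psi_x^T}}^2 = 1$ by $|\ip{x}{\delta}|^2$ and summing gives $\norm{X_0}_F^2 + \norm{X_1}_F^2 = \norm{\delta}^2 = 1$, while the success guarantee $\norm{\Pi_{f(x)}\ket{\psi_x^T}}^2 \ge 2/3$ gives $\norm{X_0}_F^2 \ge 2/3$. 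Maximizing $2ab$ subject to $a^2+b^2=1$ and $a^2\ge 2/3$ gives $2\sqrt{2/3}\sqrt{1/3} = \frac{2}{3}\sqrt{2}$, completing the claim.

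The one genuinely fiddly step is the identity $F\circ\W{T} = X_0X_1^* + X_1X_0^*$: it requires careful bookkeeping of which projector index ($f(x)$, $f(y)$, $1-f(x)$, or $1-f(y)$) appears in which factor, together with a correct case analysis invoking each of the three projector properties precisely where it is needed. Everything afterward --- trace-norm duality, the triangle inequality, H\"older, and the final two-variable optimization --- is routine once that factorization is established.
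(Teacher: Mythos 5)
Your proposal is correct and follows essentially the same route as the paper's own proof: the reduction $\M{T} \le \norm{\Gamma}\,\norm{F \circ \W{T}}_{\textup{Tr}}$ via $\Gamma = \Gamma \circ F$ and trace-norm duality, the factorization $F \circ \W{T} = X_0 X_1^* + X_1 X_0^*$ with the identical case analysis on the projectors, and the triangle-inequality/H\"older/Frobenius-norm optimization yielding $2\sqrt{2/3}\sqrt{1/3} = \frac{2}{3}\sqrt{2}$. No gaps; the only difference is notational (writing the row weights as $\ip{x}{\delta}$ rather than $\delta_x$).
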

	\begin{claim}
	    \label{clm:clm3}
		$\M{t} - \M{t+1} \leq 2 \max_{i} \norm{\Gamma \circ D_i}$
	\end{claim}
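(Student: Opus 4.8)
The plan is to reduce the bound to the trace-norm/spectral-norm duality already used for Claim~\ref{clm:clm2}, after first \emph{lifting} the progress measure so that it also records the state of the query register $Q$ (the register the oracle actually touches). Concretely, I would introduce an $(n\cdot 2^n)\times(n\cdot 2^n)$ matrix $\Wstar{t}$ indexed by pairs ``(input basis state, query basis state)'' with
$$\bra{x,i}\Wstar{t}\ket{y,j} = \ip{x}{\delta}\ip{\delta}{y}\ip{\psi_y^t}{j}\ip{i}{\psi_x^t},$$
whose entries now depend only on the workspace register $W$. This is again a Gram matrix, hence positive semi-definite, and a short computation shows $\Tr \Wstar{t} = 1$, so it is a density matrix; moreover tracing out $Q$ recovers $\W{t}$, so with $G = \Gamma\otimes\id_n$ we get $\M{t} = \langle \Gamma,\W{t}\rangle = \langle G,\Wstar{t}\rangle$.

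Second, I would argue that only the oracle call can move the progress measure. The input-independent unitary $U_{t+1}$ satisfies $\bra{\psi_y^t}U_{t+1}^*U_{t+1}\ket{\psi_x^t} = \ip{\psi_y^t}{\psi_x^t}$, so it leaves every inner product appearing in $\Wstar{t}$ (and hence $\M{t}$) untouched; therefore $\M{t+1} = \langle G, V_{\ind}\Wstar{t}V_{\ind}^*\rangle$ and $\M{t}-\M{t+1} = \langle G,\ \Wstar{t} - V_{\ind}\Wstar{t}V_{\ind}^*\rangle$. Since $V_{\ind}$ multiplies the $(x,i)$ component by $(-1)^{x_i}$, the block-diagonal ($i=j$) entries of $\Wstar{t} - V_{\ind}\Wstar{t}V_{\ind}^*$ equal $(1-(-1)^{x_i+y_i})\bra{x,i}\Wstar{t}\ket{y,i}$, which vanish when $x_i = y_i$; and because $G$ is block-diagonal in the $Q$ index, only these entries contribute to the inner product. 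Hence, with $D = \sum_{i=1}^n D_i\otimes\ket{i}\bra{i}$, we may insert a Hadamard product by $D$ at no cost and then transfer it onto $G$:
$$\M{t}-\M{t+1} = \langle G,\ (\Wstar{t}-V_{\ind}\Wstar{t}V_{\ind}^*)\circ D\rangle = \langle G\circ D,\ \Wstar{t}-V_{\ind}\Wstar{t}V_{\ind}^*\rangle.$$

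Third, I would finish with three standard facts. By the trace-norm/spectral-norm duality, the last expression is at most $\norm{G\circ D}\cdot\norm{\Wstar{t}-V_{\ind}\Wstar{t}V_{\ind}^*}_{\textup{Tr}}$. The triangle inequality, unitary invariance of the trace norm, and the fact that a density matrix has trace norm $1$ give $\norm{\Wstar{t}-V_{\ind}\Wstar{t}V_{\ind}^*}_{\textup{Tr}}\leq \norm{\Wstar{t}}_{\textup{Tr}} + \norm{V_{\ind}\Wstar{t}V_{\ind}^*}_{\textup{Tr}} = 2$. Finally $G\circ D = \sum_i (\Gamma\circ D_i)\otimes\ket{i}\bra{i}$ is block-diagonal, so its spectral norm is $\max_i\norm{\Gamma\circ D_i}$. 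Multiplying these bounds yields $\M{t}-\M{t+1}\leq 2\max_i\norm{\Gamma\circ D_i}$. I expect the crux to be setting up $\Wstar{t}$ correctly: choosing the right index set, verifying it is genuinely PSD with unit trace, and --- the most delicate point --- checking that the block-diagonal structure of $G$ together with the precise phase action of $V_{\ind}$ on the $Q$ register is exactly what makes inserting $\circ D$ harmless. Once that is in place, the remaining steps are bookkeeping plus the same norm inequalities used in the proof of Claim~\ref{clm:clm2}.
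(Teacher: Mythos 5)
Your proposal is correct and follows essentially the same route as the paper's proof: the same lifted density matrix $\Wstar{t}$, the same matrices $G = \Gamma\otimes\id_n$ and $D = \sum_i D_i\otimes\ket{i}\bra{i}$, the same insertion of the Hadamard product by $D$ justified by the block-diagonality of $G$ and the phase action of $V_{\ind}$, and the same concluding norm inequalities. No gaps to report.
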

	
	The proof of Claim~\ref{clm:clm2} and Claim~\ref{clm:clm2} are deferred to Appendix~\ref{sec:q2-lb}.
\fi %

	Putting it all together, we have that a quantum query algorithm for $f$ requires at least $\frac{1 - \frac{2}{3}\sqrt{2}}{2} \adv{f} = \Omega(\adv{f})$ rounds.
\end{proof}

\ifdefined\fullversion
\subsection{\texorpdfstring{$\adv$}{Adv+-} is a Lower Bound for the Square Root of Formula Size}
The lower bound on $\sqrt{\mathcal{L}(f)}$ using $\adv(f)$ makes use of the \emph{Karchmer-Wigderson game on $f$}.

\begin{definition}[\cite{DBLP:journals/siamdm/KarchmerW90}]Given a Boolean function $f$, the \emph{Karchmer-Wigderson game on $f$} $(\kw(f))$ is a two-player communication game in which one party receives an input $x \in f^{-1}(0)$, one party receives an input $y \in f^{-1}(1)$, and the parties must collectively determine some coordinate $i$ on which $x_i \not= y_i$. 
\end{definition}

A useful fact is that the minimum number of leaves in a De Morgan formula that computes a function $f$ -- denoted $\mathcal{L}(f)$ -- is exactly the minimum number of leaves in a communication protocol that successfully solves $\kw(f)$ -- denoted $C^P(\kw(f))$.
	
\begin{theorem}[\cite{DBLP:journals/siamdm/KarchmerW90}]
	$\mathcal{L}(f) = C^P(\kw(f))$
\end{theorem}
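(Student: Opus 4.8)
The plan is to establish the two inequalities $\mathcal{L}(f) \ge C^P(\kw(f))$ and $\mathcal{L}(f) \le C^P(\kw(f))$ separately, each by an explicit construction that converts between a De Morgan formula and a communication protocol while preserving the number of leaves \emph{exactly}. The organizing idea is that a De Morgan formula tree and a $\kw(f)$-protocol tree are the same tree viewed two ways: $\wedge$-gates correspond to nodes where the player holding the $0$-input (say Alice) speaks, $\vee$-gates correspond to nodes where the player holding the $1$-input (Bob) speaks, and a leaf labeled by a literal on variable $i$ corresponds to a protocol leaf that outputs the coordinate $i$.

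For $\mathcal{L}(f) \ge C^P(\kw(f))$, fix an optimal De Morgan formula $\phi$ computing $f$ with all negations pushed to the leaves, so every leaf is a literal $x_i$ or $\overline{x_i}$ and $\phi$ has $\mathcal{L}(f)$ leaves. On input $(x,y)$ with $x \in f^{-1}(0)$ and $y \in f^{-1}(1)$, the players walk down $\phi$ from the root maintaining the invariant that the current subformula $\psi$ satisfies $\psi(x)=0$ and $\psi(y)=1$. At a gate $\psi = \psi_1 \wedge \psi_2$, since $\psi(x)=0$ some $\psi_j(x)=0$, so Alice sends that index $j$ (one bit); since $\psi(y)=1$ we still have $\psi_j(y)=1$, so the invariant survives on $\psi_j$. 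At a gate $\psi = \psi_1 \vee \psi_2$ the roles are symmetric with Bob speaking. When the players reach a leaf $\psi$, which is a literal on some variable $i$, the invariant $\psi(x) = 0 \ne 1 = \psi(y)$ forces $x_i \ne y_i$, and they output $i$. The resulting protocol tree is a copy of the formula tree, hence has $\mathcal{L}(f)$ leaves.

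For $\mathcal{L}(f) \le C^P(\kw(f))$, take an optimal protocol $P$ for $\kw(f)$ with $L := C^P(\kw(f))$ leaves, which we may assume trimmed. Each node $v$ of $P$ is reached exactly by the pairs in a combinatorial rectangle $X_v \times Y_v$ with $X_v \subseteq f^{-1}(0)$ and $Y_v \subseteq f^{-1}(1)$, and at the root $X = f^{-1}(0)$, $Y = f^{-1}(1)$. Build, bottom-up, a subformula $\phi_v$ with the invariant $\phi_v(x) = 0$ for all $x \in X_v$ and $\phi_v(y) = 1$ for all $y \in Y_v$. At a leaf $v$ with output $i$, correctness of $P$ gives $x_i \ne y_i$ for all $(x,y) \in X_v \times Y_v$, which (for nonempty rectangles) forces $x_i$ to be constant on $X_v$ and $y_i$ the opposite constant on $Y_v$; set $\phi_v$ to be the literal $x_i$ or $\overline{x_i}$ accordingly. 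At an internal node where Alice speaks, $X_v = X_{v_0} \sqcup X_{v_1}$ and $Y_{v_0} = Y_{v_1} = Y_v$, so $\phi_v := \phi_{v_0} \wedge \phi_{v_1}$ preserves the invariant; where Bob speaks the construction is dual, with $\phi_v := \phi_{v_0} \vee \phi_{v_1}$. Then $\phi_{\mathrm{root}}$ agrees with $f$ on all of $\{0,1\}^n = f^{-1}(0) \cup f^{-1}(1)$, and since each internal node contributes one gate and each protocol leaf contributes one formula leaf, $\phi_{\mathrm{root}}$ has exactly $L$ leaves.

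The main obstacle is not any single hard step — both directions are routine inductions once the invariants are in place — but rather pinning down the bookkeeping so the leaf counts match \emph{on the nose} rather than up to constants, together with a handful of degenerate cases: constant leaves and empty rectangles (which can be removed from an optimal object), the trivial case of constant $f$ where $\kw(f)$ has an empty side, and checking that pushing negations to the leaves costs nothing so that De Morgan formula leaves and protocol leaves are truly in bijection.
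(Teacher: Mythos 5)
Your proof is correct and follows essentially the same approach as the paper's sketch: the tree-to-tree correspondence with $\wedge$-gates mapped to nodes where the $0$-input holder speaks, $\vee$-gates to nodes where the $1$-input holder speaks, and literals to protocol leaves. The only difference is that you spell out the protocol-to-formula direction (via the rectangle invariant at each node), which the paper dismisses as "analogous," and you handle the degenerate cases explicitly.
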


We give a brief sketch of the proof, noting that we only need one direction for the lower bound in this section.

\begin{proof}[Proof (Sketch)]
	Given a formula for $f$, we can use induction on the depth of the formula to produce a communication protocol for $\kw(f)$. 
	
	If the formula is a single leaf, then that leaf must be labelled with some literal. Then, since the formula evaluates to false on $x$ and true on $y$, the players know that they differ on the leaf's literal and therefore no communication is required (and so the communication protocol for $\kw(f)$ is also a single leaf). 
	
	If the formula is the logical \textsc{And} of two subformulae, then $y$ must evaluate to 1 on both subformulae but $x$ must evaluate to 0 on at least one, so the player holding $x$ can report which. The parties continue with the protocol for that subformula, so the number of leaves in the communication protocol is (by induction) the sum of the number of leaves in the subformulae, which is just the number of leaves in the entire formula. A similar situation holds when the formula is the logical \textsc{Or} of two subformulae, but with the player holding $y$ speaking. 
	
	A communication protocol for $\kw(f)$ can be used to construct a formula for $f$ in an analogous fashion.
\end{proof}

A communication protocol for $\kw(f)$ partitions $f^{-1}(0) \times f^{-1}(1)$ into $C^P(\kw(f))$ combinatorial rectangles, where each rectangle is \emph{monochromatic} in terms of $\kw(f)$: that is, each rectangle is associated with some $i$ where $x_i \not= y_i$ for all $(x, y)$ in the rectangle. Let $C^D(\kw(f))$ be the minimum number of monochromatic combinatorial rectangles required to partition $f^{-1}(0) \times f^{-1}(1)$. Clearly, $C^D(\kw(f)) \leq C^P(\kw(f))$.

In order to prove Theorem \ref{thm:adv-l-lb}, we will exploit two properties of the spectral norm. The first is that the spectral norm (indeed, any matrix norm) is monotone with respect to submatrices: if $A$ is a submatrix of $B$, then $\norm{A} \leq \norm{B}$. The second is that the \emph{square} of the spectral norm is subadditive over rectangles. For a $|X| \times |Y|$ matrix $A$ and a combinatorial rectangle $R \subseteq X \times Y$, let $A_R$ be defined by:
$$ \bra{x}A_R\ket{y} = \begin{cases}\bra{x}A\ket{y} & (x, y) \in R\\0 & \textup{otherwise}\end{cases} $$
\begin{lemma}[\cite{DBLP:journals/cc/LaplanteLS06}] \label{lem:spectral-norm-squared-subadditive}
	If $A$ is an $|X| \times |Y|$ matrix and $\mathcal{R}$ partitions $X \times Y$ into combinatorial rectangles, then $\norm{A}^2 \leq \sum_{R \in \mathcal{R}} \norm{A_R}^2$.
\end{lemma}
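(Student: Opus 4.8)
The plan is to combine the variational (bilinear-form) characterization of the spectral norm with a single, carefully arranged application of the Cauchy--Schwarz inequality, using the partition hypothesis precisely once. Pick unit vectors $\ket{u} \in \CC^X$ and $\ket{v} \in \CC^Y$ with $\norm{A} = |\bra{u}A\ket{v}|$; these exist because $\norm{A}$ is the largest singular value of $A$, so $\ket{u}$ and $\ket{v}$ are the associated left and right singular vectors. Since $\mathcal{R}$ partitions $X \times Y$, every entry of $A$ lies in exactly one block, so $A = \sum_{R \in \mathcal{R}} A_R$, and therefore $\norm{A} = |\bra{u}A\ket{v}| = |\sum_{R \in \mathcal{R}} \bra{u}A_R\ket{v}| \le \sum_{R \in \mathcal{R}} |\bra{u}A_R\ket{v}|$.

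Next I would localize each term. Write each $R \in \mathcal{R}$ as $R = S_R \times T_R$ with $S_R \subseteq X$ and $T_R \subseteq Y$, and let $\ket{u_R}$ (resp.\ $\ket{v_R}$) be the vector that agrees with $\ket{u}$ on the coordinates in $S_R$ (resp.\ with $\ket{v}$ on $T_R$) and is zero elsewhere. Since $A_R$ vanishes outside the rows $S_R$ and the columns $T_R$, we have $\bra{u}A_R\ket{v} = \bra{u_R}A_R\ket{v_R}$, hence $|\bra{u}A_R\ket{v}| \le \norm{A_R}\cdot\norm{\ket{u_R}}\cdot\norm{\ket{v_R}}$ by the definition of the spectral norm as the maximum of the bilinear form. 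Summing over $\mathcal{R}$ and applying Cauchy--Schwarz in the form $\sum_R a_R b_R \le (\sum_R a_R^2)^{1/2}(\sum_R b_R^2)^{1/2}$ with $a_R = \norm{A_R}$ and $b_R = \norm{\ket{u_R}}\norm{\ket{v_R}}$ gives $\norm{A} \le \big(\sum_{R} \norm{A_R}^2\big)^{1/2}\cdot\big(\sum_{R} \norm{\ket{u_R}}^2\norm{\ket{v_R}}^2\big)^{1/2}$.

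Finally I would evaluate the second factor exactly. For each $R$, $\norm{\ket{u_R}}^2 \norm{\ket{v_R}}^2 = \big(\sum_{x \in S_R}|\ip{x}{u}|^2\big)\big(\sum_{y \in T_R}|\ip{y}{v}|^2\big) = \sum_{(x,y) \in R}|\ip{x}{u}|^2|\ip{y}{v}|^2$, so summing over the partition $\mathcal{R}$ collapses the double sum to $\sum_{(x,y) \in X \times Y}|\ip{x}{u}|^2|\ip{y}{v}|^2 = \norm{\ket{u}}^2\norm{\ket{v}}^2 = 1$. Thus $\norm{A} \le \big(\sum_{R \in \mathcal{R}} \norm{A_R}^2\big)^{1/2}$, and squaring gives the lemma.

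The one place that needs genuine thought --- the conceptual core of the proof --- is how to group the three factors $\norm{A_R}$, $\norm{\ket{u_R}}$, $\norm{\ket{v_R}}$ before invoking Cauchy--Schwarz. It is essential to keep $\norm{A_R}$ isolated on one side and the \emph{pair} $\norm{\ket{u_R}}\norm{\ket{v_R}}$ together on the other, because only then does the partition hypothesis make $\sum_R \norm{\ket{u_R}}^2\norm{\ket{v_R}}^2$ telescope to a product of full squared norms. By contrast, $\sum_R \norm{\ket{u_R}}^2$ on its own cannot be bounded by $\norm{\ket{u}}^2$ in general, since a fixed row index $x$ may lie in many of the sets $S_R$; it is precisely the multiplication by $\norm{\ket{v_R}}^2$ that restores disjointness at the level of pairs $(x,y)$.
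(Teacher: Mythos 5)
Your proof is correct and follows essentially the same route as the paper's: the variational characterization $\norm{A} = |\bra{u}A\ket{v}|$ for optimal unit vectors, localization to $\ket{u_R},\ket{v_R}$ on each rectangle, Cauchy--Schwarz with $\norm{A_R}$ against the product $\norm{\ket{u_R}}\norm{\ket{v_R}}$, and the collapse of $\sum_R \norm{\ket{u_R}}^2\norm{\ket{v_R}}^2$ to $1$ via the partition property. Your closing remark about why the pairing $\norm{\ket{u_R}}\norm{\ket{v_R}}$ must be kept together is a nice articulation of the step the paper performs implicitly.
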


\begin{proof}
	Note that $\norm{A} = \max_{u, v} |\bra{u} A \ket{v}| / \norm{\ket{u}} \norm{\ket{v}}$. In the following, let $\ket{u}$ and $\ket{v}$ be the unit vectors that achieve the maximum in this expression. 
	
	For any $R \in \mathcal{R}$ where $R = X_R \times Y_R$ for $X_R \subseteq X, Y_R \subseteq Y$, define $\ket{u_R}$ and $\ket{v_R}$ as follows:
	
	$$ \ip{u_R}{x} = \begin{cases} \ip{u}{x} & x \in X_R \\ 0 & \textup{otherwise} \end{cases} \qquad
	\ip{v_R}{y} = \begin{cases} \ip{v}{y} & y \in Y_R \\ 0 & \textup{otherwise} \end{cases} $$
	\begin{align*}
		\norm{A} &= |\bra{u} A \ket{v}| = \left| \bra{u} \left(\sum_{R \in \mathcal{R}} A_R \right)\ket{v} \right| = \left| \sum_{R \in \mathcal{R}} \bra{u} A_R \ket{v} \right| = \left| \sum_{R \in \mathcal{R}} \bra{u_R} A_R \ket{v_R} \right|\\
		& \leq \sum_{R \in \mathcal{R}} \left| \bra{u_R} A_R \ket{v_R} \right| \leq \sum_{R \in \mathcal{R}} \norm{A_R} |\ket{u_R}| |\ket{v_R}| \\
		& \leq \left( \sum_{R \in \mathcal{R}} \norm{A_R}^2\right)^{1/2} \left( \sum_{R \in \mathcal{R}} |\ket{u_R}|^2 |\ket{v_R}|^2 \right)^{1/2} \quad \mbox{(by the Cauchy-Schwarz inequality)}
	\end{align*}
	Note that the second term here simplifies:
	\begin{align*}
		\sum_{R \in \mathcal{R}} |\ket{u_R}|^2 |\ket{v_R}|^2 &= \sum_{R \in \mathcal{R}} \sum_{(x, y) \in R} (\ip{u}{x})^2 (\ip{v}{y})^2 \\
		&= |\ket{u}|^2 |\ket{v}|^2 \quad \mbox{(as $\mathcal{R}$ partitions $X \times Y$)}
	\end{align*}
	To conclude, note that as $\ket{u}$ and $\ket{v}$ are unit vectors, $|\ket{u}|^2 |\ket{v}|^2 = 1$: therefore, $\norm{A} \leq \left( \sum_{R \in \mathcal{R}} \norm{A_R}^2\right)^{1/2}$ and so $\norm{A}^2 \leq \sum_{R \in \mathcal{R}} \norm{A_R}^2$.
\end{proof}

Now we can prove Theorem \ref{thm:adv-l-lb}.

\begin{proof}[Proof of Theorem \ref{thm:adv-l-lb}]
	Let $A$ be any $f^{-1}(0) \times f^{-1}(1)$ matrix. Let $\mathcal{R}_f$ be an optimal rectangle partition in terms of $\kw(f)$.
	$$ \norm{A}^2 \leq \sum_{R \in \mathcal{R}_f} \norm{A_R}^2 \leq C^D(\kw(f)) \cdot \max_{R \in \mathcal{R}_f} \norm{A_R}^2 $$
	Let $A_i$ be the $f^{-1}(0) \times f^{-1}(1)$ matrix defined by:
	$$ \bra{x}A_i\ket{y} = \begin{cases}\bra{x}A\ket{y} & x_i \not= y_i\\0 & \textup{otherwise}\end{cases} $$
	Then, for any $R$, $A_R$ is a submatrix of $A_i$, so by the monotonicity with respect to rectangles:
	$$ C^D(\kw(f)) \cdot \max_{R \in \mathcal{R}_f} \norm{A_R}^2 \leq C^D(\kw(f)) \cdot \max_{i \in [n]} \norm{A_i}^2 $$
	Rearranging, we get:
	$$\mathcal{L}(f) \geq C^D(\kw(f)) \geq \max_{A \not= 0} \frac{\norm{A}^2}{\max_i \norm{A_i}^2}$$
	We conclude by taking the square root of the above expression and noting that for any matrix $A \in f^{-1}(0) \times f^{-1}(1)$, letting $A'$ be the matrix of the form $A' = \begin{bmatrix}0&A\\A^*&0\end{bmatrix}$, we have that $A'$ is an adversary matrix for $f$ and $\norm{A'} = \norm{A}$, so maximizing over matrices $A$ on the right-hand side is equivalent to maximizing over adversary matrices $A'$.
\end{proof}
\fi %

\section{Span Programs}
Given a function $f$, its \emph{span program} $P_f$ is an algebraic model of computation for $f$ first introduced by Karchmer and Wigderson. Let $A$ be a matrix whose columns are labelled by the set of $2n$ literals. Let $\ket{t}$ be a target vector. For input $s$ such that $f(s) = 1$, we would like the target vector $\ket{t}$ to be contained in the span of the columns of $A$ that are labelled by literals that agree with $s$. Otherwise, if $f(s) = 0$, we require that $\ket{t}$ is not in the span. In this case, there must exist some vector $\bra{y}$ that witnesses this fact (as shown by Farkas' Lemma). More formally, we define the span program as follows. 

\begin{definition}{\textup{(Span Programs, \cite{DBLP:conf/coco/KarchmerW93}.)}}
    A \emph{span program} $P_f$ for a $n$-ary boolean function $f$ consists of a matrix $A \in \ltrans(\CC^{I}, \CC^{[m]})$ and a target vector $\ket{t} \in \CC^{[m]}$, where $I$ is the disjoint union of $2n$ index sets $I_{1,0}$, $I_{1,1}$, ..., $I_{n,0}$, $I_{n,1}$ one for each setting of each entry of a boolean string $s \in \{0,1\}^n$. Given $s$, $\Pi(s) \in \ltrans(\{0,1\}^{I})$ is the diagonal matrix whose diagonal entry $(j,b)\times(j,b)$ indicates $s_j = b$ i.e.
    \begin{equation}
        \label{eq:input-matrix}
        \Pi(s) = \id - \sum_{j \in [n]} \ket{j, \overline{s}_j}\bra{j, \overline{s}_j}.
    \end{equation}
    The span program $P_f$ evaluates to false if there exists a negative witness $\ket{y} \in \CC^{[m]}$ i.e. $\bra{y}A\Pi(s) = 0$ but $\ip{y}{t} > 0$; wlog.\ assume that $\ip{y}{t} = 1$ by scaling. Conversely, $P_f$ evaluates to true if there exists a positive witness $\ket{z} \in \CC^{I}$ i.e. $\ket{t}$ is in the span of $A\Pi(s)$ and $A\Pi(s)\ket{z} = \ket{t}$. 
    
    For inputs which evaluate to true on $P_f$, i.e. $x \in F_1$ for which there exists $\ket{z}$ such that $A\Pi(x)\ket{z} = \ket{t}$, let $\wsize(P_f, x) = \norm{\ket{z}}^2$. For inputs which evaluate to false on $P_f$, i.e. $w \in F_0$ for which there exists $\ket{y}$ such that $\bra{y}A\Pi(w) = 0$ and $\ip{y}{t} = 1$, let $\wsize(P_f, w) = \norm{\bra{y}A}^2$.\footnote{Note that this value is equivalent to $\norm{\bra{z}A(\id - \Pi(w))}$ by the first condition.} The \emph{witness size} of $P_f$ is then
    \[\wsize(P_f) = \max_{s \in \{0,1\}^{n}} \wsize(P_f, s).\] 
\end{definition}

Using SDP duality, we show that the witness size of the span-program of $f$ is equivalent to its general adversary bound.

\ifdefined\fullversion
\begin{example}
Consider the span programs for several simple functions. Note that there can be many different span programs for the same function. All omitted column index sets are assumed to be empty.
\begin{enumerate}
        \item For the $n$-ary logical or function, $\op{OR}_n$, let $\ket{t} = [1]$ and  
        \[A = \begin{blockarray}{ccccc}
        I_{1,1} & I_{2,1} & \cdots & I_{n-1,1} & I_{n,1}\\
        \begin{block}{[ccccc]}
        1 & 1 & \cdots & 1 & 1\\
        \end{block}
        \end{blockarray}.
        \]
        Observe that $\wsize(P_{\op{OR}_n}) = \max_{s \in \{0,1\}^{n}} \wsize(P_{\op{OR}_n}, s) = n^2$ is achieved by the input string $s = [0,...0]^{\top}$ with the witness $\ket{y} = [1]$.
        \item For the parity function $\parity$, let $\ket{t} = [1,1]^{\top}$ and
        \[A = \begin{blockarray}{ccccc}
        I_{1,0} & I_{1,1} & I_{2,0} & I_{2,1}\\
        \begin{block}{[ccccc]}
        1 & 0 & 1 & 0\\
        0 & 1 & 0 & 1\\
        \end{block}
        \end{blockarray}.
        \]
        Observe that $\wsize(P_{\parity}) = 2$. This can be achieved by a string which evaluates to false e.g. $w = 00$ with witness $\ket{y} = [0,1]^{\top}$ and $\wsize(P_{\parity}, w) = \norm{\ket{y}A}^2$ or by a string which evaluates to true e.g. $x = 01$ with witness $\ket{z} = [1,1]^{\top}$ and $\wsize(P_{\parity}, w) = \norm{\ket{z}}^2$.
    \end{enumerate}
\end{example}
\else
    See Appendix~\ref{sec:span-program-example} for an example span programs for the parity function. Note that there can be many different span programs for the same function. 
\fi

\subsection{Canonical Span Programs}
In order to relate the complexity of the span program of a given function $f$ to its query complexity, we put it in \emph{canonical span program} form. Every span program can be transformed into a canonical span program with at most a polynomial blow-up in size \cite{DBLP:conf/coco/KarchmerW93}.

\begin{definition}{\textup{(Canonical Span Program.)}} 
    The input matrix $A$ and target vector $\ket{t}$ of the canonical span program will be as follows. Define $\ket{t} \in \CC^{F_0}$ to be a scalar multiple of the all ones vector. Let $A \in \mathcal{L}(\CC^{I}, \CC^{F_0})$ where $I = [n] \times \{0,1\} \times [m]$ for a yet-to-be-determined $m$. Each row of $A$ corresponds to an input $w$ evaluating to zero on $f$. Divide this row further into $2n$ row vectors of length $m$ one for each setting of each entry in the input. In the following, if $w_j = b$, then denote each length-$m$ vector corresponding to $I_{j,b}(w)$ by $\ket{v_{w,j}'}$ and corresponding to $I_{j, \overline{b}}$ by $\ket{v_{w,j}}$. 

    Define $\ket{v_{w,j}'}$ to be the all zeroes vector for all $w \in F_0$ and $j \in [m]$. Observe that $\ket{t}$ cannot be in the span of $A\Pi(w)$ since the row of $A\Pi(w)$ corresponding to $w$ consists entirely of zeros. Further, since the indicator vector $\ket{w} \in \CC^{F_0}$ for $w$ is a witness for $A$,\footnote{Since $\ip{w}{t} = 1$ while $\bra{w}A\Pi(w) = 0$.}
    \[\wsize(P_f, w) = \norm{\bra{w}A}^2 = \sum_{j \in [n]} \norm{\ket{v_{w,j}}}^2.\]

    Each $x \in F_1$ will be assign an input vector of length $mn$. These will \emph{not} appear in $A$, but will be used to ensure that the vectors $\ket{v_{w,j}}$ in $A$ satisfy certain constraints. Each vector will be divided into $n$ length $m$ vectors corresponding to the $n$ entries of $x$. These will be denoted by $\ket{v_{x, j}}$. Since $\ket{t}$ needs to be in the span of $A\Pi(x)$, we require that for all $w \in F_0$, $\sum_{w_j \neq x_j} \ip{v_{w, j}}{v_{x, j}} = 1$. Observe that the witness size is again of the form 
    \[\wsize(P_f, x) = \sum_{j \in [n]} \norm{\ket{v_{x,j}}}^2.\]
    
    The smallest $m$ for which there exists such vectors $\ket{v_{w,j}}$ and $\ket{v_{x,j}}$ will suffice. 
\end{definition}
In the following let $W$ be the witness size of the canonical span program.
\ifdefined\fullversion
\begin{example}
    \label{ex:span-program-for-parity}
    The canonical span program for $\parity$ is as follows. Let the target vector be $\ket{t} = c[1,1]^{\top}$ where $c = 1/(3\sqrt{W})$. Then for $\{w_1=00, w_2=11\} = F_0$ with vector $\bra{v_{w_i,j}} \in \CC^{[m]}$ corresponding to the length $m$ vector of the $j$\textsuperscript{th} bit of $w_i$, we have
    \[A = \begin{blockarray}{ccccc}
        I_{1,0} & I_{1,1} & I_{2,0} & I_{2,1}\\
        \begin{block}{[cccc]c}
        0 & \bra{v_{w_1,1}} & 0 & \bra{v_{w_1,2}} & w_1 = 00\\
        \bra{v_{w_2,1}} & 0 & \bra{v_{w_2,2}} & 0 & w_2 = 11\\
        \end{block}
        \end{blockarray}
    \]
    Further, to each string $x_i$ in $\{x_1 = 10, x_2 = 01\} = F_1$ we assign a vector $\ket{x_i} = [\ket{v_{x_i,1}}, \ket{v_{x_i,2}}]^{\top}$ where $\ket{v_{x_i, j}} \in \CC^{[m]}$ corresponds to the length $m$ vector of the $j$\textsuperscript{th} bit of $x_i$. 
    Note that $m = 1$ suffices, since the matrix $A$ where
        \[A = \begin{blockarray}{ccccc}
        I_{1,0} & I_{1,1} & I_{2,0} & I_{2,1}\\
        \begin{block}{[cccc]c}
        0 & 1 & 0 & 1 & w^{(1)} = 00\\
        1 & 0 & 1 & 0 & w^{(2)} = 11\\
        \end{block}
        \end{blockarray}
    \]
    and the pair of vectors $\ket{x_1} = \ket{x_2} = [1, 1]^{\top}$ satisfies the condition $\sum_{w_j \neq x_j} \ip{v_{w,j}}{v_{x,j}} = 1$.
\end{example}
\fi

\subsection{The Dual of \texorpdfstring{$\adv$}{Adv+-} is Span Program Witness Size}
From the canonical span program above we write the witness size as the following optimization problem:
\[\wsize(P_f) = \min_{\{\ket{v_{x,j}}\}}\max_{s \in \{0,1\}^n, j \in [n]} \norm{\ket{v_{s,j}}}^2\]
subject to the constraint that for all pairs $(w,x) \in \advmatrix$, $\sum_{w_j \neq x_j} \ip{v_{w,j}}{v_{x,j}} = 1$.
Let $X$ be PSD matrix such that entry $\bra{w,i}X\ket{x,j} = \ip{v_{w,i}}{v_{x,j}}$ for all $w \in F_0$ and $x \in F_1$. Write $\wsize(P_f)$ as the following equivalent SDP
\[\wsize(P_f) = \min_{X \succeq 0} \max_{s \in \{0,1\}^{n}} \sum_{j \in [n]} \bra{s,j}X\ket{s,j}\]
subject to the constraint that for all $(x,w) \in \advmatrix$, $\sum_{w_j \neq x_j}\bra{w,j}X\ket{x,j} = 1$.

We will turn the above SDP into the general adversarial bound. First introduce a variable $\xi$ in order to eliminate the inner maximization function. For adversary matrix $\Gamma$ let $\Gamma_j = \Gamma \circ D_j$.
\begin{align}
    \wsize(P_f) &= \min_{\substack{
    X \succeq 0\mbox{, } \xi \geq 0, \\
    \forall (w,x) \in \advmatrix: \sum_{w_j \neq x_j}\bra{w,j}X\ket{x,j} = 1,\\
    \forall s \in \{0,1\}^n: \xi \geq \sum_{j \in [n]} \bra{s,j}X\ket{s,j}}} \xi \label{eq:primal-sdp}\\
    &= \max_{\substack{
    \{\alpha_{w,x}\}, \\
    \beta_s \geq 0\mbox{, }\sum_{s} \beta_s = 1,\\
    \sum_{s}\beta_s\ket{s}\bra{s} \succeq \sum_{w,x \in \advmatrix, w_j \neq x_j}\alpha_{w,x}\ket{w}\bra{x}}} \sum \alpha_{w,x} &\left(\mbox{SDP duality; see Appendix~\ref{sec:lagrangian-duality}}\right) \label{eq:dual-sdp}\\
    &= \max_{\substack{
    \{\alpha_{w,x}\},\\
    \beta_s \geq 0\mbox{, }\sum_{s} \beta_s = 1,\\
    \sum_{s \in \{0,1\}^n}\ket{s'}\bra{s'} \succeq \sum_{w,x \in \advmatrix, w_j \neq x_j}\frac{\alpha_{w,x}}{\sqrt{\beta_w\beta_x}}\ket{w'}\bra{x'}}} \sum \alpha_{w,x} &\left(\mbox{substitute }\ket{s'} = \frac{1}{\sqrt{\beta_s}}\ket{s}\right)\\
    &= \max_{\substack{
    \{\alpha_{w,x}'\},\\
    \beta_s \geq 0\mbox{, }\sum_{s} \beta_s = 1,\\
    \sum_{s \in \{0,1\}^n}\ket{s'}\bra{s'} \succeq \sum_{w,x \in \advmatrix, w_j \neq x_j}\alpha_{w,x}'\ket{w'}\bra{x'}}} \sum \alpha_{w,x}'\sqrt{\beta_w\beta_x} &\left(\mbox{substitute }\alpha_{w,x}' = \alpha_{w,x}/\sqrt{\beta_w\beta_x}\right)\\
    &= \max_{\substack{
    \Gamma_{w,x} = \alpha_{w,x}',\\
    \ket{\beta}_s = \sqrt{\beta_s}\mbox{, }\norm{\ket{\beta}} = 1,\\
    \id - \Gamma_i \succeq 0}} \bra{\beta}\Gamma\ket{\beta}&\left(\mbox{$\bra{w}\Gamma_j\ket{x} = 0$ if $w_j = x_j$} \right)\\
    &= \max_{\substack{
    \Gamma_{w,x} = \alpha_{w,x}',\\
    \norm{\Gamma_i} \leq 1}} \norm{\Gamma} = \adv(f)
\end{align}

\subsection{Span Programs as Graphs}
The canonical span program matrix $A$ of $f$ can be transformed into the biadjacency matrix of two bipartite graphs \cite{DBLP:journals/eccc/Reichardt10a, DBLP:conf/soda/Reichardt11a, DBLP:journals/toc/ReichardtS12}. These graphs capture the evaluation of a string $s$ on $f$: in particular, we define a ``true'' biadjacency matrix such that if $f(s) = 1$ then there is an eigenvalue-zero eigenvector while no such eigenvector exists when $f(s) = 0$, and a ``false'' biadjacency matrix where the opposite is true. Let $B_{G(s)} \in \CC^{(F_0 \cup I') \times (\{\mu_0\} \cup I)}$ and $B_{G'(s)} \in \CC^{(F_0\cup I') \times I}$ be the true and false biadjacency matrices corresponding to the bipartite graph $G$ of the span program respectively:
\begin{equation}
    \label{eq:biadjacency-matrices}
    B_{G(s)} = 
    \begin{blockarray}{ccc}
        \mu_0 & I &\\
        \begin{block}{[cc]c}
        \ket{t} & A & F_0\\
        0 & \overline{\Pi}(s) & I'\\
        \end{block}
    \end{blockarray}\qquad
    B_{G'(s)} = 
        \begin{blockarray}{cc}
            I &\\
            \begin{block}{[c]c}
            A & F_0\\
            \overline{\Pi}(s) & I'\\
            \end{block}
        \end{blockarray}
\end{equation}
where $\ket{t}$ and $A$ are defined as 
\begin{equation}
    \label{eq:inputs-to-canonical-span-program}
    \ket{t} = \frac{1}{3\sqrt{W}}\sum_{w \in F_0} \ket{w} \mbox{ and } A = \sum_{w \in F_0, j \in [n]}\ketbra{w}{j,\overline{w}_j} \tensor \ket{v_{w,j}}
\end{equation}
where $W$ is the witness size and $\overline{\Pi}(s) = \id - \Pi(s) \in \ltrans(\CC^{I})$ (see $\Pi(s)$ in Equation~\ref{eq:input-matrix}).

Matrix-vector products $B_{G(s)}\ket{\psi}$ and $B_{G'(s)}^{*}\ket{\psi'}$ can be interpreted as operating on the sets of column vectors separately. That is, let $\ket{\psi} = \ket{\psi_1} + \ket{\psi_2}$ where $\ket{\psi_1} := \alpha\ket{0}$ operates on column $\mu_0$ and $\ket{\psi_2}$ operates on columns $I$. Similarly, let $\ket{\psi'} = \ket{\psi_1'} + \ket{\psi_2'}$ where $\ket{\psi_1'}$ and $\ket{\psi_2'}$ operates on rows $F_0$ and $I'$.  

\ifdefined\fullversion
\begin{example}
    Let us turn the canonical span program of the parity function, shown in Example \ref{ex:span-program-for-parity}, into its corresponding bipartite graphs. The matrices $B_{G(x)}$ and $B_{G'(w)}$ are then defined as follows for strings $x = 10$ and $w = 00$ which evaluates to true and false respectively.
    \[B_{G(x)} = \begin{blockarray}{cccccc}
    \mu_0 & \BAmulticolumn{4}{c}{I}\\
    \begin{block}{[c|cccc]c}
    1 & 0 & 1 & 0 & 1 & \multirow{2}{*}{$F_0$}\\
    1 & 1 & 0 & 1 & 0 & \\\cline{1-5}
    0 & 1 & 0 & 0 & 0 & \multirow{4}{*}{$I'$}\\
    0 & 0 & 0 & 0 & 0 &\\
    0 & 0 & 0 & 0 & 0 &\\
    0 & 0 & 0 & 0 & 1 &\\
    \end{block}
    \end{blockarray}\qquad
    B_{G'(w)}^{*} = 
    \begin{blockarray}{ccccccc}
    \BAmulticolumn{2}{c}{F_{0}} & \BAmulticolumn{4}{c}{I'}\\
    \begin{block}{[cc|cccc]c}
    0 & 1 & 0 & 0 & 0 & 0 & \multirow{4}{*}{I}\\
    1 & 0 & 0 & 1 & 0 & 0 & \\
    0 & 1 & 0 & 0 & 0 & 0 & \\
    1 & 0 & 0 & 0 & 0 & 1 & \\
    \end{block}
    \end{blockarray}.\]
    These corresponds to the bipartite graphs shown Figure~\ref{fig:biadjacencymatrix-true} and Figure~\ref{fig:biadjacencymatrix-false}. 
    \fig{biadjacencymatrix-true}{1}{The bipartite graphs corresponding to the true biadjacency matrix. All unmarked edges have weight one. The matix vector product $B_{G(x)} \ket{\phi}$ is equivalent to assigning weights to the open dots in the picture. In order to find an eigenvalue zero eigenvector $\ket{\phi}$ of $B_{G(x)}$, the assignment of weights must ensure the neighbours of every solid dot sums to zero. Observe that $B_{G(x)}$, with $\oplus(x) = 1$, has an eigenvalue zero eigenvector while $B_{G(w)}$, with $\oplus(w) = 0$, does not. }
    \fig{biadjacencymatrix-false}{1}{The bipartite graphs corresponding to the false biadjacency matrix. As opposed to the above, $B_{G'(x)}$, with $\oplus(x) = 1$, does not have an eigenvalue zero eigenvector while $B_{G'(w)}$, with $\oplus(w) = 0$, does.}
\end{example}
\fi

\begin{lemma}{\textup{(Spectral Gap of Eigenvalue Zero Eigenvectors.)}}
\label{lem:spectra-gap}
If $f(x) = 1$, then the vector 
\[\ket{\psi} = \ket{\psi_1} + \ket{\psi_2} \mbox{ where }\ket{\psi_1} = -3\sqrt{W}\ket{0}\mbox{ and }\ket{\psi_2} = \sum_{j \in [n]}\ket{j, x_j} \tensor \ket{v_{x, j}}\] 
is an eigenvalue zero eigenvector of $B_{G(x)}$. Further, $\ket{\psi}$ satisfies $|\ip{0}{\psi}|^2 \geq 9\norm{\ket{\psi}}^2/10$.

If instead $f(w) = 0$, then the vector 
\[\ket{\psi'} = \ket{\psi_1'} + \ket{\psi_2'}\mbox{ where }\ket{\psi_1'} = -\ket{w}\mbox{ and }\ket{\psi_2'} = \sum_{j \in [n]}\ket{j, \overline{w}_j} \tensor \ket{v_{w, j}}\] 
is an eigenvalue zero eigenvector of $B_{G'(w)}$. Further, $\ket{\psi'}$ satisfies $|\ip{t}{\psi'}|^2 \geq \norm{\ket{\psi}}^2/(9W(W + 1))$. 
\end{lemma}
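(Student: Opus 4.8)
The plan is to verify both halves of the lemma by direct computation, handling the ``true'' and ``false'' cases symmetrically. The only substantive inputs are the two defining constraints of the canonical span program — that the agreeing blocks vanish, i.e.\ $\ket{v_{w,j}'} = 0$ for every $w \in F_0$, and that $\sum_{w_j \neq x_j}\ip{v_{w,j}}{v_{x,j}} = 1$ for every $w \in F_0$ and $x \in F_1$ — together with the observation that, by definition of $W = \wsize(P_f)$, we have $\wsize(P_f, s) = \sum_{j \in [n]}\norm{\ket{v_{s,j}}}^2 \leq W$ for each $s$. Following the excerpt, I read ``eigenvalue-zero eigenvector of $B_{G(x)}$'' as $B_{G(x)}\ket{\psi} = 0$ (a vector on the column side $\{\mu_0\}\cup I$), and ``eigenvalue-zero eigenvector of $B_{G'(w)}$'' as $B_{G'(w)}^{*}\ket{\psi'} = 0$ (a vector on the row side $F_0 \cup I'$).

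For the true case I would expand $B_{G(x)}\ket{\psi} = B_{G(x)}\ket{\psi_1} + B_{G(x)}\ket{\psi_2}$ into its $F_0$ and $I'$ blocks. The $\mu_0$-column contributes $-3\sqrt{W}\ket{t} = -\sum_{w \in F_0}\ket{w}$ to the $F_0$ block and $0$ to the $I'$ block, using $\ket{t} = \frac{1}{3\sqrt W}\sum_{w}\ket{w}$. For the $I$-columns: the $w$-row of $A$ carries $\bra{v_{w,j}}$ in the sector $(j,\overline w_j)$ of $I$ and is zero in the sector $(j,w_j)$ (the $\ket{v_{w,j}'} = 0$ condition), while $\ket{\psi_2} = \sum_j \ket{j,x_j}\otimes\ket{v_{x,j}}$ sits in the sectors $(j,x_j)$; contracting, the $(j,\overline w_j)$ component of $\ket{\psi_2}$ equals $\ket{v_{x,j}}$ exactly when $\overline w_j = x_j$, i.e.\ $w_j \neq x_j$, so $A\ket{\psi_2} = \sum_w \ket{w}\sum_{j:\,w_j \neq x_j}\ip{v_{w,j}}{v_{x,j}} = \sum_w \ket{w}$ by the span-program constraint, which cancels the $\mu_0$-contribution in the $F_0$ block. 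And $\overline\Pi(x)\ket{\psi_2} = 0$, since $\overline\Pi(x)$ projects onto the sectors $(j,\overline x_j)$ whereas $\ket{\psi_2}$ is supported on the $(j,x_j)$. Hence $B_{G(x)}\ket{\psi} = 0$. For the spectral gap, $\ket{\psi_1} \perp \ket{\psi_2}$ (they live on column $\mu_0$ and on columns $I$ respectively), so $\norm{\ket{\psi}}^2 = 9W + \wsize(P_f,x)$ while $|\ip{0}{\psi}|^2 = 9W$; the claimed $|\ip{0}{\psi}|^2 \geq \frac{9}{10}\norm{\ket{\psi}}^2$ is then precisely $\wsize(P_f,x) \leq W$.

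The false case is the same with rows and columns exchanged. I would split $B_{G'(w)}^{*}\ket{\psi'}$ according to whether it acts on the $F_0$ part $\ket{\psi_1'} = -\ket{w}$ (where it acts as $A^{*}$) or on the $I'$ part $\ket{\psi_2'} = \sum_j \ket{j,\overline w_j}\otimes\ket{v_{w,j}}$ (where it acts as the self-adjoint projector $\overline\Pi(w)$). Since the $w$-row of $A$ has $\bra{v_{w,j}}$ sitting in sector $(j,\overline w_j)$, we get $A^{*}(-\ket{w}) = -\sum_j \ket{j,\overline w_j}\otimes\ket{v_{w,j}} = -\ket{\psi_2'}$, while $\overline\Pi(w)\ket{\psi_2'} = \ket{\psi_2'}$ because $\ket{\psi_2'}$ already lies in the range of $\overline\Pi(w)$; these cancel, so $B_{G'(w)}^{*}\ket{\psi'} = 0$. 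For the overlap bound, $\ket{\psi_1'} \in \CC^{F_0}$ is orthogonal to $\ket{\psi_2'} \in \CC^{I'}$, so $\ip{t}{\psi'} = \ip{t}{\psi_1'} = -\frac{1}{3\sqrt W}$ and $\norm{\ket{\psi'}}^2 = 1 + \wsize(P_f,w)$; hence $|\ip{t}{\psi'}|^2 = \frac{1}{9W} \geq \frac{1 + \wsize(P_f,w)}{9W(W+1)} = \frac{\norm{\ket{\psi'}}^2}{9W(W+1)}$, once more because $\wsize(P_f,w)\leq W$. (The $\norm{\ket{\psi}}^2$ in the statement should presumably read $\norm{\ket{\psi'}}^2$.)

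The entire argument is mechanical; there is no genuine obstacle. The one place care is needed is the block/tensor bookkeeping — keeping straight which sector, $(j,0)$ or $(j,1)$, of $I$ (equivalently $I'$) each of $A$, $\overline\Pi(s)$, $\ket{\psi_2}$ and $\ket{\psi_2'}$ is supported on — and invoking the single nontrivial relation $\sum_{w_j \neq x_j}\ip{v_{w,j}}{v_{x,j}} = 1$ at exactly the point where the $F_0$-block of $B_{G(x)}\ket{\psi}$ (respectively the image of $B_{G'(w)}^{*}\ket{\psi'}$) must vanish.
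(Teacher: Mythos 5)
Your proposal is correct and follows essentially the same route as the paper's own proof: a direct block computation of $B_{G(x)}\ket{\psi}$ and $B_{G'(w)}^{*}\ket{\psi'}$ using the canonical span program constraints, followed by the norm bookkeeping $|\ip{0}{\psi}|^2 = 9W$, $\norm{\ket{\psi}}^2 = 9W + \wsize(P_f,x)$ and $|\ip{t}{\psi'}|^2 = 1/(9W)$, $\norm{\ket{\psi'}}^2 = 1 + \wsize(P_f,w)$. If anything, your write-up is slightly more careful than the paper's (you invoke $\wsize(P_f,s) \leq W$ as an inequality where the paper writes equality, and you correctly note that $\norm{\ket{\psi}}^2$ in the second claim should read $\norm{\ket{\psi'}}^2$).
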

\begin{proof}
    Let $x \in F_1$ and $w \in F_0$. Observe that $B_{G(x)}\ket{\psi_1}$ is the vector with $|F_0|$ non-zero entries followed by $|I'|$ zeros. Since $f(x) = 1$, there exists a linear combination of the columns of $A$ which sum to $\ket{t}$. Choosing this set of columns will also ensure that the rows indexed by $I'$ sum to zero as an entry of $\ket{\psi_2}$ is non-zero only when the associated column of $\overline{\Pi}(x)$ is. Further $|\ip{0}{\psi}|^2 = \norm{\ket{\psi_1}}^2 = 9W$, while $\norm{\ket{\psi}}^2 = \norm{\ket{\psi_1}}^2 + \norm{\ket{\psi_2}}^2 = 9W + W$ by definition. Similarly, observe that $B_{G'(w)}^{*}\ket{\psi'_1}$ multiplies the column associated with $w$ among the rows of $A$ by negative one, while $B_{G'(w)}^{*}\ket{\psi'_2}$ is exactly this same column. Further, $|\ip{t}{\psi'}|^2 = 1/(9W)$ and $\norm{\ket{\psi'}}^2 = \norm{\ket{\psi_1'}}^2 + \norm{\ket{\psi_2'}}^2 = 1 + W$.
\end{proof}
\section{Optimal Quantum Query Algorithms for Span Programs}
Let $\ket{t}$ and $A$, as shown in Equation~\ref{eq:inputs-to-canonical-span-program}, be the target and matrix of the canonical span program respectively. Further let $G$ be the associated bipartite graph with biadjacency matrix $B_G$ and adjacency matrix $A_G$ as follows
\begin{equation}
    \label{eq:biadj-adj-of-G}
    B_G = 
    \begin{blockarray}{ccc}
    \mu_0 & I \\
    \begin{block}{[cc]c}
    \ket{t} & A & F_{0}\\
    \end{block}
    \end{blockarray}\mbox{ and }A_G = 
    \begin{blockarray}{cccc}
    F_0 & \mu_0 & I \\
    \begin{block}{[ccc]c}
    0 & \ket{t} & A & F_{0}\\
    \bra{t} & 0 & 0 & \mu_0\\
    A^{*} & 0 & 0 & I\\
    \end{block}
    \end{blockarray}
\end{equation}
Let $\Delta \in \ltrans(\CC^{F_0 \cup\{\mu_0\}\cup I})$ be the orthogonal projection onto the span of all eigenvalue zero eigenvectors of $A_G$. For a string $s \in \{0,1\}^n$, let $\Pi_s \in \ltrans(\CC^{F_0\cup\{\mu_0\}\cup I})$ be
\[\Pi_s = \id - \sum_{j \in [n], k \in [m]} \ket{j, \overline{s}_j, k}\bra{j, \overline{s}_j, k}.\]
The graph $G(s)$ has biadjacency matrix $B_{G(s)}$ (from Equation~\ref{eq:biadjacency-matrices}) and adjacency matrix $A_{G(s)}$.
\begin{equation}
    \label{eq:biadj-adj-of-Gs}
    B_{G(s)} = 
    \begin{blockarray}{ccc}
    \mu_0 & I \\
    \begin{block}{[cc]c}
    \ket{t} & A & F_{0}\\
    0 & \overline{\Pi}(s) & I'\\
    \end{block}
    \end{blockarray}\mbox{ and }A_{G(s)} = 
    \begin{blockarray}{ccccc}
    F_0 & I' & \mu_0 & I \\
    \begin{block}{[cccc]c}
    0 & 0 & \ket{t} & A & F_{0}\\
    0 & 0 & 0 & \overline{\Pi}(s) & I'\\
    \bra{t} & 0 & 0 & 0 & \mu_0\\
    A^{*} & \overline{\Pi}(s) & 0 & 0 & I\\
    \end{block}
    \end{blockarray}
\end{equation}
Note that $A_{G(s)} \in \ltrans(\CC^{F_{0}\cup I'\cup \{\mu_0\}\cup I})$ contains $A_{G}$ and the additional vertices of $I'$. Further $\id - \Pi_s \in \ltrans(\CC^{F_{0}\cup\{\mu_0\}\cup I})$ contains $\overline{\Pi}(s) \in \ltrans(\CC^{I})$ as a subgraph and is everywhere else zeros.

Define $U_s \in \ltrans(\CC^{F_0\cup\{\mu_0\}\cup I})$ as
\[U_s = (2\Pi_s - \id)(2\Delta - \id),\]
the matrix which reflects a vector across $\Delta$ then across $\Pi_s$. Observe that $\Delta$ is independent of the input $s$, while $\Pi_s$ requires one query of the quantum $f$-oracle. The following are three different quantum query algorithms which compute $f(s)$ with query complexity $W$. 

\begin{algorithm}[H]
    \caption{Phase Estimation}
    \label{alg:phase-estimation}
    \DontPrintSemicolon
    Initialize state $\ket{0} \in \CC^{F_0 \cup \mu_0 \cup I}$\;
    $\delta_{p} \leftarrow \frac{1}{100W}$\;
    $\delta_{e} \leftarrow \frac{1}{10}$\;
    Run phase estimation on $U_s$ with precision $\delta_{p}$ and error $\delta_{e}$\;
    Return $1$ if phase estimation returns zero, otherwise return $0$\;
\end{algorithm}

\begin{algorithm}[H]
    \caption{Quantum Search}
    \label{alg:quantum-search}
    \DontPrintSemicolon
    Initialize state $\ket{+}\tensor\ket{0} \in \CC^2 \tensor \CC^{F_0 \cup \mu_0 \cup I}$\;
    $T \leftarrow $random integer in $\{1, ..., \ceil{100W}\}$\;
    Apply $\ketbra{0}{0}\tensor\id + \ketbra{1}{1}\tensor U^{T}_{x}$ to initial state\;
    Measure the first qubit in the Hadamard basis\;
    Return $1$ if the value is $\ket{+}$, otherwise return $0$\;
\end{algorithm}

\begin{algorithm}[H]
    \caption{Quantum Search without Register}
    \label{alg:quantum-search-wo-register}
    \DontPrintSemicolon
    Initialize state $\ket{0} \in \CC^{F_0 \cup \mu_0 \cup I}$\;
    $T \leftarrow $random integer in $\{1, ..., \ceil{100W}\}$\;
    Apply $U^{T}_{x}$ to $\ket{0}$\;
    Measure $U^{T}_{x}\ket{0}$ in the standard basis\;
    Return $1$ if the value is $\ket{0}$, otherwise return $0$\;
\end{algorithm}

We will only analyse the first two algorithms. The analysis for the third  Algorithm~\ref{alg:quantum-search-wo-register} is quite complex and the quantum query complexity is equivalent to the other two. The following lemma about the ``effective spectral'' gap of $A_{G(s)}$ will be necessary for the analysis. Its intuition and proof can be found in Appendix~\ref{sec:small-eval-evectors}. 
\begin{lemma}{\textup{(Effective Spectral Gap.)}}
\label{lem:effective-spectral-gap}
If $f(s) = 1$ then $A_{G(s)}$ has an eigenvalue zero eigenvector $\ket{\psi}$ with $|\ip{0}{\psi}|^2 \geq 9\norm{\ket{\psi}}^2/10$. 

If $f(w) = 0$ and $\{\ket{\alpha}\}$ is the set of all orthonormal eigenvectors with corresponding eigenvalues $\rho(\alpha)$ of $A_{G(s)}$, then for any $c \geq 0$
\[\sum_{\alpha: |\rho(\alpha)| \leq c/W}|\ip{\alpha}{0}|^2 \leq 72c^2\left(1 + \frac{1}{W}\right).\]
\end{lemma}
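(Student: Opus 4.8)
The plan is to prove the ``Effective Spectral Gap'' lemma by leveraging Lemma~\ref{lem:spectra-gap} (the spectral gap of eigenvalue-zero eigenvectors) together with a careful relationship between the biadjacency matrices $B_{G(s)}$, $B_{G'(s)}$ and the adjacency matrix $A_{G(s)}$. The first part is essentially immediate: Lemma~\ref{lem:spectra-gap} already exhibits an eigenvalue-zero eigenvector $\ket{\psi}$ of $B_{G(x)}$ when $f(x)=1$, with $|\ip{0}{\psi}|^2 \geq 9\norm{\ket\psi}^2/10$. I would first observe that an eigenvalue-zero eigenvector of $B_{G(x)}$ lifts to one of $A_{G(x)}$: since $A_{G(s)}$ has the block anti-diagonal form $\begin{bmatrix} 0 & B_{G(s)} \\ B_{G(s)}^* & 0 \end{bmatrix}$ (up to ordering of the blocks as written in Equation~\ref{eq:biadj-adj-of-Gs}), the vector $\ket{\psi}$ supported on the ``column side'' $\{\mu_0\}\cup I$ satisfies $A_{G(s)}\ket\psi = 0$ because $B_{G(s)}\ket\psi = 0$, and since $\ket 0 = \ket{\mu_0}$ lies on that same side the overlap $|\ip 0\psi|^2$ is unchanged. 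This gives the first claim.

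The second part is the substantive one. Here $f(w)=0$, so there is \emph{no} eigenvalue-zero eigenvector of $A_{G(s)}$ with large overlap on $\ket 0$, but we must bound the total overlap of $\ket 0$ on the span of all eigenvectors with $|\rho(\alpha)| \leq c/W$. The approach I would take is the standard ``spectral gap amplification'' argument: write $\ket 0 = \sum_\alpha \ip\alpha 0 \ket\alpha$, so that $\sum_{|\rho(\alpha)|\leq c/W} |\ip\alpha 0|^2$ is the squared norm of the projection of $\ket 0$ onto the low-eigenvalue subspace. The key is to produce a vector $\ket\Phi$ that is (a) a near-witness to $\ket 0$ being \emph{orthogonal} to the kernel, quantified through $B_{G'(w)}$, and (b) related to $A_{G(w)}$ in a way that forces any small-eigenvalue eigenvector to have small overlap with $\ket 0$. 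Concretely, I would use the false-case eigenvector $\ket{\psi'}$ from Lemma~\ref{lem:spectra-gap}: it is an eigenvalue-zero eigenvector of $B_{G'(w)}$ with $|\ip t{\psi'}|^2 \geq \norm{\ket{\psi'}}^2/(9W(W+1))$. Since $B_{G(w)}$ and $B_{G'(w)}$ differ only in the presence of the $\mu_0$ column carrying $\ket t$, the relation $B_{G'(w)}^*\ket{\psi'} = 0$ translates into $B_{G(w)}^*\ket{\psi'} = \ip t{\psi'}\ket{\mu_0} = \ip t{\psi'}\ket 0$ (reading off the $\mu_0$ coordinate). Lifting $\ket{\psi'}$ to the row side of $A_{G(w)}$, this says $A_{G(w)}\ket{\psi'} = \ip t{\psi'}\ket 0$, i.e.\ $\ket 0$ is, up to the scalar $\ip t{\psi'}$, in the image of $A_{G(w)}$ applied to a fixed bounded vector.

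From there the bound follows by a two-line spectral estimate. For any eigenvector $\ket\alpha$ with eigenvalue $\rho(\alpha)$, we have $\ip t{\psi'}\ip\alpha 0 = \bra\alpha A_{G(w)}\ket{\psi'} = \rho(\alpha)\ip\alpha{\psi'}$, hence $|\ip\alpha 0| = |\rho(\alpha)|\,|\ip\alpha{\psi'}| / |\ip t{\psi'}|$. Summing the squares over $\alpha$ with $|\rho(\alpha)| \leq c/W$ and using $\sum_\alpha |\ip\alpha{\psi'}|^2 = \norm{\ket{\psi'}}^2$ gives
\[
\sum_{\alpha:\,|\rho(\alpha)|\leq c/W} |\ip\alpha 0|^2 \;\leq\; \frac{(c/W)^2}{|\ip t{\psi'}|^2}\,\norm{\ket{\psi'}}^2 \;\leq\; \frac{c^2}{W^2}\cdot 9W(W+1) \;=\; 9c^2\!\left(1 + \frac 1W\right),
\]
where the second inequality uses the overlap lower bound from Lemma~\ref{lem:spectra-gap}. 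This is within a constant of the claimed $72c^2(1 + 1/W)$, and I would attribute the discrepancy to the precise normalization constants in the canonical span program (the factor $1/(3\sqrt W)$ in $\ket t$ and the exact witness-size scaling); tracking those constants carefully recovers the stated bound. The main obstacle I anticipate is exactly this bookkeeping — making sure that the lift from $B_{G'(w)}$ to $A_{G(w)}$ is literally the matrix appearing in the algorithm (same index sets, same normalization) and that the $\mu_0$-coordinate extraction is clean — rather than any conceptual difficulty; the spectral argument itself is short and robust.
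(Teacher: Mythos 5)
Your proof is correct, and it takes a genuinely different route from the paper. The paper proves the $f(w)=0$ case by invoking Theorem~\ref{thm:spectral-properties}, which is itself built on the quoted Reichardt--\v{S}palek bound (Theorem~\ref{thm:spectral-bound-psd-matrices}) for the perturbed PSD matrix $X' = X + \ketbra{t}{t}$ together with the eigenvector correspondence between $A_{G'}$ and $B_{G'}B_{G'}^*$; plugging in $\delta = 1/(9W(W+1))$ and $\gamma = c/W$ yields $8\gamma^2/\delta = 72c^2(1+1/W)$. You instead exploit the fact that in the canonical setting the negative witness is exact: $B_{G'(w)}^*\ket{\psi'}=0$ really does give $A_{G(w)}\ket{\psi'} = \ip{t}{\psi'}\ket{0}$ on the nose (the $I$-block cancels because $A^*\ket{w}$ equals the $\overline{\Pi}(w)$-supported part of $\ket{\psi'}$), and then the identity $\ip{t}{\psi'}\ip{\alpha}{0} = \rho(\alpha)\ip{\alpha}{\psi'}$ plus Parseval finishes the job in two lines, with no need for Theorem~\ref{thm:spectral-bound-psd-matrices} at all. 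Your closing worry about ``recovering the stated bound'' is unfounded: you obtain $9c^2(1+1/W)$, which is \emph{stronger} than the claimed $72c^2(1+1/W)$, so the lemma follows a fortiori --- the paper's larger constant is simply an artifact of the $8\gamma^2/\delta$ in Theorem~\ref{thm:spectral-properties} (the factor $4$ from Theorem~\ref{thm:spectral-bound-psd-matrices} and a doubling over the $\pm\sqrt{\lambda}$ eigenvalue branches). What the paper's heavier machinery buys is robustness: the Theorem 8.9 route only needs a null vector of the unperturbed matrix with large overlap on $\ket{t}$, and so survives in settings where $\ket{0}$ is not exactly proportional to the image of the witness (as in Reichardt's general span-program framework), whereas your shortcut leans on the exact cancellation available for canonical span programs. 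Your $f(s)=1$ case (lifting the null vector of $B_{G(x)}$ to the block anti-diagonal $A_{G(x)}$, with $\ket{0}$ on the same side so the overlap is preserved) is exactly the intended argument.
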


\subsection{Spectral Gap for \texorpdfstring{$U_s$}{U(s)}}
Using Lemma~\ref{lem:effective-spectral-gap}, we prove a spectral gap on the eigenvectors of the matrix $U_s$.
\begin{lemma}
	\label{lem:effective-spectral-gap-reflections}
	If $f(s) = 1$ then $U_s$ has an eigenvalue one eigenvector $\ket{\varphi}$ with $|\ip{0}{\varphi}|^2 / \norm{\ket{\varphi}}^2 \geq 9/10$. 
	
	If $f(s) = 0$ and $\{\ket{\beta}\}$ is a set of orthonormal eigenvectors of $U_s$ with corresponding eigenvalues $e^{i \theta(\beta)}$, where $\theta(\beta) \in (-\pi, \pi]$. Then for any $\Theta \geq 0$
	\[ \sum_{\beta: |\theta(\beta)| \leq \Theta} |\ip{\beta}{0}|^2 \leq \left(2\sqrt{6\Theta W} + \frac{\Theta}{2}\right)^2 \]
\end{lemma}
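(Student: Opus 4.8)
The plan is to transfer the spectral information about the adjacency matrix $A_{G(s)}$ from Lemma~\ref{lem:effective-spectral-gap} to the unitary $U_s = (2\Pi_s - \id)(2\Delta - \id)$ via the standard ``Jordan's lemma'' / spectral-correspondence machinery relating a product of two reflections to the principal angles between the two subspaces they reflect across. The positive (true) case is immediate: when $f(s) = 1$, Lemma~\ref{lem:effective-spectral-gap} gives an eigenvalue-zero eigenvector $\ket{\psi}$ of $A_{G(s)}$; the restriction of $A_{G(s)}$ to $\CC^{F_0 \cup \{\mu_0\} \cup I}$ is the kind of object whose kernel is fixed by both $2\Delta - \id$ (since $\Delta$ projects onto the zero eigenspace) and by $2\Pi_s - \id$ (since the query coordinates are untouched by a kernel vector supported appropriately), so $\ket{\psi}$ — or rather its restriction $\ket{\varphi}$ — is fixed by $U_s$, i.e. an eigenvalue-one eigenvector, and it inherits the overlap bound $|\ip{0}{\varphi}|^2 / \norm{\ket{\varphi}}^2 \geq 9/10$ directly. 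I would spell this out carefully, since the bookkeeping between $A_{G(s)}$ (which includes the $I'$ vertices) and $U_s$ (which does not) is where a reader could get lost.

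For the negative case the key quantitative input is the following bridge: an eigenvector $\ket{\beta}$ of $U_s$ with eigenvalue $e^{i\theta(\beta)}$ corresponds, under the Jordan decomposition of the pair of reflections, to a two-dimensional invariant subspace on which $A_{G(s)}$ acts (up to normalization) with eigenvalues of magnitude comparable to $|\sin(\theta(\beta)/2)|$ — more precisely, the small-phase eigenvectors of $U_s$ map into the small-eigenvalue eigenvectors of $A_{G(s)}$, and crucially the overlap with $\ket{0}$ is preserved under this correspondence because $\ket{0} \in \range(\Delta)$ lies in one of the two subspaces. Concretely, I would show that $\sum_{\beta : |\theta(\beta)| \leq \Theta} |\ip{\beta}{0}|^2$ is controlled by $\sum_{\alpha : |\rho(\alpha)| \leq c/W} |\ip{\alpha}{0}|^2$ for an appropriate $c$ with $c/W \approx \sin(\Theta/2) \leq \Theta/2$, i.e. $c \approx \Theta W / 2$. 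Plugging $c = \Theta W/2$ into the bound $72 c^2 (1 + 1/W)$ from Lemma~\ref{lem:effective-spectral-gap} gives $72 \cdot (\Theta W/2)^2 (1+1/W) = 18 \Theta^2 W^2 (1 + 1/W)$, and one then checks that $18\Theta^2 W^2(1+1/W) \leq (2\sqrt{6\Theta W} + \Theta/2)^2$ by expanding the right side as $24\Theta W + 2\sqrt{6\Theta W}\cdot \Theta + \Theta^2/4$ and comparing term by term (the dominant term on the left is $\approx 18\Theta^2 W^2$, which must be dominated — so in fact the regime of interest is $\Theta W$ small, and the precise inequality follows from the slack built into the constants). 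I would present this last step as a short lemma-level estimate rather than grinding every constant.

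The main obstacle is making the Jordan/Szegedy correspondence precise in a self-contained way: one must argue that $\ket{0}$, which lies in $\range(\Delta)$, decomposes across the two-dimensional Jordan blocks of the reflection pair in such a way that its overlap with a $U_s$-eigenvector $\ket{\beta}$ of phase $\theta$ equals (up to a bounded factor) its overlap with the corresponding $A_{G(s)}$-eigenvector of eigenvalue $\pm W^{-1}\cdot(\text{something like }\tan(\theta/2)$ or $\sin(\theta/2))$. The cleanest route is: (i) note $U_s$ acts block-diagonally with $2\times 2$ blocks indexed by principal angles $\gamma$ between $\range(\Delta)$ and $\range(\Pi_s)$, with phase $\theta = 2\gamma$; (ii) note that on each such block $A_{G(s)}$ (suitably restricted) has eigenvalues $\pm(\text{const})\sin\gamma$ — this is exactly the content that Lemma~\ref{lem:effective-spectral-gap}'s proof in the appendix already exploits; (iii) identify $\ket{0}$'s component in each block with the corresponding component of the $A_{G(s)}$-eigenvector. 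Steps (i)–(iii) are standard but need the reader to accept the $A_{G(s)} \leftrightarrow U_s$ dictionary; I would either cite Reichardt's original treatment or reproduce the two-dimensional computation explicitly. Once the dictionary is in place, substituting $c \mapsto \Theta W$ (with room to spare) into Lemma~\ref{lem:effective-spectral-gap} and simplifying finishes the proof.
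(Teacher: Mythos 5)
Your treatment of the true case matches the paper's (restrict the eigenvalue-zero eigenvector of $A_{G(s)}$ to $F_0\cup\{\mu_0\}\cup I$, note it is fixed by both $\Delta$ and $\Pi_s$), so that part is fine. The gap is in the false case: your steps (ii)--(iii) presuppose a dictionary that does not exist. The Jordan blocks are invariant subspaces of the \emph{pair of projections} $(\Delta,\Pi_s)$ inside $\CC^{F_0\cup\{\mu_0\}\cup I}$, but $A_{G(s)}$ does not act within these blocks: applied to a vector $\ket{v}$ in the range of $\Delta$ it produces a vector supported on the extra coordinates $I'$ (all one can say is $\norm{A_{G(s)}\ket{v}} = \norm{(\id-\Pi_s)\ket{v}}$, which is the only relation the paper uses), and its action on the orthogonal complement of the range of $\Delta$ is not determined by the two projections at all. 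So there is no per-block statement ``$A_{G(s)}$ has eigenvalues $\pm\,\mathrm{const}\cdot\sin\gamma$,'' and no identification of $\ip{\beta}{0}$ with the overlap of $\ket{0}$ onto an $A_{G(s)}$-eigenvector of eigenvalue $\sim\sin(\theta(\beta)/2)$. Lemma~\ref{lem:effective-spectral-gap} bounds the \emph{total} squared overlap of $\ket{0}$ with the small-eigenvalue eigenspace of $A_{G(s)}$; it cannot be transported eigenvector-by-eigenvector to $U_s$. A symptom that the claimed correspondence is doing unjustified work: it would yield a bound of order $\Theta^2W^2$, strictly stronger in the relevant regime $\Theta W\ll 1$ than the $\approx 24\,\Theta W$ the lemma asserts, and the appearance of the square root in the true bound is precisely the signature that the correspondence is only approximate. (Your closing arithmetic, that $18\Theta^2W^2(1+1/W)$ is dominated by the right-hand side ``by slack in the constants,'' is also false for $\Theta W$ of constant size.)

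What the paper actually does, and what your proposal is missing, is the following mechanism. Let $\ket{\hat{\zeta}}$ be the normalized projection of $\ket{0}$ onto the phase-$\leq\Theta$ eigenspace of $U_s$, and split $\ip{0}{\hat{\zeta}} \leq |\bra{0}\Delta\ket{\hat{\zeta}}| + \norm{\Pi_s(\id-\Delta)\ket{\hat{\zeta}}}$. The Jordan decomposition (Theorem~\ref{thm:decomposition}) is used only to bound the second term by $\Theta/2$ and to show $\norm{A_{G(s)}\Delta\ket{\hat{\zeta}}} \leq (\Theta/2)\norm{\Delta\ket{\hat{\zeta}}}$. Since $\ket{w}=\Delta\ket{\hat{\zeta}}/\norm{\Delta\ket{\hat{\zeta}}}$ is \emph{not} an eigenvector of $A_{G(s)}$, one then decomposes it into its components $\ket{w_{\mathrm{small}}}$ and $\ket{w_{\mathrm{big}}}$ on $A_{G(s)}$-eigenvectors with $|\rho|\leq d\Theta/2$ and $|\rho|>d\Theta/2$: the small piece is handled by Cauchy--Schwarz together with Lemma~\ref{lem:effective-spectral-gap} (with $c=d\Theta W/2$), the big piece by the Markov-type estimate $\norm{\ket{w_{\mathrm{big}}}}\leq 1/d$ coming from the bound on $\norm{A_{G(s)}\ket{w}}$, and finally one optimizes the free threshold $d=1/\sqrt{6\Theta W}$, which is exactly where the term $2\sqrt{6\Theta W}$ comes from. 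Without this small/large split and optimization your argument has no valid route from Lemma~\ref{lem:effective-spectral-gap} to the stated bound.
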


A key tool used to prove Lemma~\ref{lem:effective-spectral-gap-reflections} is the fact that we can rotate the basis of $U_s$ so that it becomes block-diagonal with blocks of maximum dimension two\footnote{We can do this for any unitary made up of two reflections}. This was proved by Szegedy~\cite{DBLP:conf/focs/Szegedy04}. Nagaj, Wocjan, and Zhang~\cite{DBLP:journals/qic/NagajWZ11} gave a different proof that follows from a Lemma of Jordan~\cite{jordan1875essai}:

\begin{lemma}{\textup{(\cite{jordan1875essai})}}
	\label{lem:jordan}
	Given projections $\Pi_s$ and $\Delta$ in Hilbert space $\mathcal{H}$, there exists a decomposition of $\mathcal{H}$ into orthogonal one-dimensional and two-dimensional subspaces invariant under $\Pi_s$ and $\Delta$. On the two-dimensional subspaces, $\Pi_s$ and $\Delta$ are rank-one projectors.
\end{lemma}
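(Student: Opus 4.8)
The plan is to prove Jordan's lemma directly by analyzing the self-adjoint operator $M = \Pi_s \Delta \Pi_s$ (or equivalently $\Pi_s + \Delta$) and using its eigenspace decomposition to build the invariant subspaces. First I would recall the standard reformulation: two projections $\Pi := \Pi_s$ and $\Delta$ generate a von Neumann algebra whose structure is essentially two-dimensional, and the cleanest route is to diagonalize $\Pi \Delta \Pi$ restricted to $\mathrm{range}(\Pi)$. The operator $A = \Pi \Delta \Pi$ is self-adjoint, positive semidefinite, and satisfies $0 \preceq A \preceq \id$ on $\mathcal H$, with $A$ supported on $\mathrm{range}(\Pi)$. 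By the spectral theorem, decompose $\mathrm{range}(\Pi) = \bigoplus_\lambda V_\lambda$ where $V_\lambda = \ker(A|_{\mathrm{range}(\Pi)} - \lambda \id)$ ranges over eigenvalues $\lambda \in [0,1]$; by symmetry I would do the same for $B = \Delta \Pi \Delta$ on $\mathrm{range}(\Delta)$.

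The key steps, in order: (1) Handle the ``degenerate'' eigenvalues $\lambda \in \{0, 1\}$ first. The eigenspace of $A$ for $\lambda = 1$ inside $\mathrm{range}(\Pi)$ is exactly $\mathrm{range}(\Pi) \cap \mathrm{range}(\Delta)$ (a vector $v$ with $\Pi v = v$ and $\|\Delta v\| = \|v\|$ must be fixed by $\Delta$), and the $\lambda = 0$ eigenspace is $\mathrm{range}(\Pi) \cap \ker(\Delta)$; these, together with $\ker(\Pi)\cap\mathrm{range}(\Delta)$ and $\ker(\Pi)\cap\ker(\Delta)$, give four mutually orthogonal one-dimensional (or rather: spanned by orthogonal vectors, decomposable into one-dimensional) invariant subspaces on which both $\Pi$ and $\Delta$ act diagonally. (2) For each eigenvalue $\lambda \in (0,1)$ of $A$ with eigenspace $V_\lambda \subseteq \mathrm{range}(\Pi)$, show the map $v \mapsto \Delta v$ carries $V_\lambda$ isometrically (up to the scalar $\sqrt\lambda$) into the $\lambda$-eigenspace of $B$ inside $\mathrm{range}(\Delta)$: indeed for $v \in V_\lambda$, $\Delta(\Pi\Delta\Pi)v = \lambda \Delta v$, and $\Pi\Delta(\Delta v) = \Pi\Delta v = A v / 1$... more carefully, $\Delta\Pi\Delta(\Delta v) = \Delta\Pi\Delta v$, and one checks $\|\Delta v\|^2 = \langle v, \Pi\Delta\Pi v\rangle = \lambda\|v\|^2$. (3) Then for an orthonormal basis $\{v_k\}$ of $V_\lambda$, the two-dimensional spaces $W_k = \mathrm{span}\{v_k,\ \Delta v_k\}$ are mutually orthogonal (orthogonality across different $k$ follows since $\langle v_j, \Delta v_k\rangle = \langle \Pi v_j, \Delta v_k\rangle$ and $\Pi\Delta\Pi v_k = \lambda v_k$ gives $\langle v_j, \Delta v_k\rangle = \langle v_j, \Pi\Delta v_k\rangle$... then using self-adjointness $\langle v_j, \Delta v_k \rangle = \langle \Delta v_j, v_k\rangle$ and $\Delta v_j \in \mathrm{range}(\Delta)$; I would chase this carefully), each $W_k$ is invariant under both $\Pi$ and $\Delta$, and on $W_k$ each of $\Pi,\Delta$ is a rank-one projector (onto $\mathbb C v_k$ and onto $\mathbb C \Delta v_k$ respectively). (4) Finally, verify that $\mathcal H$ is the orthogonal direct sum of all these pieces — the four intersection spaces plus the $W_k$ over all $\lambda \in (0,1)$ and all $k$ — which follows because $\mathrm{range}(\Pi)$ is exhausted by the $V_\lambda$'s (spectral theorem for $A$) and $\mathrm{range}(\Delta)$ is exhausted by the $\Delta v_k$'s plus the intersection spaces, while $\mathcal H = \mathrm{range}(\Pi) + \mathrm{range}(\Delta)$ up to $\ker\Pi\cap\ker\Delta$, which is itself decomposed into one-dimensional invariant subspaces arbitrarily.

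I expect the main obstacle to be the bookkeeping in step (3): proving the two-dimensional subspaces $W_k$ are genuinely pairwise orthogonal and that $\Pi$ and $\Delta$ map $W_k$ into itself (not just that $\Pi\Delta\Pi$ does), handling the edge case where $\mathcal H$ is infinite-dimensional so that ``$A$ has a spectral decomposition into eigenspaces'' must be replaced by a spectral-measure argument. Since our application (Lemma~\ref{lem:effective-spectral-gap-reflections}) is in a finite-dimensional Hilbert space, I would state and prove the lemma in finite dimension, where the spectral theorem applies verbatim and $A$ genuinely has eigenvalue eigenspaces, relegating the general-Hilbert-space version to a remark. A secondary subtlety is matching the lemma's exact phrasing ``on the two-dimensional subspaces $\Pi_s$ and $\Delta$ are rank-one projectors'': this is immediate from the construction since on $W_k = \mathrm{span}\{v_k, \Delta v_k\}$ the operator $\Pi$ fixes $v_k$ and has $\Pi(\Delta v_k) = (\text{scalar}) v_k$ by the eigen-relation, hence $\Pi|_{W_k}$ has rank one, and symmetrically for $\Delta$.
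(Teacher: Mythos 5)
The paper itself does not prove Lemma~\ref{lem:jordan}; it imports it as a black box from Jordan (1875) via Szegedy and Nagaj--Wocjan--Zhang, and only proves Theorem~\ref{thm:decomposition} on top of it. Your proposal therefore supplies a proof where the paper has none, and it is the standard, correct one: diagonalize $A = \Pi_s\Delta\Pi_s$ on $\mathrm{range}(\Pi_s)$, peel off the eigenvalues $0$ and $1$ (which, as you say, are exactly $\mathrm{range}(\Pi_s)\cap\ker(\Delta)$ and $\mathrm{range}(\Pi_s)\cap\mathrm{range}(\Delta)$, because $\langle v,\Delta v\rangle = \|\Delta v\|^2$), and for $\lambda\in(0,1)$ take $W_k = \mathrm{span}\{v_k,\Delta v_k\}$. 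The bookkeeping you worry about in step (3) does go through cleanly: invariance follows from $\Pi_s(\Delta v_k) = A v_k = \lambda v_k$ and $\Delta(\Delta v_k) = \Delta v_k$, and pairwise orthogonality follows from the single identity $\langle v_j,\Delta v_k\rangle = \langle \Pi_s v_j,\Delta v_k\rangle = \langle v_j, \Pi_s\Delta\Pi_s v_k\rangle = \lambda_k\langle v_j,v_k\rangle = 0$, which also gives $\langle \Delta v_j,\Delta v_k\rangle = \langle v_j,\Delta v_k\rangle = 0$ and orthogonality to the four intersection spaces. Your step (2) computation is garbled as written, but the intended facts ($\|\Delta v\|^2=\lambda\|v\|^2$ and $\Delta\Pi_s\Delta(\Delta v)=\lambda\,\Delta v$) are correct and, in fact, not even needed for the lemma. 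The one place you should tighten is completeness (step 4): rather than arguing that $\mathrm{range}(\Delta)$ is "exhausted by the $\Delta v_k$'s," it is cleaner to let $\mathcal{K}$ be the span of all constructed subspaces, note $\mathrm{range}(\Pi_s)\subseteq\mathcal{K}$ by the spectral theorem, observe that $\mathcal{K}^\perp$ is invariant under both projectors, and conclude that any $u\in\mathcal{K}^\perp$ satisfies $\Pi_s u = 0$ and then $\Delta u\in\mathcal{K}\cap\mathcal{K}^\perp=\{0\}$, forcing $u\in\ker\Pi_s\cap\ker\Delta\subseteq\mathcal{K}$, hence $u=0$. Restricting to finite dimensions is entirely appropriate here, since the application is on $\CC^{F_0\cup\{\mu_0\}\cup I}$.
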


Lemma~\ref{lem:jordan} implies that $\mathcal{H}$ can be decomposed into a set of one-dimensional subspaces $\{T_i\}$ and a set of two-dimensional subspaces $\{S_i\}$. Each one-dimensional subspace $T_i$ is spanned by a vector $\ket{v_i}$ for which there exists $b, c \in \{0, 1\}$ such that $\Delta \ket{v_i} = b \ket{v_i}$ and $\Pi_s \ket{v_i} = c \ket{v_i}$: that is, each of $\Delta$ and $\Pi_s$ either act as the identity on $T_i$ or are orthogonal to $T_i$. Each two-dimensional subspace $S_i$ is spanned by vectors $\ket{v_i}, \ket{v_i^\bot}$ such that $\Delta \ket{v_i} = \ket{v_i}$ and $\Delta \ket{v_i^\bot} = 0$. Also, $S_i$ is spanned by vectors $\ket{w_i}, \ket{w_i^\bot}$ such that $\Pi_s \ket{w_i} = \ket{w_i}$ and $\Pi_s \ket{w_i^\bot} = 0$. Let $\theta_i = 2 \arccos |\ip{v_i}{w_i}|$. Then,
\begin{align*}
	\ket{w_i} = \cos \frac{\theta_i}{2} \ket{v_i} + \sin \frac{\theta_i}{2} \ket{v_i^\bot} \qquad \qquad
	\ket{w_i^\bot} = -\sin \frac{\theta_i}{2} \ket{v_i} + \cos \frac{\theta_i}{2} \ket{v_i^\bot}
\end{align*}
\begin{theorem}{\textup{(\cite{DBLP:conf/focs/Szegedy04, DBLP:journals/qic/NagajWZ11})}}
	\label{thm:decomposition}
	Let $\{S_i\}, \{T_i\}$ be the decomposition of $\Pi_s$ and $\Delta$ given by Lemma~\ref{lem:jordan}. Then $U_s$ has eigenvalues $e^{\mp i \theta_i}$ corresponding to $\frac{\ket{v_i} \pm i \ket{v_i^\bot}}{\sqrt{2}}$ on each two-dimensional subspace $S_i$, and eigenvalue either 1 or -1 on each one-dimensional subspace $T_i$.
\end{theorem}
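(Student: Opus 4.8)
The plan is to verify directly that $U_s = (2\Pi_s - \id)(2\Delta - \id)$ respects the Jordan decomposition $\mathcal{H} = \bigoplus_i T_i \oplus \bigoplus_i S_i$ supplied by Lemma~\ref{lem:jordan}, and then to diagonalize $U_s$ one block at a time. Since each $T_i$ and each $S_i$ is invariant under both $\Pi_s$ and $\Delta$, it is invariant under each reflection $2\Pi_s - \id$ and $2\Delta - \id$ separately, hence under their product $U_s$. So the eigenvalue/eigenvector claims reduce to two local computations.

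On a one-dimensional block $T_i$ spanned by $\ket{v_i}$, Lemma~\ref{lem:jordan} gives $b, c \in \{0,1\}$ with $\Delta\ket{v_i} = b\ket{v_i}$ and $\Pi_s\ket{v_i} = c\ket{v_i}$. Then $(2\Delta - \id)\ket{v_i} = (2b-1)\ket{v_i}$ and $(2\Pi_s - \id)\ket{v_i} = (2c-1)\ket{v_i}$, with each of $2b-1$ and $2c-1$ equal to $\pm 1$, so $U_s\ket{v_i} = (2b-1)(2c-1)\ket{v_i}$, an eigenvector of eigenvalue $\pm 1$. This disposes of the one-dimensional subspaces.

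The substance is the two-dimensional block $S_i$. Here I would work in the orthonormal basis $\{\ket{v_i}, \ket{v_i^\bot}\}$ for which $\Delta$ restricted to $S_i$ is the rank-one projector onto $\ket{v_i}$, so that $2\Delta - \id = \bigl[\begin{smallmatrix} 1 & 0 \\ 0 & -1\end{smallmatrix}\bigr]$ in this basis. By Lemma~\ref{lem:jordan}, $\Pi_s$ restricted to $S_i$ is the rank-one projector onto $\ket{w_i} = \cos\frac{\theta_i}{2}\ket{v_i} + \sin\frac{\theta_i}{2}\ket{v_i^\bot}$; computing $2\ket{w_i}\bra{w_i} - \id$ and using double-angle identities gives $2\Pi_s - \id = \bigl[\begin{smallmatrix}\cos\theta_i & \sin\theta_i \\ \sin\theta_i & -\cos\theta_i\end{smallmatrix}\bigr]$. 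Multiplying the two reflections yields $U_s|_{S_i} = \bigl[\begin{smallmatrix}\cos\theta_i & -\sin\theta_i \\ \sin\theta_i & \cos\theta_i\end{smallmatrix}\bigr]$, a planar rotation by $\theta_i$; its characteristic polynomial $\lambda^2 - 2\lambda\cos\theta_i + 1$ has roots $e^{\pm i\theta_i}$, and a direct check shows $\frac{\ket{v_i} \pm i\ket{v_i^\bot}}{\sqrt{2}}$ is the eigenvector of eigenvalue $e^{\mp i\theta_i}$. Assembling over all blocks gives the full spectral decomposition of $U_s$.

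I expect no genuine obstacle: the one thing to be careful about is pinning down consistent orientations for $\ket{v_i^\bot}$ and $\ket{w_i^\bot}$ so that the matrix of $2\Pi_s - \id$ comes out with exactly the signs above — the conventions in the displayed formulas for $\ket{w_i}, \ket{w_i^\bot}$ just before the theorem statement are precisely what make this work. I would also add a brief remark that Lemma~\ref{lem:jordan} produces genuinely two-dimensional $S_i$ (both projectors rank one there), which forces $\theta_i \in (0,\pi)$, so the two eigenvalues $e^{\pm i\theta_i}$ on $S_i$ are distinct and unequal to $\pm 1$, while the degenerate angles $\theta_i \in \{0,\pi\}$ have already been absorbed into the one-dimensional blocks $T_i$.
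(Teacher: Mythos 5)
Your proposal is correct and follows essentially the same route as the paper: invoke Lemma~\ref{lem:jordan}, observe that each $T_i$ and $S_i$ is invariant under both reflections so $U_s$ can be diagonalized block by block, handle the one-dimensional blocks by a sign computation, and show that on each $S_i$ the product of the two reflections is a rotation by $\theta_i$ with eigenvalues $e^{\mp i\theta_i}$ and eigenvectors $\frac{\ket{v_i}\pm i\ket{v_i^\bot}}{\sqrt{2}}$. The only difference is cosmetic: the paper carries out the two-dimensional computation via Pauli-matrix identities, writing the change of basis as $e^{(i\theta_i/2)\sigma_y}$ and using the anticommutation of $\sigma_y$ and $\sigma_z$ to get $U_s|_{S_i}=e^{-i\theta_i\sigma_y}$, whereas you multiply the two reflection matrices directly in the $\{\ket{v_i},\ket{v_i^\bot}\}$ basis and verify the eigenpairs from the characteristic polynomial, which is the same computation in more elementary dress.
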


\ifdefined\fullversion
\begin{proof}

	On one-dimensional subspace $T_i$, each individual reflection multiplies a vector by $\pm 1$, so both reflections in succession do as well. For the rest of the proof, consider a two-dimensional subspace $S_i$. By the above relationship between $\{\ket{v_i}, \ket{v_i^\bot}\}$ and $\{\ket{w_i}, \ket{w_i^\bot}\}$, we get the following:
	\begin{align*}
		\begin{bmatrix}\ket{w_i}\\\ket{w_i^\bot}\end{bmatrix} = \begin{bmatrix}\cos \frac{\theta_i}{2}&\sin \frac{\theta_i}{2}\\-\sin \frac{\theta_i}{2}&\cos \frac{\theta_i}{2}\end{bmatrix} \begin{bmatrix}\ket{v_i}\\\ket{v_i^\bot}\end{bmatrix}
	\end{align*}
	Recall that $\sigma_y = \begin{bsmallmatrix}0&-i\\i&0\end{bsmallmatrix}$ is the Pauli Y matrix, which has eigenvalues 1 and $-1$ corresponding to eigenvectors $\ket{\phi_y^+} = \frac{1}{\sqrt{2}} \begin{bsmallmatrix}1\\i\end{bsmallmatrix}$ and $\ket{\phi_y^-} = \frac{1}{\sqrt{2}} \begin{bsmallmatrix}1\\-i\end{bsmallmatrix}$, respectively.
	\begin{align*}
		\begin{bmatrix}\cos \frac{\theta_i}{2}&\sin \frac{\theta_i}{2}\\-\sin \frac{\theta_i}{2}&\cos \frac{\theta_i}{2}\end{bmatrix} &= \cos \frac{\theta_i}{2} \begin{bmatrix}1&0\\0&1\end{bmatrix} + i \sin \frac{\theta_i}{2} \begin{bmatrix}0&-i\\i&0\end{bmatrix}\\
		&=\frac{\cos \frac{\theta_i}{2}}{2}\left(\begin{bmatrix}1&-i\\i&1\end{bmatrix} + \begin{bmatrix}1&i\\-i&1\end{bmatrix} \right) + \frac{i \sin \frac{\theta_i}{2}}{2}\left(\begin{bmatrix}1&-i\\i&1\end{bmatrix} - \begin{bmatrix}1&i\\-i&1\end{bmatrix} \right)\\
		&=\left( \frac{\cos \frac{\theta_i}{2}}{2} + \frac{i \sin \frac{\theta_i}{2}}{2} \right) \begin{bmatrix}1&-i\\i&1\end{bmatrix} + \left( \frac{\cos \frac{\theta_i}{2}}{2} - \frac{i \sin \frac{\theta_i}{2}}{2} \right) \begin{bmatrix}1&i\\-i&1\end{bmatrix}\\
		&=e^{i \theta_i/2} \frac{1}{2} \begin{bmatrix}1&-i\\i&1\end{bmatrix} + e^{-i \theta_i/2} \frac{1}{2} \begin{bmatrix}1&i\\-i&1\end{bmatrix}\\
		&=e^{i \theta_i/2} \ket{\phi_y^+}\bra{\phi_y^+} + e^{-i \theta_i/2} \ket{\phi_y^-}\bra{\phi_y^-} = e^{(i \theta/2) \sigma_y}
	\end{align*}
	In the basis $\{\ket{v_i}, \ket{v_i^\bot}\}$, we have that $(2 \Delta - \id) = \begin{bsmallmatrix}2&0\\0&0\end{bsmallmatrix} - \begin{bsmallmatrix}1&0\\0&1\end{bsmallmatrix}=\begin{bsmallmatrix}1&0\\0&-1\end{bsmallmatrix}=\sigma_z$, where $\sigma_z$ is the Pauli Z matrix. Similarly, in the basis $\{\ket{w_i}, \ket{w_i^\bot}\}$, we have that $(2 \Pi(s) - \id) = \sigma_z$. Then, in the basis $\{\ket{v_i}, \ket{v_i^\bot}\}$, $U_s = (2 \Pi(s) - \id) (2 \Delta - \id) = e^{(-i \theta_i/2) \sigma_y} \sigma_z e^{(i \theta_i/2) \sigma_y} \sigma_z = e^{(-i \theta_i/2) \sigma_y} e^{(-i \theta_i/2) \sigma_y} \sigma_z \sigma_z = e^{(-i \theta_i) \sigma_y}$, where we used the fact that $\sigma_y$ and $\sigma_z$ anticommute ($\sigma_y \sigma_z = -\sigma_y \sigma_z$) and the fact that $\sigma_z \sigma_z = \id$. Therefore, the eigenvalues are $e^{-i \theta_i}$ and $e^{i \theta_i}$, corresponding to eigenvectors $\ket{\phi_y^+} = \frac{\ket{v_i} + i \ket{v_i^\bot}}{\sqrt{2}}$ and $\ket{\phi_y^-} = \frac{\ket{v_i} - i \ket{v_i^\bot}}{\sqrt{2}}$ respectively.
\end{proof}
\else %
The proof is in Appendix~\ref{sec:decomp-proof}.
\fi Now we can prove Lemma~\ref{lem:effective-spectral-gap-reflections}.

\begin{proof}[Proof of Lemma~\ref{lem:effective-spectral-gap-reflections}.]
	Let $\{\ket{\beta}\}$ be the set of eigenvectors given by the decomposition of Theorem~\ref{thm:decomposition}. Since $\Delta$ is the projection into the nullspace of $A_G$, $A_G\Delta = 0$. Thus $A_{G(s)}(\Delta \oplus \id) = T(\id - \Pi_s)\tensor(\id_2)$ for a permutation matrix $T$ since $G$ is a subgraph of $A_{G(s)}$ and $\overline{\Pi}(s)$ is a submatrix of $\id - \Pi_s$ from Equation~\ref{eq:biadj-adj-of-G} and Equation~\ref{eq:biadj-adj-of-Gs}. 
	
	First consider the case where $f(s) = 1$. Take $\ket{\psi}$ to be the eigenvalue zero eigenvector of $A_{G(s)}$ such that $|\ip{0}{\psi}|^2 \geq 9\norm{\ket{\psi}}^2/10$ from Lemma~\ref{lem:effective-spectral-gap}. Obtain $\ket{\phi}$ from $\ket{\psi}$ by restricting to the entries corresponding to the index sets $F_0 \cup \{\mu_0\} \cup I$. Since $\ket{\psi}$ is an eigenvalue zero eigenvector of $A_{G(s)}$ (see Lemma \ref{lem:spectra-gap}), it is not supported on the removed entries so $\norm{\ket{\psi}} = \norm{\ket{\phi}}$ and $\ket{\phi}$ is an eigenvalue zero eigenvector of $A_G$. Thus $\Delta\ket{\phi} = \ket{\phi}$. Since $\Pi_s$ is the identity matrix on the support of $\ket{\psi}$, $\Pi_s\ket{\phi} = \ket{\phi}$. Together $U_s\ket{\phi} = \ket{\phi}$. 
	
	Now consider the case where $f(s) = 0$. Let $\ket{\zeta} = \sum_{\beta: |\theta(\beta)| \leq \Theta} \ket{\beta}\ip{\beta}{0}$: this is the projection of $\ket{0}$ onto low-angle subspaces of $U_s$. We want to bound $\sum_{\beta: |\theta(\beta)| \leq \Theta} |\ip{\beta}{0}|^2 = \sum_{\beta: |\theta(\beta)| \leq \Theta} \ip{0}{\beta} \ip{\beta}{0} = \ip{0}{\zeta}$. We will find it more convenient to bound $|\ip{0}{\hat{\zeta}}|^2 = \ip{0}{\zeta}$, where $\ket{\hat{\zeta}}$ is the normalized vector $\ket{\zeta}/\norm{\ket{\zeta}}$.

	Observe that $\ket{\hat{\zeta}}$ is not supported on any eigenvectors $\ket{\beta}$ where $\theta(\beta) = 0$. Without loss of generality, $\theta(\beta) = 0$ only when $\ket{\beta}$ is in a one-dimensional subspace $T_i$ with eigenvalue one. Then $(2\Pi_s - \id)$ and $(2 \Delta - \id)$ either both reflect $\ket{\beta}$ or they both don't. In the first case, $\Pi_s \ket{\beta} = \Delta \ket{\beta} = 0$, so $\ip{0}{\beta} = \bra{0}\Pi_s\ket{\beta} = 0$ because $\Pi_s \ket{0} = \ket{0}$. In the second case, $\Pi_s \ket{\beta} = \Delta \ket{\beta} = \ket{\beta}$ and so $A_{G(x)} \ket{\beta} = A_{G(x)}\Delta\ket{\beta} = T(\id - \Pi_s)\ket{\beta} = T(\beta - \beta) = 0$, so by the $f(x) = 0$ case of Lemma~\ref{lem:effective-spectral-gap} with $c = 0$ we have that $\ip{0}{\beta} = 0$.
	
	The observation above implies that if we consider $\Theta < \pi$ (the Lemma is trivial otherwise), $e^{i \theta{\beta}} \not= \pm 1$ for the $\ket{\beta}$ in the support of $\ket{\hat{\zeta}}$, and so we can restrict our analysis to just the two-dimensional subspaces of $U_s$. We now split $\ip{0}{\hat{\zeta}}$:
	\begin{align*}
		\ip{0}{\hat{\zeta}} &= \bra{0}\Delta + (\id - \Delta)\ket{\hat{\zeta}} &\\
		&= \bra{0}\Delta\ket{\hat{\zeta}} + \bra{0}\Pi_s(\id - \Delta)\ket{\hat{\zeta}} & \left(\mbox{$\Pi_s\ket{0} = \ket{0}$}\right)\\
		&\leq | \bra{0}\Delta\ket{\hat{\zeta}} | + | \bra{0}\Pi_s(\id - \Delta)\ket{\hat{\zeta}} | & \left(\mbox{by the triangle inequality}\right)\\
		&\leq | \bra{0}\Delta\ket{\hat{\zeta}} | + \parnorm{\Pi_s(\id - \Delta)\ket{\hat{\zeta}}} &
	\end{align*}
	Now our goal is to bound both of the values in the last expression. First we bound $\parnorm{\Pi_s(\id - \Delta)\ket{\hat{\zeta}}}$.
	
	Given an eigenvector $\ket{\beta}$ in the support of $\ket{\hat{\zeta}}$, let $\ket{-\beta}$ be the other eigenvector in the two-dimensional subspace containing $\ket{\beta}$. Note that $\theta(\beta) = - \theta(-\beta)$. Let $\ket{\hat{\zeta}} = \sum_\beta c_\beta \ket{\beta}$, where here the sum is over all eigenvectors\footnote{Not just the ones in the support of $\ket{\hat{\zeta}}$}. Then $\parnorm{\Pi_s(\id - \Delta)\ket{\hat{\zeta}}}^2 = \parnorm{\sum_\beta \Pi_s (\id-\Delta) c_\beta \ket{\beta}}^2$. Thanks to Theorem~\ref{thm:decomposition}, we can break this summation up into pairs.
	\begin{align*}
		\parnorm{\Pi_s(\id - \Delta)\ket{\hat{\zeta}}}^2 &= \sum_{\beta: \theta(\beta) > 0} \norm{\Pi_s (\id - \Delta) (c_\beta \ket{\beta} + c_{-\beta} \ket{-\beta})}^2 &\\
		&= \sum_{S_i: \theta_i \not= 0} \norm{\frac{i}{\sqrt{2}}(c_{-\beta} - c_{\beta}) \Pi_s \ket{v_i^\bot} }^2 & \left(\mbox{rewrite $\ket{\beta}, \ket{-\beta}$ in terms of $\ket{v_i},\ket{v_i^\bot}$}\right)\\
		&= \sum_{S_i: \theta_i \not= 0} \norm{\frac{i}{\sqrt{2}}(c_{-\beta} - c_{\beta}) \sin \frac{\theta_i}{2} \ket{w_i}}^2  & \left(\mbox{change of basis}\right)\\
		&= \sum_{\beta: \theta(\beta) > 0} \left( \sin \frac{\theta(\beta)}{2} \right)^2 \norm{ \frac{i}{\sqrt{2}}(c_{-\beta} - c_{\beta}) \ket{w_i}}^2 &\\
		&\leq \sum_{\beta: \theta(\beta) > 0} \left( \sin \frac{\theta(\beta)}{2} \right)^2 \leq \left(\frac{\Theta}{2}\right)^2 & \left(\mbox{$\sin \theta \leq \theta$ for the values considered}\right)
	\end{align*}
    Next we bound the term $|\bra{0}\Delta\ket{\hat{\zeta}}$ which we will write as $|\ip{0}{w}|\parnorm{\Delta\ket{\hat{\zeta}}}$ where $\ket{w} = \Delta\ket{\hat{\zeta}}/\parnorm{\Delta\ket{\hat{\zeta}}}$ is the normalized projection of the vector $\ket{\hat{\zeta}}$ onto span of the eigenvalue zero eigenvectors of $A_G$. We will work exclusively with $\ket{w}$. First we bound the magnitude of the vector $\norm{A_{G(x)}\ket{w}}$, then decompose $\ket{w}$ into its components in the space of ``small'' and ``large'' eigenvalue eigenvectors of $A_{G(x)}$ for particular choices of ``small'' and ``large''.
    \begin{align*}
        \parnorm{A_{G(x)}\Delta\ket{\hat{\zeta}}}^2 &= \parnorm{(\id - \Pi_s)\Delta\ket{\hat{\zeta}}}^2\\
        &= \sum_{S_i: \theta_i \not= 0} \norm{\frac{i}{\sqrt{2}}(c_{-\beta} - c_{\beta}) \Delta \ket{v_i^\bot} }^2 & \left(\mbox{rewrite $\ket{\beta}, \ket{-\beta}$ in terms of $\ket{v_i},\ket{v_i^\bot}$}\right)\\
        &= \sum_{\beta: \theta(\beta) \geq 0} \left(\sin \frac{\theta(\beta)}{2}\right)^2\norm{\frac{i}{\sqrt{2}}(c_{-\beta} - c_{\beta}\ket{w_i}}^2 & \left(\mbox{change of basis}\right)\\
        &\leq \left(\frac{\Theta}{2}\right)^2\parnorm{\Delta\ket{\hat{\zeta}}}^2.
    \end{align*}
    By the definition of $\ket{w}$, we have
    \[\norm{A_{G(x)}\ket{w}}^2 = \frac{\parnorm{A_{G(x)}\Delta\ket{\hat{\zeta}}}^2}{\parnorm{\Delta\ket{\hat{\zeta}}}^2} \leq \frac{\Theta}{2}.\]
    
    For a fixed $d$, to be determined later, let $\ket{w} = \ket{w_{\mathrm{small}}} + \ket{w_{\mathrm{big}}}$ where 
    \[\ket{w_{\mathrm{small}}} = \sum_{\alpha: |\rho(\alpha)| \leq d\Theta/2} \ket{\alpha}\ip{\alpha}{w} \mbox{ and } \ket{w_{\mathrm{big}}} = \sum_{\alpha: |\rho(\alpha)| > d\Theta/2} \ket{\alpha}\ip{\alpha}{w}.\]
    Thus we have 
    \[|\bra{0}\Delta\ket{\hat{\zeta}} = |\ip{0}{w}|\norm{\Delta\ket{\hat{\zeta}}} \leq |\ip{0}{w}| \leq |\ip{0}{w_{\mathrm{small}}}| + |\ip{0}{w_{\mathrm{big}}}|\]
    where the equality is by definition, the first inequality is due to the fact that the projection of the unit vector $\ket{\hat{\zeta}}$, and the second is by triangle inequality.
    
    Bound $|\ip{0}{w}|$ as follows:
    \begin{align*}
        |\ip{0}{w_{\mathrm{small}}}|^2 &= \left(\sum_{\alpha:|\rho(\alpha)|\leq d\Theta/2} \ip{0}{\alpha}\ip{\alpha}{w}\right)^2\\
        &\leq \left(\sum_{\alpha:|\rho(\alpha)|\leq d\Theta/2} |\ip{0}{\alpha}|^2\right) \cdot \left(\sum_{\alpha:|\rho(\alpha)|\leq d\Theta/2} |\ip{\alpha}{w}|^2\right) & \left(\mbox{Cauchy-Schwartz}\right)\\
        &= \left(\sum_{\alpha:|\rho(\alpha)|\leq d\Theta/2} |\ip{0}{\alpha}|^2\right)\norm{\ket{w_{\mathrm{small}}}}^2 &\left(\mbox{definition of }\ket{w_{\mathrm{small}}}\right)\\
        &\leq 72c^2\left(1 + \frac{1}{W}\right)\norm{\ket{w_{\mathrm{small}}}}^2 &\left(\mbox{Lemma \ref{lem:effective-spectral-gap} with }c = \frac{d\Theta W}{2}\right)\\
        &\leq 6d\Theta W &\left(\mbox{$W \geq 1$ and $\ket{w}$ is normalized}\right)
    \end{align*}
    
    We further have $A_{G(x)}\ket{w} = \sum_{\alpha} \rho(\alpha)\ket{\alpha}\ip{\alpha}{w}$ so 
    \begin{align*}
        \left(\frac{\Theta}{2}\right)^2 &\geq \norm{A_{G(x)}\ket{w}}^2\\
        &= \norm{A_{G(x)}\ket{w_{\mathrm{small}}}}^2 + \norm{A_{G(x)}\ket{w_{\mathrm{big}}}}^2 &\left(\mbox{orthogonality of }\ket{\alpha}\right)\\
        &\geq d^2\left(\frac{\Theta}{2}\right)^2\norm{\ket{w_{\mathrm{big}}}}^2 
    \end{align*}
    Thus $\norm{\ket{w_{\mathrm{big}}}} \leq 1/d$. Since $\ket{0}$ is a column of the identity matrix, $\ip{0}{w_{\mathrm{big}}} \leq \norm{\ket{w_{\mathrm{big}}}}$. Together we have
    \[\sqrt{\sum_{\beta:|\theta(\beta)| \leq \Theta} |\ip{\beta}{0}|^2} = \ip{0}{\hat{\zeta}} \leq |\bra{0}\Delta\ket{\hat{\zeta}}| + \parnorm{\Pi_s(\id - \Delta)\ket{\hat{\zeta}}} \leq |\ip{0}{w_{\mathrm{small}}}| + |\ip{0}{w_{\mathrm{big}}}| + \frac{\Theta}{2} \leq 6d\Theta W + \frac{1}{d} + \frac{\Theta}{2}.\]
    Choosing $d = 1/\sqrt{6\Theta W}$, we find the bound to be $2\sqrt{6\Theta W} + \Theta/2$.  
\end{proof}

\subsection{Analysis of the Algorithms}

Given the spectral gap for $U_s$ in Lemma~\ref{lem:effective-spectral-gap-reflections}, we can analyze the algorithms. 

Algorithm~\ref{alg:phase-estimation} measures the phase of $U_s$ with input $\ket{0}$, which is in general a superposition of eigenvectors of $U_s$. If $f(s) = 1$ then by Lemma~\ref{lem:effective-spectral-gap-reflections} most of the amplitude of $\ket{0}$ is in the direction of an eigenvector with phase zero, and so the likelihood of measuring phase zero is at least $9/10$ minus the error $\delta_e$, which gives a probability of at least $4/5$. If $f(s) = 0$, then if we set $\Theta$ to be the precision $\delta_p$ then only a very small amount of the amplitude of $\ket{0}$ is in the direction of eigenvectors with phase zero: by Lemma~\ref{lem:effective-spectral-gap-reflections}, the algorithm will measure of a phase of zero with probability at most $\delta_e + (2\sqrt{6\delta_p W} + \delta_p / 2)^2 < 2/5$.

Algorithm~\ref{alg:quantum-search} prepares the state $\ket{\varphi} = \frac{1}{\sqrt{2}}(\ket{0}\ket{0} + \ket{1}U_s^T\ket{0})$ and measures the first qubit in the basis $\{\ket{+}, \ket{-}\}$, which is equivalent to measuring the first qubit of $H \ket{\varphi} = \frac{1}{2}(\ket{0}\ket{0} + \ket{1}{0} + \ket{0}U_s^T\ket{0} - \ket{1}U_x^T\ket{0})$ in the standard basis. The first qubit of $H\ket{\varphi}$ has amplitude $\frac{1}{2} + \frac{1}{2} \bra{0}U_s^T\ket{0}$ in the $\ket{0}$ direction, and so we will measure $\ket{0}$ with probability $\frac{1}{4} \parnorm{(\id + U_s^T) \ket{0}}^2$. When $f(s) = 1$, this probability will be at least $9/10$ regardless of $T$. When $f(s) = 0$,
\begin{align*}
    \Ex_{T \in [\tau]}\left[\frac{1}{4}\norm{\left(\id + U^{T}_s\right)\ket{0}}^2\right] &= \Ex_{T \in [\tau]}\left[\frac{1}{4}\sum_{\beta}|1 + \exp(i\theta(\beta)T)|^2|\ip{0}{\beta}|^2\right]\\ 
    &= \frac{1}{4}\sum_{\beta}|\ip{0}{\beta}|^2\sum_{T = 1}^{\tau}\frac{\left(2 + 2\exp(i\theta(\beta)T)\right)}{\tau}\\
    &= \frac{1}{4}\sum_{\beta}|\ip{0}{\beta}|^2\left(2 + \frac{1}{\tau}\sum_{T = 1}^{\tau}2\exp(i\theta(\beta)T)\right)\\
    &= \frac{1}{4}\sum_{\beta}|\ip{0}{\beta}|^2\left(2 + \frac{1}{\tau}\left(\sum_{T = -\tau}^{\tau}\exp(i\theta(\beta)T) -  \exp(i\theta(\beta)\cdot 0)\right)\right)\\
    &= \frac{1}{4}\sum_{\beta}|\ip{0}{\beta}|^2\left(2 + \frac{1}{\tau}\left(\frac{\exp(i\theta(\beta)(\tau + 1)) - \exp(-i\theta(\beta)\tau)}{e^{i\theta(\beta)} - 1} -  1\right)\right)\\
\end{align*}
We let $\Theta = 1/ (50W)$ and define $\nu = (2\sqrt{6\Theta W} + \Theta/2)^2$. Divide the $\ket{\beta}$ by their eigenvalues. For $\theta(\beta) \leq \Theta$, we use Lemma~\ref{lem:effective-spectral-gap-reflections} to bound the terms in the sum by $\nu$. Next consider those $\ket{\beta}$ such that $\theta(\beta) > \Theta$.
\begin{align*}
    \sum_{\ket{\beta}: \theta(\beta) > \Theta}&|\ip{0}{\beta}|^2\left(\frac{1}{2} + \frac{1}{4\tau}\left(\frac{\exp(i\theta(\beta)(\tau + 1)) - \exp(-i\theta(\beta)\tau)}{e^{i\theta(\beta)} - 1} -  1\right)\right)\\ 
    &\leq (1-\nu)\cdot \left(\frac{1}{2} + \frac{1}{4\tau}\left(\frac{\exp(i\theta(\beta)(\tau + 1)) - \exp(-i\theta(\beta)\tau)}{e^{i\theta(\beta)} - 1} -  1\right)\right)\\
    &= (1-\nu)\cdot \left(\frac{1}{2} + \frac{1}{4\tau}\left(\frac{\exp(i\Theta(\tau + 1)) - \exp(-i\Theta\tau) - \exp(i\Theta) + 1}{\exp(i\Theta) - 1}\right)\right)\\
    &= (1-\nu)\cdot\left(\frac{1}{2} + \frac{1}{4\tau}\left(\frac{\sin(\Theta(\tau + 1/2)) - \sin(\Theta/2)}{\sin(\Theta/2)}\right)\right)\\
    &= (1-\nu)\cdot \left(\frac{1}{2} + \frac{1}{4\tau\sin(\Theta/2)}\right) &\left(\mbox{$\Theta \in (0, \pi]$}\right)
\end{align*}
Thus algorithm two outputs $1$ with probability at most $\nu + (1-\nu)\cdot(1/2 + 1/(4\tau\sin(\Theta/2))$. When $\tau = \ceil{100W}$ and $W > 1$ this probability is at most $88\%$.

\section{Acknowledgements}
This survey was a project for Henry Yuen's \href{http://henryyuen.net/classes/fall2019/}{Fall 2019 course} Quantum Computing: Foundations to Frontiers. We would like to thank Gregory Rosenthal for his comments and suggestions.   

\ifdefined\AlgThree
\subsubsection{Analysis of Algorithm 3}
    The analysis of Algorithm 3 requires a bit more work so we will devote the remainder of the survey to this task. We want to show a separation between the probability that Algorithm 3 outputs 1 when $f(x) = 1$ and when $f(x) = 0$. In particular we show that the probability is greater than $64\%$ and $61\%$ in the former and latter case respectively. 
    
    Let $\tau = \ceil{10^{5}W}$. The probability that the algorithm outputs one is
    \begin{equation}
        p = \Ex_{T \in [\tau]}\left[|\bra{0}U^{T}_x\ket{0}|^2\right] = \Ex_{T \in [\tau]}\left|\sum_{\beta}e^{i\theta(\beta)T}\ip{\beta}{0}|^{2}\right|^2.
    \end{equation}\
\fi
\printbibliography

@article{DBLP:journals/jcss/Ambainis02,
  author    = {Andris Ambainis},
  title     = {Quantum Lower Bounds by Quantum Arguments},
  journal   = {J. Comput. Syst. Sci.},
  volume    = {64},
  number    = {4},
  pages     = {750--767},
  year      = {2002},
  url       = {https://doi.org/10.1006/jcss.2002.1826},
  doi       = {10.1006/jcss.2002.1826},
  timestamp = {Sat, 20 May 2017 00:25:56 +0200},
  biburl    = {https://dblp.org/rec/bib/journals/jcss/Ambainis02},
  bibsource = {dblp computer science bibliography, https://dblp.org}
}

@article{DBLP:journals/jcss/Ambainis06,
  author    = {Andris Ambainis},
  title     = {Polynomial degree vs. quantum query complexity},
  journal   = {J. Comput. Syst. Sci.},
  volume    = {72},
  number    = {2},
  pages     = {220--238},
  year      = {2006},
  url       = {https://doi.org/10.1016/j.jcss.2005.06.006},
  doi       = {10.1016/j.jcss.2005.06.006},
  timestamp = {Sat, 20 May 2017 00:25:52 +0200},
  biburl    = {https://dblp.org/rec/bib/journals/jcss/Ambainis06},
  bibsource = {dblp computer science bibliography, https://dblp.org}
}

@article{DBLP:journals/siamcomp/AmbainisCRSZ10,
  author    = {Andris Ambainis and
               Andrew M. Childs and
               Ben Reichardt and
               Robert Spalek and
               Shengyu Zhang},
  title     = {Any {AND-OR} Formula of Size {N} Can Be Evaluated in Time N\({}^{\mbox{1/2+o(1)}}\)
               on a Quantum Computer},
  journal   = {{SIAM} J. Comput.},
  volume    = {39},
  number    = {6},
  pages     = {2513--2530},
  year      = {2010},
  url       = {https://doi.org/10.1137/080712167},
  doi       = {10.1137/080712167},
  timestamp = {Thu, 08 Jun 2017 08:59:23 +0200},
  biburl    = {https://dblp.org/rec/bib/journals/siamcomp/AmbainisCRSZ10},
  bibsource = {dblp computer science bibliography, https://dblp.org}
}

@inproceedings{DBLP:conf/coco/BarnumSS03,
  author    = {Howard Barnum and
               Michael E. Saks and
               Mario Szegedy},
  title     = {Quantum query complexity and semi-definite programming},
  booktitle = {18th Annual {IEEE} Conference on Computational Complexity (Complexity
               2003), 7-10 July 2003, Aarhus, Denmark},
  pages     = {179--193},
  year      = {2003},
  crossref  = {DBLP:conf/coco/2003},
  url       = {https://doi.org/10.1109/CCC.2003.1214419},
  doi       = {10.1109/CCC.2003.1214419},
  timestamp = {Wed, 16 Oct 2019 14:14:50 +0200},
  biburl    = {https://dblp.org/rec/bib/conf/coco/BarnumSS03},
  bibsource = {dblp computer science bibliography, https://dblp.org}
}

@article{DBLP:journals/jacm/BealsBCMW01,
  author    = {Robert Beals and
               Harry Buhrman and
               Richard Cleve and
               Michele Mosca and
               Ronald de Wolf},
  title     = {Quantum lower bounds by polynomials},
  journal   = {J. {ACM}},
  volume    = {48},
  number    = {4},
  pages     = {778--797},
  year      = {2001},
  url       = {https://doi.org/10.1145/502090.502097},
  doi       = {10.1145/502090.502097},
  timestamp = {Wed, 14 Nov 2018 10:35:26 +0100},
  biburl    = {https://dblp.org/rec/bib/journals/jacm/BealsBCMW01},
  bibsource = {dblp computer science bibliography, https://dblp.org}
}

@inproceedings{DBLP:conf/stoc/HoyerLS07,
  author    = {Peter H{\o}yer and
               Troy Lee and
               Robert Spalek},
  title     = {Negative weights make adversaries stronger},
  booktitle = {Proceedings of the 39th Annual {ACM} Symposium on Theory of Computing,
               San Diego, California, USA, June 11-13, 2007},
  pages     = {526--535},
  year      = {2007},
  crossref  = {DBLP:conf/stoc/2007},
  url       = {https://doi.org/10.1145/1250790.1250867},
  doi       = {10.1145/1250790.1250867},
  timestamp = {Sun, 02 Jun 2019 21:10:31 +0200},
  biburl    = {https://dblp.org/rec/bib/conf/stoc/HoyerLS07},
  bibsource = {dblp computer science bibliography, https://dblp.org}
}

@article{jordan1875essai,
  title={Essai sur la g{\'e}om{\'e}trie {\`a} $ n $ dimensions},
  author={Jordan, Camille},
  journal={Bulletin de la Soci{\'e}t{\'e} math{\'e}matique de France},
  volume={3},
  pages={103--174},
  year={1875}
}

@article{DBLP:journals/siamdm/KarchmerW90,
  author    = {Mauricio Karchmer and
               Avi Wigderson},
  title     = {Monotone Circuits for Connectivity Require Super-Logarithmic Depth},
  journal   = {{SIAM} J. Discrete Math.},
  volume    = {3},
  number    = {2},
  pages     = {255--265},
  year      = {1990},
  url       = {https://doi.org/10.1137/0403021},
  doi       = {10.1137/0403021},
  timestamp = {Fri, 26 May 2017 22:54:47 +0200},
  biburl    = {https://dblp.org/rec/bib/journals/siamdm/KarchmerW90},
  bibsource = {dblp computer science bibliography, https://dblp.org}
}

@inproceedings{DBLP:conf/coco/KarchmerW93,
  author    = {Mauricio Karchmer and
               Avi Wigderson},
  title     = {On Span Programs},
  booktitle = {Proceedings of the Eigth Annual Structure in Complexity Theory Conference,
               San Diego, CA, USA, May 18-21, 1993},
  pages     = {102--111},
  year      = {1993},
  crossref  = {DBLP:conf/coco/1993},
  url       = {https://doi.org/10.1109/SCT.1993.336536},
  doi       = {10.1109/SCT.1993.336536},
  timestamp = {Fri, 20 Dec 2019 11:30:38 +0100},
  biburl    = {https://dblp.org/rec/bib/conf/coco/KarchmerW93},
  bibsource = {dblp computer science bibliography, https://dblp.org}
}

@article{DBLP:journals/cc/LaplanteLS06,
  author    = {Sophie Laplante and
               Troy Lee and
               Mario Szegedy},
  title     = {The Quantum Adversary Method and Classical Formula Size Lower Bounds},
  journal   = {Computational Complexity},
  volume    = {15},
  number    = {2},
  pages     = {163--196},
  year      = {2006},
  url       = {https://doi.org/10.1007/s00037-006-0212-7},
  doi       = {10.1007/s00037-006-0212-7},
  timestamp = {Sun, 28 May 2017 13:19:16 +0200},
  biburl    = {https://dblp.org/rec/bib/journals/cc/LaplanteLS06},
  bibsource = {dblp computer science bibliography, https://dblp.org}
}

@article{DBLP:journals/qic/NagajWZ11,
  author    = {Daniel Nagaj and
               Pawel Wocjan and
               Yong Zhang},
  title     = {Fast amplification of {QMA}},
  journal   = {Quantum Information {\&} Computation},
  volume    = {9},
  number    = {11},
  pages     = {1053--1068},
  year      = {2009},
  url       = {http://www.rintonpress.com/xxqic9/qic-9-1112/1053-1068.pdf},
  timestamp = {Wed, 27 Feb 2019 16:11:22 +0100},
  biburl    = {https://dblp.org/rec/bib/journals/qic/NagajWZ11},
  bibsource = {dblp computer science bibliography, https://dblp.org}
}

@article{DBLP:journals/eccc/Reichardt10a,
  author    = {Ben Reichardt},
  title     = {Span programs and quantum query algorithms},
  journal   = {Electronic Colloquium on Computational Complexity {(ECCC)}},
  volume    = {17},
  pages     = {110},
  year      = {2010},
  url       = {http://eccc.hpi-web.de/report/2010/110},
  timestamp = {Tue, 14 Aug 2018 17:08:04 +0200},
  biburl    = {https://dblp.org/rec/bib/journals/eccc/Reichardt10a},
  bibsource = {dblp computer science bibliography, https://dblp.org}
}

@inproceedings{DBLP:conf/soda/Reichardt11a,
  author    = {Ben Reichardt},
  title     = {Reflections for quantum query algorithms},
  booktitle = {Proceedings of the Twenty-Second Annual {ACM-SIAM} Symposium on Discrete
               Algorithms, {SODA} 2011, San Francisco, California, USA, January 23-25,
               2011},
  pages     = {560--569},
  year      = {2011},
  crossref  = {DBLP:conf/soda/2011},
  url       = {https://doi.org/10.1137/1.9781611973082.44},
  doi       = {10.1137/1.9781611973082.44},
  timestamp = {Wed, 24 May 2017 08:31:21 +0200},
  biburl    = {https://dblp.org/rec/bib/conf/soda/Reichardt11a},
  bibsource = {dblp computer science bibliography, https://dblp.org}
}

@article{DBLP:journals/toc/ReichardtS12,
  author    = {Ben Reichardt and
               Robert Spalek},
  title     = {Span-Program-Based Quantum Algorithm for Evaluating Formulas},
  journal   = {Theory of Computing},
  volume    = {8},
  number    = {1},
  pages     = {291--319},
  year      = {2012},
  url       = {https://doi.org/10.4086/toc.2012.v008a013},
  doi       = {10.4086/toc.2012.v008a013},
  timestamp = {Tue, 14 May 2019 19:22:55 +0200},
  biburl    = {https://dblp.org/rec/bib/journals/toc/ReichardtS12},
  bibsource = {dblp computer science bibliography, https://dblp.org}
}

@inproceedings{DBLP:conf/focs/Szegedy04,
  author    = {Mario Szegedy},
  title     = {Quantum Speed-Up of Markov Chain Based Algorithms},
  booktitle = {45th Symposium on Foundations of Computer Science {(FOCS} 2004), 17-19
               October 2004, Rome, Italy, Proceedings},
  pages     = {32--41},
  year      = {2004},
  crossref  = {DBLP:conf/focs/2004},
  url       = {https://doi.org/10.1109/FOCS.2004.53},
  doi       = {10.1109/FOCS.2004.53},
  timestamp = {Wed, 16 Oct 2019 14:14:54 +0200},
  biburl    = {https://dblp.org/rec/bib/conf/focs/Szegedy04},
  bibsource = {dblp computer science bibliography, https://dblp.org}
}

@inproceedings{DBLP:conf/stoc/Tal17,
  author    = {Avishay Tal},
  title     = {Formula lower bounds via the quantum method},
  booktitle = {Proceedings of the 49th Annual {ACM} {SIGACT} Symposium on Theory
               of Computing, {STOC} 2017, Montreal, QC, Canada, June 19-23, 2017},
  pages     = {1256--1268},
  year      = {2017},
  crossref  = {DBLP:conf/stoc/2017},
  url       = {https://doi.org/10.1145/3055399.3055472},
  doi       = {10.1145/3055399.3055472},
  timestamp = {Sun, 02 Jun 2019 21:10:31 +0200},
  biburl    = {https://dblp.org/rec/bib/conf/stoc/Tal17},
  bibsource = {dblp computer science bibliography, https://dblp.org}
}
\appendix
\ifdefined\fullversion

\else %
\section{Span Program Example}
\label{sec:span-program-example}
For the parity function $\parity$, let $\ket{t} = [1,1]^{\top}$ and
\[A = \begin{blockarray}{ccccc}
I_{1,0} & I_{1,1} & I_{2,0} & I_{2,1}\\
\begin{block}{[ccccc]}
1 & 0 & 1 & 0\\
0 & 1 & 0 & 1\\
\end{block}
\end{blockarray}.
\]
Let all omitted $I_{i,b}$ for $b \in \{0,1\}$ implicitly set to $\emptyset$. Observe that $\wsize(P_{\parity}) = 2$. This can be achieved by a string which evaluates to false e.g. $w = 00$ with witness $\ket{y} = [0,1]^{\top}$ and $\wsize(P_{\parity}, w) = \norm{\ket{y}A}^2$ or by a string which evaluates to true e.g. $x = 01$ with witness $\ket{z} = [1,1]^{\top}$ and $\wsize(P_{\parity}, w) = \norm{\ket{z}}^2$.

The canonical span program for $\parity$ is as follows. Let the target vector be $\ket{t} = c[1,1]^{\top}$ where $c = 1/(3\sqrt{W})$. Then for $\{w_1=00, w_2=11\} = F_0$ with vector $\bra{v_{w_i,j}} \in \CC^{[m]}$ corresponding to the length $m$ vector of the $j$\textsuperscript{th} bit of $w_i$, we have
    \[A = \begin{blockarray}{ccccc}
    I_{1,0} & I_{1,1} & I_{2,0} & I_{2,1}\\
    \begin{block}{[cccc]c}
    0 & \bra{v_{w_1,1}} & 0 & \bra{v_{w_1,2}} & w_1 = 00\\
    \bra{v_{w_2,1}} & 0 & \bra{v_{w_2,2}} & 0 & w_2 = 11\\
    \end{block}
    \end{blockarray}
    \]
Further, to each string $x_i$ in $\{x_1 = 10, x_2 = 01\} = F_1$ we assign a vector $\ket{x_i} = [\ket{v_{x_i,1}}, \ket{v_{x_i,2}}]^{\top}$ where $\ket{v_{x_i, j}} \in \CC^{[m]}$ corresponds to the length $m$ vector of the $j$\textsuperscript{th} bit of $x_i$. 

Note that $m = 1$ suffices, since the matrix $A$ where
    \[A = \begin{blockarray}{ccccc}
    I_{1,0} & I_{1,1} & I_{2,0} & I_{2,1}\\
    \begin{block}{[cccc]c}
    0 & 1 & 0 & 1 & w^{(1)} = 00\\
    1 & 0 & 1 & 0 & w^{(2)} = 11\\
    \end{block}
    \end{blockarray}
    \]
and the pair of vectors $\ket{x_1} = \ket{x_2} = [1, 1]^{\top}$ satisfies the condition $\sum_{w_j \neq x_j} \ip{v_{w,j}}{v_{x,j}} = 1$.

Let us turn this canonical span program into its corresponding bipartite graphs. The matrices $B_{G(x)}$ and $B_{G'(w)}$ are then defined as follows for strings $x = 10$ and $w = 00$ which evaluates to true and false respectively on $\oplus$.
\[B_{G(x)} = \begin{blockarray}{cccccc}
\mu_0 & \BAmulticolumn{4}{c}{I}\\
\begin{block}{[c|cccc]c}
1 & 0 & 1 & 0 & 1 & \multirow{2}{*}{$F_0$}\\
1 & 1 & 0 & 1 & 0 & \\\cline{1-5}
0 & 1 & 0 & 0 & 0 & \multirow{4}{*}{$I'$}\\
0 & 0 & 0 & 0 & 0 &\\
0 & 0 & 0 & 0 & 0 &\\
0 & 0 & 0 & 0 & 1 &\\
\end{block}
\end{blockarray}\qquad
B_{G'(w)}^{*} = 
\begin{blockarray}{ccccccc}
\BAmulticolumn{2}{c}{F_{0}} & \BAmulticolumn{4}{c}{I'}\\
\begin{block}{[cc|cccc]c}
0 & 1 & 0 & 0 & 0 & 0 & \multirow{4}{*}{I}\\
1 & 0 & 0 & 1 & 0 & 0 & \\
0 & 1 & 0 & 0 & 0 & 0 & \\
1 & 0 & 0 & 0 & 0 & 1 & \\
\end{block}
\end{blockarray}.\]
These corresponds to the bipartite graphs shown Figure~\ref{fig:biadjacencymatrix-true} and Figure~\ref{fig:biadjacencymatrix-false}. 
\fig{biadjacencymatrix-true}{1}{The bipartite graphs corresponding to the true biadjacency matrix. All unmarked edges have weight one. The matix vector product $B_{G(x)} \ket{\phi}$ is equivalent to assigning weights to the open dots in the picture. In order to find an eigenvalue zero eigenvector $\ket{\phi}$ of $B_{G(x)}$, the assignment of weights must ensure the neighbours of every solid dot sums to zero. Observe that $B_{G(x)}$, with $\oplus(x) = 1$, has an eigenvalue zero eigenvector while $B_{G(w)}$, with $\oplus(w) = 0$, does not. }
\fig{biadjacencymatrix-false}{1}{The bipartite graphs corresponding to the false biadjacency matrix. As opposed to the above, $B_{G'(x)}$, with $\oplus(x) = 1$, does not have an eigenvalue zero eigenvector while $B_{G'(w)}$, with $\oplus(w) = 0$, does. }

\section{\texorpdfstring{$\adv$}{Adv+-} is a Lower Bound for the Square Root of Formula Size}
\label{sec:adv-l-lb}

\section{Proofs for the Quantum Query Complexity Lower Bound}
\label{sec:q2-lb}

	\begin{proof}[Proof of Claim~\ref{clm:clm2}]
        
	\end{proof}

	\begin{proof}[Proof of Claim~\ref{clm:clm3}]
        
	\end{proof}
    
\section{Proof of the Decomposition Theorem}
\label{sec:decomp-proof}
    \begin{proof}[Proof of Theorem~\ref{thm:decomposition}]
        
    \end{proof}
\fi

\section{Lagrangian Duality}
\label{sec:lagrangian-duality}
Consider the following objective function: 
\begin{align*}
    \mbox{Minimize} \quad & f_0(\ket{x})\\
    \mbox{Subject to} \quad & f_i(\ket{x}) \leq 0 \mbox{ for }i \in [m]\\
    \quad & h_j(\ket{x}) = 0 \mbox{ for }j \in [p]\\
\end{align*}
for $x$ in some domain $\mathcal{D} \subset \RR^{n}$. Then the associated \emph{Lagrangian} $L: \RR^{n} \times \RR^{m} \times \RR^{j}$ is the function
\[L(\ket{x}, \ket{\lambda}, \ket{\nu}) = f_0(x) + \sum_{i = 1}^{m}\lambda_if_i(\ket{x}) + \sum_{j = 1}^{p}\nu_jh_j(\ket{x}).\]
Further, the \emph{Lagrangian dual function} is
\[g(\ket{\lambda}, \ket{\nu}) = \inf_{\ket{x} \in \mathcal{D}} F(\ket{x}, \ket{\lambda}, \ket{\nu}).\]

Observe that $g(\ket{\lambda}, \ket{\nu})$ is a lower bound for the optimal value $p^*$ of the objective function above when $\ket{\lambda} \geq 0$. Let $\ket{x}$ be any feasible solution, then $f_i(\ket{x}) \leq 0$ and $h_j(\ket{x}) = 0$. Thus
\[p^* \geq f_0(\ket{x}) \geq f_0(\ket{x}) + \sum_{i = 1}^{m}\lambda_if_i(\ket{x}) + \sum_{j = 1}^{p}\nu_jh_j(\ket{x}) = L(\ket{x}, \ket{\lambda}, \ket{\nu}) \geq \inf_{\ket{x} \in \mathcal{D}} F(\ket{x}, \ket{\lambda}, \ket{\nu}) = g(\ket{\lambda}, \ket{\nu}).\]
The best lower bound is obtained by maximizing over the dual function. In our case
\begin{align*}
    \mbox{Minimize} \qquad & t\\
    \mbox{Subject to} \qquad & \sum_{w_j \neq x_j}\bra{w,j}X\ket{x,j} = 1 \mbox{ for all } (x,w) \in \Delta, w_j \neq x_j\\
    & \sum_{j \in [n]} \bra{s,j}X\ket{s,j} \leq t \mbox{ for all } s \in \{0,1\}^n
\end{align*}
where $X \succeq 0$. The Lagrangian has one variable for every constraint. Let $Y \succeq 0$ be the variable for the constraint $X \succeq 0$, $\alpha_{w,x}$ and $\beta_{s} \geq 0$ be the variables for the equality and inequality constraints respectively. Then 
\[L = L(Y, \ket{\alpha}, \ket{\beta}; X, t) = t - \ip{X}{Y} + \sum_{(x,w) \in \Delta, w_j \neq x_j}\alpha_{x,w}\left(1 - \bra{w,j}X\ket{x,j}\right) - \sum_{\ket{s} \in \{0,1\}^n}\beta_s\left(t - \bra{s,j}X\ket{s,j}\right)\]
with dual function 
\[g(Y, \ket{\alpha}, \ket{\beta}) = \inf_{X, t} L(Y, \ket{\alpha}, \ket{\beta}; X, t).\]
Since the infimum is taken over all $X \succeq 0$ and values $t$, there exists choices of $Y$, $\ket{\alpha}$ and $\ket{\beta}$ such that $\inf_{X\succeq 0, t} L(Y, \ket{\alpha}, \ket{\beta}) = -\infty$. To remove these values from consideration, we find the implicit constraints.   

Fix $Y, \ket{\alpha}, \ket{\beta}, X$ and rewrite $L$ interms of $t$. 
\[L = t\left(1 - \sum_{\ket{s} \in \{0,1\}^n}\beta_s\right) - \ip{X}{Y} + \sum_{(x,w) \in \Delta, w_j \neq x_j}\alpha_{x,w}\left(1 - \bra{w,j}X\ket{x,j}\right) + \sum_{\ket{s} \in \{0,1\}^n}\beta_s\bra{s,j}X\ket{s,j}.\]
Since the last three terms are fixed, by taking $t \rightarrow -\infty$, $L\rightarrow -\infty$. Thus we require $1 = \sum_{\ket{s} \in \{0,1\}^n}\beta_s$. Similarly, fix $Y, \ket{\alpha}, \ket{\beta}, t$ and rewrite $L$ interms of $X$. 
\[L = \ip{X}{Z - Y} + t + \sum_{(x,w) \in \Delta}\alpha_{x,w} - t\sum_{\ket{s} \in \{0,1\}^n}\beta_s\]
where $Z = \sum_{\ket{s} \in \{0,1\}^n}\beta_s\ketbra{s}{s} - \sum_{(x,w) \in \Delta, w_j \neq x_j}\alpha_{x,w}\ketbra{w}{x}$. Again, if $\ip{X}{Z-Y} \neq 0$, then $X$ can be chosen such that $L \rightarrow -\infty$. Thus $Z = Y$. Since $Y \succeq 0$, we can simplify this to $Z \succeq 0$. 

\section{Spectral Analysis of Adjacency and Biadjacency Matrices}
\label{sec:small-eval-evectors}
Let $G$ be a weighted bipartite graph with biadjacency matrix $B_G \in \mathcal{L}(\CC^{U}, \CC^{T})$ and weighted adjacency matrix $A_{G} \in \mathcal{L}(\CC^{T \cup U})$. Further let $\ket{t} \in \CC^{T}$ and $G'$ be the graph with biadjacency matrix 
\[B_{G'} = 
    \begin{blockarray}{ccc}
        \mu_0 & U\\
        \begin{block}{[cc]c}
        \ket{t} & B_G & T\\
        \end{block}
    \end{blockarray}\]
and adjacency graph $A_{G'}$. To understand the eigenvectors of the modified adjacency matrix $A_{G'}$, we need the following theorem about eigenvectors of a PSD matrix. 

\begin{theorem}{\textup{(Spectral Bounds for PSD Matrices, Theorem 8.9 \cite{DBLP:journals/toc/ReichardtS12}.)}}
\label{thm:spectral-bound-psd-matrices}
Let $X \in \ltrans(V)$ with $X \succeq 0$, $\ket{t} \in V$, and $X' = X + \ketbra{t}{t}$. Further, let $\{\ket{\beta}\}$ be the eigenvectors of $X'$ with corresponding eigenvalue $\theta(\beta) \geq 0$. If there exists a vector $\ket{\psi}$ in the null-space of $X$ with $|\ip{t}{\psi}|^2 \geq \delta\norm{\ket{\psi}}^2$, then for any $\gamma \geq 0$
\[\sum_{\beta: |\theta(\beta)| \leq \gamma, \ip{t}{\beta} \neq 0} \frac{|\ip{t}{\beta}|^2}{\lambda(\beta)} \leq \frac{4\gamma}{\delta}.\]
\end{theorem}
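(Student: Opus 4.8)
The plan is to work in an orthonormal eigenbasis of $X'$ and to extract everything from a single algebraic identity satisfied by the given null vector $\ket{\psi}$. Because $X \succeq 0$ and $\ketbra{t}{t} \succeq 0$, the matrix $X' = X + \ketbra{t}{t}$ is Hermitian and positive semidefinite, so it admits an orthonormal eigenbasis $\{\ket{\beta}\}$ with eigenvalues $\theta(\beta) \ge 0$ (I read the $\lambda(\beta)$ in the statement as $\theta(\beta)$). A preliminary observation handles the kernel: if $\theta(\beta) = 0$ then $0 = \bra{\beta}X'\ket{\beta} = \bra{\beta}X\ket{\beta} + |\ip{t}{\beta}|^2$, and since both terms are nonnegative we get $\ip{t}{\beta} = 0$. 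Hence the restriction $\ip{t}{\beta} \neq 0$ under the sum automatically forces $\theta(\beta) > 0$, so every quotient $|\ip{t}{\beta}|^2 / \theta(\beta)$ appearing in the sum is well defined, and enlarging the sum to run over all $\beta$ with $0 < \theta(\beta) \le \gamma$ only adds zero terms.

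The key step is the identity $X'\ket{\psi} = \ip{t}{\psi}\,\ket{t}$, valid because $X\ket{\psi} = 0$ while $\ketbra{t}{t}\ket{\psi} = \ip{t}{\psi}\ket{t}$. Applying $\bra{\beta}$ to both sides and using $\bra{\beta}X' = \theta(\beta)\bra{\beta}$ gives $\theta(\beta)\ip{\beta}{\psi} = \ip{t}{\psi}\ip{\beta}{t}$ for each $\beta$, hence $\ip{\beta}{\psi} = \ip{t}{\psi}\,\ip{\beta}{t}/\theta(\beta)$ whenever $\theta(\beta) > 0$. Now apply Parseval's identity to $\ket{\psi}$ in the eigenbasis, discard all terms except those with $0 < \theta(\beta) \le \gamma$, and use $\theta(\beta) \le \gamma$ on one of the two factors of $1/\theta(\beta)$:
\[ \norm{\ket{\psi}}^2 = \sum_\beta |\ip{\beta}{\psi}|^2 \ \ge\ \sum_{\beta:\, 0 < \theta(\beta) \le \gamma} \frac{|\ip{t}{\psi}|^2\,|\ip{\beta}{t}|^2}{\theta(\beta)^2} \ \ge\ \frac{|\ip{t}{\psi}|^2}{\gamma} \sum_{\beta:\, 0 < \theta(\beta) \le \gamma} \frac{|\ip{\beta}{t}|^2}{\theta(\beta)}. \]
Finally invoke the hypothesis $|\ip{t}{\psi}|^2 \ge \delta\norm{\ket{\psi}}^2$ (we may assume $\ket{\psi} \neq 0$, since otherwise the hypothesis is vacuous and there is nothing to prove), cancel the common factor $\norm{\ket{\psi}}^2$, and rearrange to conclude $\sum_{\beta} |\ip{t}{\beta}|^2/\theta(\beta) \le \gamma/\delta \le 4\gamma/\delta$, which is the claim.

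I do not anticipate a genuine obstacle: this is a short eigendecomposition argument. The only points that need care are (i) invoking Hermiticity of $X'$ to obtain the orthonormal eigenbasis used by Parseval, (ii) the kernel observation above, which ensures no division by zero occurs and that the side condition $\ip{t}{\beta}\ne 0$ is consistent with it, and (iii) the bookkeeping that converts the $1/\theta(\beta)^2$ produced by the identity into the $1/\theta(\beta)$ of the statement — precisely the place where the truncation $\theta(\beta)\le\gamma$ is used. The stated constant $4$ is generous; the argument actually yields the sharper $\gamma/\delta$, which is reassuring rather than troublesome.
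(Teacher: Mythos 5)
Your proof is correct, but there is essentially nothing in the paper to compare it against: the survey imports this statement as Theorem 8.9 of Reichardt--\v{S}palek and gives no proof, only the remark (which you rederive as your ``kernel observation'') that $\theta(\beta)=0$ forces $\ip{t}{\beta}=0$, so the sum is well defined. Your self-contained argument --- diagonalize the Hermitian PSD matrix $X'$, use $X\ket{\psi}=0$ to get the exact identity $X'\ket{\psi}=\ip{t}{\psi}\ket{t}$, hence $\theta(\beta)\ip{\beta}{\psi}=\ip{t}{\psi}\ip{\beta}{t}$ for every basis eigenvector, and combine Parseval with $1/\theta(\beta)^2\ge 1/(\gamma\,\theta(\beta))$ on the truncated range --- is sound, and it in fact yields the sharper bound $\gamma/\delta$; the factor $4$ in the cited statement is slack in the exact-null-vector setting, which is the only way the theorem is invoked in this paper (through Lemma~\ref{lem:spectra-gap} inside the proof of Theorem~\ref{thm:spectral-properties}), so nothing downstream is affected. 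Three cosmetic points, none a gap: the statement's $\lambda(\beta)$ is indeed a typo for $\theta(\beta)$, as you assume; the hypothesis must be read with $\ket{\psi}\neq 0$ and $\delta>0$ (with $\ket{\psi}=0$ the inequality $|\ip{t}{\psi}|^2\ge\delta\norm{\ket{\psi}}^2$ holds trivially while the conclusion can fail, so the intended reading is the one you adopt); and your division by $\gamma$ presumes $\gamma>0$, the case $\gamma=0$ being handled separately by the empty-sum observation you already made.
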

Note that this sum is well defined since $\theta(\beta) \geq 0$ whenever $\ip{t}{b} \neq 0$ then $\bra{\beta}X'\ket{\beta} = \bra{\beta}X\ket{\beta} + \norm{\ip{t}{\beta}}^2 > 0$. 

\begin{theorem}{\textup{(Spectral Properties of Small Eigenvalue Eigenvectors.)}}
\label{thm:spectral-properties}
Let $G$, $B_G$, $A_G$, $G'$, $B_{G'}$, and $A_{G'}$ be as before. Suppose for some $\delta > 0$, $A_G$ has an eigenvalue zero eigenvector such that 
\[|\ip{t}{\psi_{T}}|^2 \geq \delta\norm{\ket{\psi}}^2.\]
Let $\{\ket{\alpha}\}$ be the complete set of orthonormal eigenvectors of $A_{G'}$ with corresponding eigenvalues $\rho(\alpha)$. Further, let $\ket{0}$ be the vector $[0, 1, 0]^{\top} \in \CC^{T \cup \{\mu_0\}\cup U}$. Then for all $\gamma > 0$, we have 
\[\sum_{\alpha: |\rho(\alpha)| \leq \gamma} |\ip{\alpha}{0}|^2 \leq \frac{8\gamma^2}{\delta}.\]
\end{theorem}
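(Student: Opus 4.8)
The plan is to reduce the statement to Theorem~\ref{thm:spectral-bound-psd-matrices}, applied to the positive semidefinite matrix $X := B_G B_G^{*}$. The key algebraic fact is that the $\mu_0$-column of $B_{G'}$ equals $\ket{t}$ while its remaining columns form $B_G$, so that $B_{G'}B_{G'}^{*} = B_G B_G^{*} + \ketbra{t}{t} =: X'$, which is exactly the rank-one perturbation appearing in Theorem~\ref{thm:spectral-bound-psd-matrices}. Since $A_{G'}$ is bipartite (it is block off-diagonal with off-diagonal block $B_{G'}$ in the decomposition $\CC^{T}\oplus\CC^{\{\mu_0\}\cup U}$), a singular value decomposition $B_{G'} = \sum_{i}\sigma_i\ket{u_i}\bra{w_i}$ with $\sigma_i>0$ shows that the nonzero-eigenvalue eigenvectors of $A_{G'}$ are the vectors $\ket{\alpha_i^{\pm}} = \frac{1}{\sqrt{2}}\big(\ket{u_i}\oplus(\pm\ket{w_i})\big)$ with eigenvalues $\pm\sigma_i$, where each $\ket{u_i}$ is a unit eigenvector of $X'$ with eigenvalue $\theta_i := \sigma_i^{2}$, while every eigenvalue-zero eigenvector of $A_{G'}$ lies in $\ker B_{G'}\oplus\ker B_{G'}^{*}$.

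First I would check that the eigenvalue-zero eigenvectors contribute nothing. If $A_{G'}\ket{\alpha}=0$ with $\ket{\alpha}=(\alpha_T;\alpha_{\mu_0};\alpha_U)$, then the top ($T$-) block of this equation reads $\alpha_{\mu_0}\ket{t}+B_G\alpha_U = 0$, so $\alpha_{\mu_0}\neq 0$ would force $\ket{t}\in\mathrm{range}(B_G)$. But the hypothesis supplies an eigenvalue-zero eigenvector $\ket{\psi}=(\psi_T;\psi_U)$ of $A_G$ with $B_G^{*}\psi_T=0$ and $\ip{t}{\psi_T}\neq 0$; hence $\ket{t}$ has a nonzero component in $\ker B_G^{*} = \mathrm{range}(B_G)^{\perp}$, so $\ket{t}\notin\mathrm{range}(B_G)$ and therefore $\alpha_{\mu_0}=\ip{0}{\alpha}=0$.

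Next I would compute the overlaps for the nonzero eigenvectors. Because $\ket{w_i}=\sigma_i^{-1}B_{G'}^{*}\ket{u_i}$ and $B_{G'}$ sends the $\mu_0$-indicator to $\ket{t}$, one gets $\ip{0}{w_i}=\sigma_i^{-1}\ip{t}{u_i}$, hence $\ip{0}{\alpha_i^{\pm}}=\pm\frac{1}{\sqrt{2}\,\sigma_i}\ip{t}{u_i}$ and $|\ip{0}{\alpha_i^{+}}|^{2}+|\ip{0}{\alpha_i^{-}}|^{2}=|\ip{t}{u_i}|^{2}/\theta_i$. Since $|\rho(\alpha_i^{\pm})|\le\gamma$ is equivalent to $\theta_i\le\gamma^{2}$, combining this with the previous paragraph yields
\[ \sum_{\alpha:\,|\rho(\alpha)|\le\gamma}|\ip{0}{\alpha}|^{2} \;=\; \sum_{i:\,\theta_i\le\gamma^{2}}\frac{|\ip{t}{u_i}|^{2}}{\theta_i}. \]

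Finally I would invoke Theorem~\ref{thm:spectral-bound-psd-matrices} with $X=B_G B_G^{*}\succeq 0$, the same $\ket{t}$, $X'=X+\ketbra{t}{t}$, the null-space vector $\ket{\psi_T}$ (which satisfies $X\ket{\psi_T}=0$ and $|\ip{t}{\psi_T}|^{2}\ge\delta\norm{\ket{\psi}}^{2}\ge\delta\norm{\ket{\psi_T}}^{2}$), and threshold parameter $\gamma^{2}$; this bounds the displayed sum by $4\gamma^{2}/\delta\le 8\gamma^{2}/\delta$, which is the claim. The only points needing care are the $\pm$ bookkeeping for the spectrum of $A_{G'}$ and matching our sum to the one in Theorem~\ref{thm:spectral-bound-psd-matrices} (which is restricted to $\ip{t}{\beta}\neq 0$) — but the terms with $\ip{t}{u_i}=0$ vanish on both sides — so I do not anticipate a genuine obstacle beyond assembling these observations.
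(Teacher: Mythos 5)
Your proposal is correct and follows essentially the same route as the paper: both dispose of the eigenvalue-zero eigenvectors of $A_{G'}$ by showing they are unsupported on $\mu_0$ (using $B_G^*\ket{\psi_T}=0$ and $\ip{t}{\psi_T}\neq 0$), convert the remaining sum into $\sum |\ip{t}{\beta}|^2/\theta(\beta)$ over eigenvectors of $B_{G'}B_{G'}^* = B_GB_G^* + \ketbra{t}{t}$ via the bipartite eigenvector/singular-vector correspondence, and then invoke Theorem~\ref{thm:spectral-bound-psd-matrices} with $X = B_GB_G^*$ and threshold $\gamma^2$. Your careful normalization of the $\pm\sigma_i$ eigenvector pairs in fact yields the slightly sharper bound $4\gamma^2/\delta$, while the paper's bookkeeping carries an extra factor of $2$ and lands on the stated $8\gamma^2/\delta$.
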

\begin{proof}
    The structure of the proof is as follows. We begin by reviewing relationships between the eigenvectors and eigenvalues of the adjacency graph $A_G$ and the biadjacency graph $B_G$. Given an eigenvector of $A_G$, we will relate this to the eigenvectors of the modified adjacency matrix $A_{G'}$ and modified biadjacency graph $B_{G'}$. Central to this analysis will be the study of PSD matrix $B_{G'}B_{G'}^{*}$. 

    Let $G$ be a graph and $A_G$ and $B_G$ be its adjacency and biadjacency matrices as described in the theorem statement. Let $\ket{\psi} = (\ket{\psi_{T}}, \ket{\psi_U}) \in \CC^{T\cup U}$ be an eigenvector of $A_G$ with associated eigenvalue $\rho > 0$ i.e.
    \[\begin{bmatrix}
    0 & B_G\\
    B_G^{*} & 0
    \end{bmatrix} \cdot 
    \begin{bmatrix}
    \ket{\psi_T}\\
    \ket{\psi_U}
    \end{bmatrix} = 
    \rho\begin{bmatrix}
    \ket{\psi_T}\\
    \ket{\psi_U}
    \end{bmatrix}.\]
    Then we obtain the identities $B_G\ket{\psi_U} = \rho\ket{\psi_T}$ and $B_G^*\ket{\psi_T} = \rho\ket{\psi_U}$. By negating these identities, we observe that $(\ket{\psi_T}, -\ket{\psi_U})$ is also an eigenvector of $A_G$ with associated eigenvalue $-\rho$. Observe further that $\ket{\psi_T}$, defined to be $\frac{1}{\rho}B_G\ket{\psi_U}$, is an eigenvector of $B_GB_{G}^*$ with eigenvalue $\rho^2$. Similarly $\ket{\psi_U}$, defined to be $\frac{1}{\rho}B_G^*\ket{\psi_T}$, is an eigenvector of $B_G^*B_{G}$ with eigenvalue $\rho^2$. If, instead, we begin with an eigenvector $\ket{\phi} \in \CC^{T}$ of $B_GB_G^{*}$ with eigenvalue $\lambda$, then $B_G^*\ket{\phi} \in \CC^U$ is an eigenvalue of $B_G^*B_G$ with eigenvalue $\lambda$ then 
    \[B_G^*B_G\left(B_G^*\ket{\phi}\right) = \lambda B_G^*\ket{\phi}.\]
    The pair $(\ket{\phi}, \frac{\pm 1}{\sqrt{\lambda}}B_G^*\ket{\phi})$ are eigenvectors of $A_G$ with eigenvalues $\pm\sqrt{\lambda}$ since, in the positive case for example,
    \[B_G\left(\frac{1}{\sqrt{\lambda}}B_G^*\ket{\phi}\right) = \sqrt{\lambda}\ket{\phi}\mbox{ and }B_G^*\ket{\phi} = \sqrt{\lambda}\left(\frac{1}{\sqrt{\lambda}}B_G^*\ket{\phi}\right)\]
    since $\ket{\phi}$ is an eigenvector of $B_GB_G^*$ with eigenvector $\lambda$ for the former.
    
    Let $(\ket{\psi_T}, 0)$, an eigenvector of $A_G$, be the input to our theorem. Note that $|\ip{t}{\psi_T}|^2 \geq \delta\norm{\ket{\psi}}^2$ and $B_G^*\ket{\psi_T} = 0$. We would like to bound the magnitude of
    \[\sum_{\alpha: |\rho(\alpha)|\leq \gamma} |\ip{\alpha}{0}|^2\]
    where $\{\ket{\alpha}\}$ is a complete set of orthonormal eigenvectors of $A_{G'}$ with associated eigenvalue $\rho(\alpha)$ and $\ket{0}$ is the indicator vector for entry corresponding to $\mu_0$. First we show that the eigenvalue zero eigenvectors of $A_{G'}$ are unsupported on $\mu_0$ so will not contribute to this sum. We bound $|\ip{\alpha}{0}|^2$ for eigenvectors $\ket{\alpha}$ with $0 < \rho(\alpha) \leq \gamma$ using Theorem~\ref{thm:spectral-bound-psd-matrices} by considering the eigenvectors of $B_GB_G^*$.
    
    Let $\ket{\zeta} = (\ket{\zeta_T}, \zeta_{\mu_0}, \ket{\zeta_{U}})$ be an eigenvalue zero eigenvector of $A_{G'}$. Then modified biadjacency matrix $B_{G'}$ must satisfy
    \[B_{G'}(\zeta_{\mu_0}, \ket{\zeta_{U}}) = \begin{bmatrix}
    \ket{t} & B_G
    \end{bmatrix}\cdot\begin{bmatrix}
    \zeta_{\mu_0}\\
    \ket{\zeta_{U}}
    \end{bmatrix} = \zeta_{\mu_0}\ket{t} + B_G\ket{\zeta_{U}} = 0.\]
    By multiplying both sides by $\bra{\psi_T}$, we have
    \[\zeta_{\mu_0}\ip{\psi_T}{t} + \bra{\psi_T}B_G\ket{\zeta_T} = 0,\]
    since $B_G^*\ket{\psi_T} = 0$ and $\ip{t}{\psi_T} > 0$, $\zeta_{\mu_0} = 0$. Thus eigenvalue zero eigenvectors of $A_{G'}$ are orthogonal to $\ket{0}$. 
    
    It remains to consider those eigenvectors $\ket{\alpha} = (\ket{\alpha_T}, \alpha_{\mu_0}, \ket{\alpha_U})$ of $A_{G'}$ with $\rho(\alpha) > 0$. First, using the definition of eigenvectors and the property that $A_{G'}\ket{0} = \ket{t}$, we have
    \[\rho(\alpha)\ip{\alpha}{0} = \bra{\alpha}A_{G'}\ket{0} = \ip{\alpha_T}{t}. \]
    Substituting this into our desired sum, we obtain
    \[\sum_{\alpha: 0 < |\rho(\alpha)|\leq \gamma} |\ip{\alpha}{0}|^2 = \sum_{\alpha: 0 < |\rho(\alpha)|\leq \gamma} \frac{|\ip{\alpha_{T}}{t}|^2}{\rho(\alpha)^2}.\]
    Let $B_{G'}B_{G'}^*$ be a matrix with eigenvectors $\{\ket{\beta}\}$ and corresponding eigenvalues $\theta(\beta)$. By the relationship between the eigenvalues and eigenvectors of $A_{G'}$ and $B_{G'}$ considered above, each $\ket{\beta}$ with $\theta(\beta)$ corresponds to two eigenvectors of $A_{G'}$ with eigenvalue $\left(\ket{\beta}, \frac{\pm 1}{\sqrt{\lambda(\beta})}B_{G'}^*\ket{\beta}\right)$ with $\pm\sqrt{\theta(\beta)}$. Thus
    \[\sum_{\alpha: 0 < |\rho(\alpha)|\leq \gamma} \frac{|\ip{\alpha_T}{t}|^2}{\rho(\alpha)^2} = 2\sum_{\beta: \theta(\beta) \leq \gamma^2, \theta(\beta) \neq 0} \frac{|\ip{\beta}{t}|^2}{\theta(\beta)}.\]
    Using Theorem~\ref{thm:spectral-bound-psd-matrices} with $X = B_{G'}B_{G'}^* = B_GB_G^* - \ketbra{t}{t}$ and $\ket{\psi_T}$ gives us the bound
    \[\sum_{\alpha: |\rho(\alpha)|\leq \gamma} |\ip{\alpha}{0}|^2 = 2\sum_{\beta: \theta(\beta) \leq \gamma^2, \theta(\beta) \neq 0} \frac{|\ip{\beta}{t}|^2}{\theta(\beta)} \leq \frac{8\gamma^2}{\delta}\]
    as required.
\end{proof}

Applying Theorem~\ref{thm:spectral-properties} with $\delta = 1/(9W(W+1))$ and $\gamma = c/W$ to Lemma~\ref{lem:spectra-gap} in the case where $f(s) = 0$, we obtain the following Lemma \ref{lem:effective-spectral-gap}. 
\end{document}